\theoremstyle{plain}
\newtheorem{proposition}{Proposition}
\newtheorem{lemma}{Lemma}
\newtheorem{theorem}{Theorem}
\newtheorem{remark}{Remark}
\newtheorem*{notation*}{Notation}
\def\bmd{{\bm d}}
\def\bmg{{\bm g}}
\def\bmh{{\bm h}}
\def\bmu{{\bm u}}
\def\bmv{{\bm v}}
\def\bmJ{{\bm J}}
\def\bmzero{{\bm 0}}
\def\bmA{{\bm A}}
\def\bmF{{\bm F}}
\def\bmK{{\bm K}}
\def\bmQ{{\bm Q}}
\def\bmS{{\bm S}}
\def\bmT{{\bm T}}
\def\bmX{{\bm X}}
\def\bmgamma{{\bm \gamma}}
\def\bmeta{{\bm \eta}}
\def\bmxi{{\bm \xi}}
\def\bmchi{{\bm \chi}}
\def\bmsigma{{\bm \sigma}}
\def\bmvarsigma{{\bm \varsigma}}
\newtheorem*{theorem*}{Theorem}
\newtheorem*{lemma*}{Lemma}
\newcounter{mnotecount}
\newcommand{\mnotex}[1]
{\protect{\stepcounter{mnotecount}}$^{\mbox{\footnotesize $\bullet$\themnotecount}}$ 
\marginpar{
\raggedright\tiny\em
$\!\!\!\!\!\!\,\bullet$\themnotecount: #1} }
\begin{document}

\title{\textbf{A perturbative approach to the construction of initial data on compact manifolds}}

\author[,1]{J.A. Valiente Kroon \footnote{E-mail address:{\tt j.a.valiente-kroon@qmul.ac.uk}}}
\author[,1]{J.L. Williams \footnote{E-mail address:{\tt j.l.williams@qmul.ac.uk}}}
\affil[1]{School of Mathematical Sciences, Queen Mary, University of London,
Mile End Road, London E1 4NS, United Kingdom.}

\maketitle

\begin{abstract}
We discuss the implementation, on compact manifolds, of the perturbative method of 
Friedrich-Butscher for the construction of solutions to the
vacuum Einstein constraint equations. This method is of a perturbative
nature and exploits the properties of the extended constraint
equations ---a larger system of equations whose solutions imply a
solution to the Einstein constraints. The method is applied to the
construction of nonlinear perturbations of constant mean curvature
initial data of constant negative sectional curvature. We prove
the existence of a neighbourhood of solutions to the constraint
equations around such initial data, with particular components of the
extrinsic curvature and electric/magnetic parts of the spacetime Weyl
curvature prescribed as free data. The space of such free data is
parametrised explicitly.
\end{abstract}

\section{Introduction}

The problem of constructing initial data for the Cauchy problem in General
Relativity, with origins in the work
of Lichnerowicz, has proven to be a rich and interesting
 problem both from the mathematical and
the physical points of view. Recall that an initial data set for the 
Cauchy problem in General Relativity
 consists of a triple $(\mathcal{S},\bmh,\bmK)$, with $\mathcal{S}$
 a 3-dimensional smooth orientable manifold (the \textit{initial hypersurface}), $\bmh$ a
Riemannian metric on $\mathcal{S}$, and $\bmK$ (the \textit{extrinsic curvature}) a
symmetric 2-tensor over $\mathcal{S}$, satisfying the
\emph{Einstein constraint equations}
\begin{subequations}
\begin{eqnarray}
&& r[\bmh]+K^2-K_{ij}K^{ij}=2\lambda,\label{Eq:HamiltonianConstraint}\\
&& D^i K_{ij}-D_jK=0.\label{Eq:MomentumConstraint}
\end{eqnarray}
\end{subequations}
Here, $r[\bmh]$ denotes the Ricci scalar curvature of $\bmh$ and $K\equiv h^{ij}K_{ij}$, the 
\textit{mean extrinsic curvature}. 
Given a solution to the Einstein constraints, the foundational result of Choquet-Bruhat (see \cite{Cho08})
guarantees the existence of a Cauchy development, $(\mathcal{M},\bmg)$, of 
$(\mathcal{S},\bmh,\bmK)$ ---i.e. a solution $(\mathcal{M},\bmg)$ to the Einstein 
field equations with $\bmh$ and $\bmK$ equal to the first and second 
fundamental forms induced by 
$\mathcal{S}\hookrightarrow\mathcal{M}$.  The \textit{Hamiltonian} and \textit{momentum}
constraints \eqref{Eq:HamiltonianConstraint}--\eqref{Eq:MomentumConstraint} comprise a 
highly-coupled system of partial differential
equations, and their analysis therefore presents a significant challenge. The challenge 
is, however, twofold: in addition to the mathematical 
difficulty of analysing such a system of equations, there is on the other hand the difficulty 
of ensuring that the solutions, however obtained, are \emph{physically meaningful}.
 The latter problem is increasingly pertinent as
we move into the age of gravitational wave astronomy.
\medskip

To date, the most popular solution methods have been the so-called \emph{conformal
method} of Lichnerowicz and Choquet-Bruhat (see
e.g. \cite{Cho08}), and the related \emph{conformal thin sandwich}
method. Additionally, there are various techniques based on ``gluing"
constructions, for example.
For an overview of these methods, we refer the reader to
\cite{BarIse04,Cho08,GalMiaSch15,Ren08}. These techniques share in common the fact that they
rely on reformulating the constraint equations (which are
underdetermined elliptic) as a system of
elliptic PDEs ---requiring, in particular, the
appropriate choice of \textit{freely prescribed} and
\textit{determined fields}--- to which the tools of the theory
elliptic PDEs may then be applied.
One of the features of the conformal method, in particular, is that the free data
are \textit{York-scaled}, so that one needs to solve the full system of 
(conformally formulated) constraint equations, solving in particular
for the conformal factor, before one can obtain the corresponding 
physically meaningful counterparts of the
free data via conformal rescaling. Recent work aiming at making the conformal
method more physically relevant can be found in
e.g. \cite{Max11,Max14}. 

\medskip
The purpose of the present article is to explore an alternative
\emph{perturbative} approach (to be called the
\emph{Friedrich-Butscher} method), first considered in
\cite{But06,But07} and implemented there to prove the existence of
non-linear perturbative solutions of the constraint equations around
flat initial data. The method was adapted in \cite{Delay09} to prove,
in particular, the existence of constant scalar curvature manifolds as
perturbations of hyperbolic space, and to hence construct
hyperboloidal (umbilical) initial data sets that can be thought of as
perturbations of the standard hyperboloid of Minkoswki space. Here we
will be interested in applications to closed (i.e. compact, without
boundary) initial hypersurfaces $\mathcal{S}$ ---i.e. the construction
of initial data for ``cosmological spacetimes". In this approach, the
central object of study is the system of so-called
\textit{extended constraint equations}. While the extended constraint equations 
are entirely equivalent to the Einstein constraint equations ---see
Section \ref{Sec:ECEs}--- their additional structure naturally lends
itself to a choice of freely prescribed data and determined fields
that differs from that of the conformal method. In particular, in this method
 certain components of the Weyl curvature (restricted to the initial
 hypersurface 
  $\mathcal{S}$) of the development $(\mathcal{M},\bmg)$
have the natural interpretation of being freely prescribed data. Note that 
since the method is not based on a conformal reformulation of the constraints,
the free data are physical in the sense of determining, \emph{a priori},
physically relevant properties of the initial data set.
This method, therefore, offers a new perspective on the classical
problem of identifying the \emph{gravitational degrees of freedom} of solutions
 to the Einstein field equations ---the
free data can be thought as parametrising the space of solutions of the
constraints in a neighbourhood of the given background initial data set. 
Although local, in the sense that the free data is given with reference
to a fixed background solution, this is perhaps a natural approach 
within the framework of the Cauchy problem, in particular in problems relating
to Cauchy stability.
\medskip 

The extended constraint equations can also be seen as
a particular case of the \textit{conformal constraint equations} 
 of Friedrich (see \cite{Fri83}), corresponding to
a trivial conformal factor. The conformal constraint equations offer a promising alternative
for the construction (on non-compact manifolds) of initial data with 
\emph{controlled asymptotics}. A detailed understanding 
of the extended constraints is a necessary first step towards the study 
of the conformal constraint equations. 
\medskip 

In restricting to the case of closed initial hypersurfaces,
$\mathcal{S}$, we hope to bring to the foreground the more geometric
aspects of the method, emphasising the key structural features of the
extended constraints that enable such an approach. In the first half
of the article ---Sections \ref{Sec:ECEs} and
\ref{Sec:ButschersMethod}--- we discuss in fairly general terms the
main aspects of the method, identifying structural features of the
extended constraint equations, in addition to the
potential restrictions imposed on the background initial data. In
particular, we identify certain obstructions to the implementation of the
method, at least in its present form ---see Section
\ref{Section:ObstructionsToSolutions}. As proof of concept, the method
is then implemented for a class of background initial data which we
refer to as \textit{conformally rigid hyperbolic} initial data. Here,
the property of \textit{conformal rigidity} is, roughly speaking, the
requirement that there exist no perturbations of the metric that
preserve \emph{conformal flatness} to first order (except, of course,
the pure-gauge perturbations) ---in the case considered here, this is
equivalent to the requirement that the metric admit no tracefree
\textit{Codazzi tensors}, see Section
\ref{Section:ObstructionsToSolutions} for more details.  Such a
background solution may be thought of as constant extrinsic mean
curvature (\textit{CMC}) initial data for a spatially compact analogue
of the $k=-1$ \textit{Friedmann–-Lema\^{i}tre-–Robertson-–Walker}
spacetime. We will see in Section \ref{Section:ParametrisingFreeData}
that this class of background initial data, being conformally flat,
has the additional feature that it allows for an explicit construction
and parametrisation of the free data.

\medskip  
So far, it is unclear whether the obstructions to the method associated to the
existence of globally defined conformal Killing vectors and
Codazzi tensors are
an unavoidable deficiency of the method, or whether they can be
overcome with some appropriate modifications. An analogy can be drawn
here with the conformal method, in which the existence of a
non-trivial conformal Killing vector for the seed metric is an
obstruction to its
implementation ---see, for example, \cite{BarIse04}. Similar
obstructions also arise in the gluing methods.  In the case of the
conformal method, there have been recent attempts to remove the
assumption of the non-existence of conformal Killing fields ---see,
for example \cite{HolMaxMaz17}.  It is plausible that the obstructions
in the Friedrich--Butscher method, too, are not essential.
\medskip

The main result of this
article can be summarised as follows:

\begin{theorem*}
Let $(\mathcal{S},\mathring{\bmh},\mathring{\bmK})$ be a
conformally rigid hyperbolic initial data set on a compact manifold $\mathcal{S}$. Then for each pair of
sufficiently small tensor fields $T_{ij},\bar{T}_{ij}$ over $\mathcal{S}$, transverse-tracefree
with respect to $\mathring{\bmh}$, and each  sufficiently small scalar field $\phi$ over
$\mathcal{S}$, 
there exists a solution of the Einstein constraint equations $(\mathcal{S},\bmh,\bmK)$  with
$tr_{\mathring{\bmh}}(\bmK-\mathring{\bmK})=\phi$ and for which the electric and
magnetic parts of the Weyl curvature (restricted to
$\mathcal{S}$) of the
resulting spacetime development take the form 
\begin{align*}
& S_{ij}=\mathring{L}(\bmX)_{ij}+T_{ij}-\tfrac{1}{3}\text{tr}_{\bmh}(\mathring{L}(\bmX)+\bmT)~h_{ij},\\
& \bar{S}_{ij}=\mathring{L}(\bar{\bmX})_{ij}+\bar{T}_{ij}-\tfrac{1}{3}\text{tr}_{\bmh}(\mathring{L}(\bar{\bmX})+\bar{\bmT})~h_{ij},
\end{align*}
for some covectors $\bmX,~\bar{\bmX}$ over $\mathcal{S}$, where $\mathring{L}$ denotes the
\emph{conformal Killing operator} with respect to $\mathring{\bmh}$.
\end{theorem*}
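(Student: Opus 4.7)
The plan is to cast the extended constraint equations (ECEs), introduced in Section \ref{Sec:ECEs}, as a nonlinear map $F$ between appropriate Banach spaces (I would work with $H^s$ Sobolev spaces, or $C^{k,\alpha}$ Hölder spaces, of tensor fields over the compact manifold $\mathcal{S}$), decomposed so that the free data $(T_{ij},\bar{T}_{ij},\phi)$ appear as parameters and the remaining unknowns of the system—the metric perturbation $\bmh-\mathring{\bmh}$, the tracefree and trace parts of $\bmK-\mathring{\bmK}$, the covectors $\bmX,\bar{\bmX}$, and the auxiliary connection/curvature fields of the extended system—play the role of \emph{determined fields}. The existence claim then follows from an implicit function theorem argument, once the linearisation $DF$ with respect to the determined fields, evaluated at the background $(\mathring{\bmh},\mathring{\bmK},0,0,\ldots)$, is shown to be a Banach space isomorphism onto the appropriate target.

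The first concrete step is to use the York-type decomposition on $(\mathcal{S},\mathring{\bmh})$ to split any symmetric tracefree $2$-tensor into $\mathring{L}(\bmX)+T$ with $T$ transverse-tracefree, which is precisely the structural feature encoded in the statement; this is what allows one to regard the TT parts $T_{ij},\bar{T}_{ij}$ as free data and $\bmX,\bar{\bmX}$ as determined. A similar splitting handles the extrinsic curvature, with the trace $\phi$ playing the analogue of $T$. The next step is to linearise the ECEs about the background, as in the general setup of Section \ref{Sec:ButschersMethod}, impose a gauge condition on the metric perturbation (for example a transverse-traceless or harmonic gauge) to render the principal part elliptic, and simplify using the fact that $\mathring{\bmh}$ is conformally flat with constant sectional curvature so that the background Weyl components vanish and many coupling terms drop out of $DF$.

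The main obstacle is proving that this linearised operator is an isomorphism. On a compact manifold there are two generic sources of failure flagged already in Section \ref{Section:ObstructionsToSolutions}: nontrivial conformal Killing vectors of $\mathring{\bmh}$ produce a kernel of the operator $\mathring{L}$, and nontrivial tracefree Codazzi tensors produce a cokernel that obstructs solvability of the linearised momentum-type equation for $\bmX$ (and similarly for $\bar{\bmX}$). The \emph{conformal rigidity} hypothesis is tailored precisely to eliminate the Codazzi obstruction, while the negative sectional curvature together with compactness rules out conformal Killing vectors via a Bochner-type argument, so that on the background under consideration $DF$ has trivial kernel and cokernel and, being a Fredholm operator of index zero, is an isomorphism.

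With the linear isomorphism in hand, the implicit function theorem in Banach spaces produces, for all sufficiently small $(T_{ij},\bar{T}_{ij},\phi)$ in the chosen function space, a unique nearby solution of the ECEs whose electric and magnetic Weyl components have precisely the prescribed form. Because the ECEs are equivalent to the Einstein constraint equations, this yields the advertised family of solutions $(\mathcal{S},\bmh,\bmK)$; elliptic regularity applied to the system in the chosen gauge then upgrades the regularity of the solutions as required.
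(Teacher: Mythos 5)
Your outline reproduces Step (i) of the Friedrich--Butscher method but omits Step (ii) entirely, and this is a genuine gap rather than a presentational one. The extended constraint equations are not a determined elliptic system to which the implicit function theorem can be applied directly: the Codazzi-type equation $\mathcal{D}_{\bmh}(\bmK)-\star\bar{\bmS}=0$ takes values in the space of Jacobi tensors and the operator $\mathcal{D}_{\bmh}$ restricted to tracefree symmetric tensors is \emph{overdetermined} elliptic, while the Gauss-type equation involves the Ricci operator, which is degenerate because of diffeomorphism invariance. One must therefore pass to an \emph{auxiliary} system: compose the Codazzi equation with the adjoint $\mathring{\mathcal{D}}^*$ to obtain a square elliptic equation, and replace $\text{Ric}$ by the reduced (DeTurck) operator $\text{Ric}^Q$. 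The implicit function theorem then yields only solutions of $\mathring{\mathcal{D}}^*(\bmJ)=0$ and $V_{ij}=(\mathcal{L}_Q\bmh)_{ij}$, i.e.\ \emph{candidate} solutions, and one must prove separately that $J_{ijk}=0$ (not merely that $\bmJ\in\ker\mathring{\mathcal{D}}^*$) and that $Q_i=0$. Your suggestion to ``impose a harmonic or TT gauge'' does not resolve this: imposing a gauge as an extra equation makes the system still more overdetermined, while modifying the operator requires an a posteriori gauge-propagation argument that your proposal never supplies.

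In the paper this sufficiency step occupies Section \ref{Section:Sufficiency} and is nontrivial: the integrability conditions \eqref{IntegrabilityConditionExtrCurvOLD}--\eqref{IntegrabilityConditionMetricOLD} place $\bmJ$ in the kernel of the first-order operator $\mathcal{K}_{\bmh}$; one proves $\mathcal{K}_{\mathring{\bmh}}$ injective (again using the absence of conformal Killing vectors and tracefree Codazzi tensors), shows $\mathcal{K}_{\bmh}$ is elliptic, and establishes that injectivity persists for $\bmh$ near $\mathring{\bmh}$ via a uniform elliptic estimate and a failure-sequence compactness argument; only then, with $\bmJ=0$ in hand, does a Bochner-type identity combined with negativity of $r_{ij}[\bmh]$ force $Q_i=0$. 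Without this, the solutions your argument produces need not satisfy the Einstein constraints. A secondary inaccuracy: triviality of the kernel and cokernel of the linearisation does not follow from conformal rigidity and negative curvature alone --- the trace part of the linearised metric equation reduces to $-(\mathring{\Delta}+\beta)\gamma=C_k{}^k$ with $\beta=-4+\tfrac{8}{9}\mathring{K}^2$, so one must additionally require $\beta\notin\text{Spec}(-\mathring{\Delta})$, a spectral condition on the background mean curvature that excludes countably many values of $\mathring{K}$ and is stated explicitly in the precise version of the theorem.
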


A precise statement of the above theorem is given in  Section
\ref{Section:PerturbationsOfCompactHyperbolicInitialData}, Theorem \ref{MainTheorem}.

\subsubsection*{Outline of the article}

The structure of this article is as follows: in Section
\ref{Sec:ECEs} we introduce
the extended constraint equations and discuss their relationship to the
Einstein constraint equations. In Section \ref{Sec:ButschersMethod}, 
we describe in general terms the Friedrich-Butscher method; in Section \ref{Sec:SecondarySystem}
we outline the general procedure for the reformulation of the extended
constraint equations as an elliptic
 system; the potential obstructions to the implementation of the method 
 are discussed in Section \ref{Section:ObstructionsToSolutions}, motivating our subsequent
  restriction to \textit{conformally rigid hyperbolic} background initial data. In
Section \ref{Section:PerturbationsOfCompactHyperbolicInitialData} the
method is carried out in this case, the main result being given in
Theorem \ref{MainTheorem} of Section \ref{Sec:MainResult}, and proved by means of Propositions
\ref{PropExistenceOfSolutionsToSecondarySystem} and
\ref{PropositionSufficiency} in Sections \ref{Section:ExistenceOfSolutions}
and \ref{Section:Sufficiency}. 

\subsubsection*{Notation and Conventions}

In the following we will use $(\mathcal{S},\bmh)$ to denote a
Riemannian manifold. The metric $\bmh$ is assumed to be positive
definite. The Levi-Civita connection will be denoted by $D$, and the
Latin indices $i,j,k,\ldots$ will denote abstract tensorial
3-dimensional 
indices. Where convenient we make use of \emph{index-free} notation in
which tensorial objects are written in boldface.

Our conventions for the Riemann curvature are fixed by 
\[
(D_i D_j-D_j D_i)v^k=r^k{}_{lij}v^l.  
\]
The Ricci curvature and scalar are $r_{ij}\equiv r^l{}_{ilj},~ r\equiv h^{ij}r_{ij}$.



\section{The extended Einstein constraint equations}
\label{Sec:ECEs}

\medskip
The \emph{extended Einstein constraint equations} (or \textit{extended
constraints} for short) on a spacelike hypersurface $\mathcal{S}$ of a
4-dimensional Lorentzian manifold $(\mathcal{M},\bmg)$ are given by
the conditions 
\begin{equation}
J_{ijk}=0,\qquad \bar{\Lambda}_i=0,\qquad \Lambda_i=0,\qquad V_{ij}=0, 
\label{ExtendedConstraints}
\end{equation}
in terms of the \textit{zero-quantities}
\begin{subequations}
\begin{eqnarray}
&& J_{ijk} \equiv   D_{i}K_{jk} -  \
D_{j}K_{ik}- \epsilon^{l}{}_{ij} \bar{S}_{kl},\label{ECEExtrCurv}\\
&& \Lambda_{i} \equiv D_{j}S_{i}{}^{j}- \epsilon_{ikl} K^{jk} \bar{S}_{j}{}^{l},\label{ECEElectric}\\ 
&& \bar{\Lambda}_{l} \equiv D^{i}\bar{S}_{il}- \epsilon_{ljk} K_{i}{}^{k}r^{ij}  , \label{ECEMagnetic}\\
&& V_{ij} \equiv r_{ij}  -  \tfrac{2}{3} \lambda h_{ij}- S_{ij} -  K_{i}{}^{k} K_{jk} + K_k{}^k K_{ij} .\label{ECEMetric}
\end{eqnarray}
\end{subequations}
They are to be read as equations for a Riemannian metric $h_{ij}$, a
symmetric $2$-tensor $K_{ij}$ to be interpreted as the extrinsic
curvature, and two symmetric $\bmh$-tracefree tensors
$S_{ij},~\bar{S}_{ij}$.

\medskip
The system \eqref{ECEExtrCurv}-\eqref{ECEMetric} can be seen as a
particular case of Friedrich's conformal constraint equations ---namely, when the conformal rescaling is
trivial, see \cite{CFEBook}. The equations associated to the
zero-quantities 
\eqref{ECEExtrCurv} and \eqref{ECEMetric} are nothing other than the
\emph{Codazzi--Mainardi} and \emph{Gauss--Codazzi equations} ---recall
that in three dimensions the essential components of the Riemann
curvature tensor are contained in the
Ricci tensor. The equations associated to the zero-quantities defined
in \eqref{ECEElectric}-\eqref{ECEMagnetic} are
the projections onto $\mathcal{S}$ of the second Bianchi identity of
the ambient spacetime (assuming that the Einstein vacuum field
equations hold):
\[
\nabla_{[a}C_{bc]de}=0, 
\]
where $C_{abcd}$ denotes the Weyl tensor. 
Accordingly, the fields $S_{ij}$ and $\bar{S}_{ij}$ can be
interpreted, respectively, as the \emph{electric} and \emph{magnetic
  parts} of $C_{abcd}$ with respect to the normal of
$\mathcal{S}$ ---the latter 3-manifold being thought of as a spacelike
hypersurface of a spacetime $(\mathcal{M},\bmg)$. 

\begin{remark}
{\em The equations associated to the zero-quantities defined in
\eqref{ECEElectric}-\eqref{ECEMagnetic} may also be interpreted as
integrability conditions for the equations associated to
\eqref{ECEExtrCurv} and \eqref{ECEMetric}. More specifically, the
zero-quantities satisfy the relations
\begin{subequations}
\begin{eqnarray}
&& \bar{\Lambda}_{l}+\tfrac{1}{2}\epsilon_{ijk} D^{k}J^{ij}{}_{l}=0 ,\label{IntegrabilityConditionExtrCurvOLD}\\
&& \Lambda_{j}+D_{i}V_{j}{}^{i}-\tfrac{1}{2}D_jV_i{}^i -   K_{ik}J_{j}{}^{ik} +
 K_{jk}J^{ik}{}_{i}+ KJ_{j}{}^{i}{}_{i} =D^ir_{ij}-\tfrac{1}{2}D_jr=0, \label{IntegrabilityConditionMetricOLD}
\end{eqnarray}
\end{subequations}
where in the latter we are making use of the contracted Bianchi
identity and $K$ denotes the trace of $K_{ij}$ with respect to
$h_{ij}$. In particular, if $J_{ijk}=0$ and $V_{ij}=0$, then
$\Lambda_i=\bar{\Lambda}_i=0$ automatically.}
\end{remark}

Taking the appropriate traces of \eqref{ECEExtrCurv} and
\eqref{ECEMetric}, one obtains the Einstein constraint equations
\begin{subequations}
\begin{eqnarray}
&& J_{ij}{}^i\equiv D^iK_{ij}-D_j K=0,\\
&& V_i{}^i\equiv r-2\lambda-K_{ij}K^{ij}+K^2=0. \label{HamiltonianConstraint}
\end{eqnarray}
\end{subequations}
It follows then that any solution to the equations associated to the
zero-quantities \eqref{ECEExtrCurv}-\eqref{ECEMetric} gives rise also
to a solution of the Einstein constraints. The reverse is also true,
since, having obtained a solution $(\mathcal{S},\bmh,\bmK)$ of the
Einstein constraints, one simply \emph{defines}
\begin{subequations}
\begin{eqnarray}
&& S_{ij}=r_{ij}  -  \tfrac{2}{3} \lambda h_{ij}-  K_{i}{}^{k} K_{jk} +K K_{ij} ,\label{Eq:ElectricPartFromInitialData}\\
&& \bar{S}_{kl} =- \epsilon_{lij} D^{j}K_{k}{}^{i}.\label{Eq:MagneticPartFormInitialData}
\end{eqnarray}
\end{subequations}
By construction then we have $J_{ijk}=0,~V_{ij}=0$, whence the
integrability conditions imply $\Lambda_i=\bar{\Lambda}_i=0$. Hence,
solutions of the extended constraints and of the Einstein constraint
equations are in direct correspondence.

\begin{remark}\label{Remark:EMVersionIsCoupled}{\em
Note that, assuming $V_{ij}=0$, if one substitutes \eqref{ECEMetric} into \eqref{ECEMagnetic} one obtains
\begin{equation}
 \bar{\Lambda}_{l} \equiv D^{i}\bar{S}_{il} - S^{ij} \epsilon_{ljk} K_{i}{}^{k},\label{EMVersionOfElectricEq}
\end{equation}
which better exhibits the \emph{electromagnetic duality} between the
electric and magnetic parts of the Weyl tensor: namely, that under the transformation 
\[
S_{ij}\longrightarrow \bar{S}_{ij},\qquad \bar{S}_{ij}\longrightarrow
-S_{ij},\]
the corresponding zero quantities transform as 
\[\Lambda_{i}\longrightarrow\bar{\Lambda}_i,\qquad
\bar{\Lambda}_{i}\longrightarrow -\Lambda_i. 
\]
We choose, however, to work with the system
\eqref{ECEExtrCurv}--\eqref{ECEMetric}, since the resulting
integrability conditions
\eqref{IntegrabilityConditionExtrCurv}--\eqref{IntegrabilityConditionMetric}
enjoy a particular semi-decoupling of the zero-quantities $J_{ijk}$
and $V_{ij}$ that is convenient for the subsequent analysis, and that
is lost if one uses the alternative definition of the zero-quantity $\bar{\Lambda}_i$,
given by \eqref{EMVersionOfElectricEq}.}
\end{remark}

\section{The Friedrich--Butscher Method}
\label{Sec:ButschersMethod}
In this section, we outline the general procedure introduced in
\cite{But06,But07} to construct solutions to the Einstein constraint
equations, in addition to describing some of the potential
obstructions to its implementation. As mentioned in the introduction,
the procedure is of a perturbative nature ---that is,  one proves the
existence of nonlinear perturbations of some \emph{background initial
data set}, denoted $(\mathcal{S},\mathring{\bmh},\mathring{\bmK})$,
 through the use of the \emph{implicit function theorem}. In
order to apply the implicit function theorem, one first derives from
the extended constraint equations a so-called
\textit{auxiliary}
system of equations which, given the appropriate choice of free and
determined data, has a linearisation which is manifestly elliptic. By
construction, any solution of the extended constraint equations is
also a solution of the auxiliary equations. Having found, via the
inverse function theorem, 
an open neighbourhood of solutions to the auxiliary system around the
given background initial data set one must then show that such
\emph{candidate initial data set} is indeed a solution to the extended
constraints ---we refer to the latter as the problem of
\emph{sufficiency of the auxiliary system}.

\medskip
 In short, the Friedrich--Butscher method may be divided into two stages:
\begin{enumerate}
\item[(i)] \textbf{\em Construction of candidate solutions:} derive a
auxiliary system of equations, with elliptic linearisation, and apply
the implicit function theorem to guarantee existence of solutions.
\item[(ii)] \textbf{\em Sufficiency}: prove that the solutions to the
auxiliary system constructed in \emph{Step (i)} are also solutions to the
extended constraint equations.
\end{enumerate}

In Section \ref{Section:ObstructionsToSolutions} we discuss the
potential obstructions to the implementation of the above procedure. The
desire to avoid such obstructions motivates our restriction to
\textit{conformally rigid hyperbolic} manifolds in Section
\ref{Section:PerturbationsOfCompactHyperbolicInitialData}.

\subsection{Preliminaries}
\label{Sec:Preliminaries}
In the following, it will be convenient to a adopt a slightly more
index-free notation that emphasises the structure of the
equations. Given the Riemannian 3-manifold $(\mathcal{S},\bmh)$, we
introduce the following spaces of tensors:
\begin{itemize}
\item $\Lambda^1(\mathcal{S})$, the space of covectors over $\mathcal{S}$;
\item $\mathscr{S}^2(\mathcal{S})$, the space of symmetric $2$-tensors
  over $\mathcal{S}$;
\item $\mathscr{S}^2_0(\mathcal{S};\bmh)$, the space of symmetric
$2$-tensors over $\mathcal{S}$ that are tracefree with respect to the
metric $\bmh$;
\item $\mathscr{S}_{TT}(\mathcal{S};\bmh)$, the space of
  transverse-tracefree tensors over $\mathcal{S}$ with respect to the
  metric $\bmh$;
\item $\mathcal{J}(\mathcal{S})$, the space of \textit{Jacobi} tensors
  ---i.e. tensors $J_{ijk}$ satisfying 
\[
J_{ijk}=-J_{jik},\qquad J_{ijk}+J_{jki}+J_{kij}=0 .
\] 
\end{itemize}

\begin{remark}\label{Remark:JacobiDecomp}
{\em It will be useful to note that 
\[
\mathcal{J}(\mathcal{S})\simeq \Lambda^1(\mathcal{S})\oplus
\mathscr{S}^2_0(\mathcal{S};\bmh).
\]
More precisely, any $J_{ijk}\in\mathcal{J}(\mathcal{S})$ may be uniquely decomposed as 
\begin{equation} 
\label{DecompJacobiTensor}
J_{ijk} = \tfrac{1}{2} \left( \epsilon_{ij}{}^{l}F_{lk} + A_{i} h_{jk}-  A_{j} h_{ik} \right),
\end{equation}
where
\[
A_{j} \equiv J_{jk}{}^k,\qquad F_{km} \equiv
\epsilon_{ij(m}J^{ij}{}_{k)}, 
\]
the latter being tracefree. In the previous expressions and in the following
$\epsilon_{ijk}$ denotes the \emph{volume form} of the metric $\bmh$.
We will refer to \eqref{DecompJacobiTensor} as the \emph{Jacobi decomposition, with respect to} $\bmh$ of $J_{ijk}$.}
\end{remark}

We also introduce the following operators:
\begin{itemize}
\item
  $\Pi_\bmh:\mathscr{S}^2(\mathcal{S})\longrightarrow\mathscr{S}^2_0(\mathcal{S};\bmh)$,
  the \emph{projection of symmetric 2-tensors into the space of symmetric
  tracefree 2-tensors}, given by 
\[
\Pi_\bmh(\eta)_{ij}\equiv \eta_{ij}-\tfrac{1}{3}\text{tr}_\bmh(\bmeta)
h_{ij};
\]
\item
  $\star:\mathscr{S}^2_0(\mathcal{S};\bmh)\longrightarrow\mathcal{J}(\mathcal{S})$, given by
\[
(\star \eta)_{ijk}\equiv \epsilon^l{}_{ij}\eta_{kl};
\]
where $\epsilon_{ijk}$ denotes the volume form;
\item
  $\delta_\bmh:\mathscr{S}^2(\mathcal{S})\longrightarrow\Lambda^1(\mathcal{S})$,
  the \emph{divergence operator}, 
\[
\delta_\bmh(\eta)_j\equiv D^i\eta_{ij};
\]
\item $ L_\bmh:\Lambda^1(\mathcal{S})\longrightarrow \mathscr{S}^2_0(\mathcal{S};\bmh)$ the \emph{conformal Killing operator}, 
\[
L_\bmh(X)_{ij}\equiv D_i X_j+D_j X_i-\tfrac{2}{3}D^k X_k h_{ij};
\]
\item $\mathcal{D}_\bmh: \mathscr{S}^2(\mathcal{S})\longrightarrow\mathcal{J}(\mathcal{S})$ the \emph{Codazzi operator}, 
\[
\mathcal{D}_\bmh(\eta)_{ijk}\equiv D_i\eta_{jk}-D_j\eta_{ik},
\]
\item
  $\mathcal{D}_\bmh^*:\mathcal{J}(\mathcal{S})\longrightarrow\mathscr{S}^2(\mathcal{S})$,
the \emph{formal $L^2$-adjoint of $\mathcal{D}_\bmh$ restricted to
$\mathscr{S}^2_0(\mathcal{S};\bmh)$}, and given by
\[
\mathcal{D}_\bmh^*(\mu)_{ij}\equiv D^k \mu_{ikj}+D^k \mu_{jki}-\tfrac{2}{3}D^k
\mu_{lk}{}^l h_{ij};
\]
\item $\Delta_L: \mathscr{S}^2(\mathcal{S})\longrightarrow\mathscr{S}^2(\mathcal{S})$, the \textit{Lichnerowicz Laplacian}, acting as  
\[
\Delta_L \eta_{ij}\equiv -\Delta_h \eta_{ij}+2 r_{(i}{}^k
\eta_{j)k}-2r_{ikjl}\eta^{kl},
\]
where $\Delta_\bmh\equiv h^{ij}D_i D_j$ is the \emph{rough
  Laplacian}. 
\end{itemize}

\begin{notation*}\em{Often, for the sake of simplicity, the subscript
    $\bmh$ in the symbol of the above operators will be omitted. When
    the above operators are defined with respect to the background
    metric $\mathring{\bmh}$ they will be distinguished by the symbol $\mathring{~}$. }
\end{notation*}

\begin{remark}\label{Remark:JacobiDecompOfCodazziOperator}{\em Since $\mathcal{D}_\bmh: \mathscr{S}^2(\mathcal{S})\longrightarrow\mathcal{J}(\mathcal{S})$, the image of $\mathcal{D}_\bmh$ may be decomposed as in Remark \ref{Remark:JacobiDecomp}. In particular, given $\eta_{ij}\in \mathscr{S}^2_0(\mathcal{S};\bmh)$, $\mathcal{D}_\bmh (\bm\eta)_{ijk}$ may be decomposed as follows
\begin{equation}
\mathcal{D}_\bmh (\bm\eta)_{ijk}=\tfrac{1}{2}(\epsilon_{ij}{}^l\text{rot}_2(\bm\eta)_{lk}-\delta_\bmh(\bm\eta)_ih_{jk}+\delta_\bmh(\bm\eta)_jh_{ik}),
\end{equation}
where $\text{rot}_2(\bm\eta)_{ij}\equiv \epsilon_{kl(i}D^k\eta^l{}_{j)}$. It therefore follows that $\mathcal{D}_\bmh (\bm\eta)_{ijk}=0$ for $\eta_{ij}\in \mathscr{S}^2_0(\mathcal{S};\bmh)$ if and only if $\delta(\bm\eta)_i=0$ \emph{and} $\text{rot}_2(\bm\eta)_{ij}=0$.
}
\end{remark}

We recall  that the divergence operator is undetermined
elliptic and (equivalently) the conformal Killing operator $L$ is
overdetermined elliptic. Moreover, as shown in \cite{But07}, the
operator $\mathcal{D}_{\bmh}$ is overdetermined elliptic when
restricted to $\mathscr{S}^2_0(\mathcal{S};\bmh)$. More precisely, one
has the following:

\begin{lemma}
Given a covector $\bmxi$ let 
\[
\sigma_\bmxi[\mathcal{D}_\bmh]: \mathscr{S}^2(\mathcal{S})\longrightarrow
\mathcal{J}(\mathcal{S})
\]
 denote the symbol map of
$\mathcal{D}_{\bmh}$. For $\bmxi\neq 0$, the kernel of
$\sigma_\bmxi[\mathcal{D}_\bmh]$ is one dimensional ---it consists of
elements of the form $c\xi_i\xi_j$. It follows that the operator
$\mathcal{D}_{\bmh}\vert_{\mathscr{S}^2_0(\mathcal{S};\bmh)}$ is
overdetermined elliptic.
\end{lemma}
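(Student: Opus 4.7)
The plan is to compute the principal symbol directly from the definition of $\mathcal{D}_\bmh$ and then analyse its kernel by purely algebraic manipulations; only after the symbol kernel is classified on all of $\mathscr{S}^2(\mathcal{S})$ do we specialise to the tracefree subspace.

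First, I would read off the principal symbol. Since $\mathcal{D}_\bmh$ is a first-order operator whose principal part is $\eta_{jk}\mapsto D_i\eta_{jk}-D_j\eta_{ik}$, for any covector $\bmxi$ and any $\eta_{ij}\in\mathscr{S}^2(\mathcal{S})$ one has
\[
\sigma_\bmxi[\mathcal{D}_\bmh](\bmeta)_{ijk} \;=\; \xi_{i}\eta_{jk}-\xi_{j}\eta_{ik}.
\]
The task then reduces to determining all symmetric $\eta_{ij}$ satisfying $\xi_i\eta_{jk}=\xi_j\eta_{ik}$ for a fixed non-zero $\bmxi$.

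Next, I would contract this identity with $\xi^{i}$: setting $v_{k}\equiv \xi^{i}\eta_{ik}$ yields $|\bmxi|^{2}\eta_{jk}=\xi_{j}v_{k}$, so that $\eta_{jk}=|\bmxi|^{-2}\xi_{j}v_{k}$. The symmetry $\eta_{jk}=\eta_{kj}$ forces $\xi_{j}v_{k}=\xi_{k}v_{j}$, and since $\bmxi\neq 0$ this is only possible if $v_{k}=c\,\xi_{k}$ for some scalar $c$. Consequently $\eta_{jk}=c'\xi_{j}\xi_{k}$ for an appropriate constant $c'$, and conversely one checks immediately that any such tensor lies in the kernel. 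Thus $\ker\sigma_\bmxi[\mathcal{D}_\bmh]$ is the one-dimensional line spanned by $\xi_{i}\xi_{j}$.

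Finally, restriction to $\mathscr{S}^{2}_{0}(\mathcal{S};\bmh)$ is immediate: the element $c\,\xi_{i}\xi_{j}$ has trace $c|\bmxi|^{2}$, which vanishes for $\bmxi\neq 0$ only when $c=0$. Hence the symbol $\sigma_\bmxi[\mathcal{D}_\bmh]$ is injective on $\mathscr{S}^{2}_{0}(\mathcal{S};\bmh)$ for every non-zero $\bmxi$, which is exactly the definition of overdetermined ellipticity. I do not anticipate any real obstacle here: the entire argument is linear-algebraic, and the only subtlety worth flagging is the role of the symmetry of $\eta$, without which one would obtain a larger kernel of the form $\xi_{j}v_{k}$ with $v$ arbitrary.
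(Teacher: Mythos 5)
Your argument is correct, and it supplies exactly the ``straightforward'' computation that the paper itself omits (it simply defers to \cite{But07}): the symbol is $\xi_i\eta_{jk}-\xi_j\eta_{ik}$, contraction with $\xi^i$ together with the symmetry of $\eta_{ij}$ and positive-definiteness of $\bmh$ forces $\eta_{ij}=c\,\xi_i\xi_j$, and this line meets $\mathscr{S}^2_0(\mathcal{S};\bmh)$ only at zero since its trace is $c|\bmxi|^2$. No gaps.
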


The proof is straightforward; the details can be found in \cite{But07}.

\begin{remark}\label{Remark:DecompositionOfCodazziOperator}
{\em In terms of the above definitions, the extended constraints encoded in
the zero-quantities \eqref{ECEExtrCurv}--\eqref{ECEMetric} may be rewritten as 
\begin{subequations}
\begin{eqnarray}
&& \mathcal{D}_{\bmh}(K)_{ijk}-(\star\bar{S})_{ijk}=0, \label{ExtrinsicCurvatureAuxiliary}\\
&& \delta_{\bmh} (S)_{i}+\epsilon^{jk}{}_i K_j{}^l \bar{S}_{kl}=0,\\
&& \delta_{\bmh}(\bar{S})_{i}- \epsilon_{i}{}^{jk} K_{k}{}^{l}r_{lj}=0,\label{}\\
&& r_{ij}-\tfrac{2}{3}\lambda h_{ij}-S_{ij}+K K_{ij}-K_i{}^k K_{jk}=0.
\end{eqnarray}
\end{subequations}}
\end{remark}


\subsection{The auxiliary system}
\label{Sec:SecondarySystem}

The Friedrich--Butscher method for the construction of solutions to the Einstein constraint
equations relies on first using the extended constraint
equations to obtain a auxiliary system of equations whose
linearisation is elliptic. The existence of solutions  is then
established through an application of the implicit function
theorem. In general, the linearised system is a highly coupled second
order system of partial differential equations. In the case of
background data with metric of constant sectional curvature
(i.e. Einstein manifolds), the linearised
equations decouple sufficiently so as to enable a straightforward
analysis of its kernel and cokernel ---this system will be given in
Section \ref{Section:ExistenceOfSolutions}. Here, we discuss the
procedure in full generality, but for simplicity we restrict attention
to the principal parts of the equations, since they suffice for the
description of ellipticity. 

\subsubsection{The ansatz}

First note that, given a background initial data set $(\mathcal{S},\mathring{\bmh},\mathring{\bmK})$, there exists (see \eqref{Eq:ElectricPartFromInitialData} and \eqref{Eq:MagneticPartFormInitialData}) a corresponding background solution to the extended constraints, denoted $(\mathcal{S},\mathring{\bmK},\mathring{\bar{\bmS}},\mathring{\bmS},\mathring{\bmh})$, and which may moreover be decomposed as follows
\begin{subequations}
\begin{eqnarray}
&& \mathring{K}_{ij}=\kappa_{ij}+\tfrac{1}{3}\mathring{K}\mathring{h}_{ij},\\
&& \mathring{S}_{ij}=\mathring{L}(\bmv)_{ij}+\psi_{ij},\label{YorkDecompElectric}\\
&& \mathring{\bar{S}}_{ij}=\mathring{L}(\bar{\bmv})_{ij}+\bar{\psi}_{ij},\label{YorkDecompMagnetic}
\end{eqnarray}
\end{subequations}
with $\kappa_{ij}\in\mathscr{S}^2_0(\mathcal{S};\mathring{\bmh})$, $v_i,\bar{v}_i\in\Lambda^1(\mathcal{S})$ and $\psi_{ij},\bar{\psi}_{ij}\in\mathscr{S}_{TT}(\mathcal{S};\mathring{\bmh})$. Decompositions \eqref{YorkDecompElectric} and \eqref{YorkDecompMagnetic} are precisely the \emph{York splits} (see \cite{Yor73,Can81}) of the electric and magnetic parts; such a split is always possible, and is moreover unique up to the addition of conformal Killing fields to $v_i,\bar{v}_i$.
\begin{remark}
{\em In Section \ref{Section:PerturbationsOfCompactHyperbolicInitialData}, we will restrict to background initial data which is Einstein and umbilical, for which $\kappa_{ij}=0$, $v_i=\bar{v}_i=0$ and $\psi_{ij}=\bar{\psi}_{ij}=0$.  }
\end{remark}
We will seek solutions of the extended constraints of the form
\begin{subequations}
\begin{eqnarray}
&& K_{ij}=\kappa_{ij}+\chi_{ij} +\tfrac{1}{3}(\mathring{K}+\phi)~ \mathring{h}_{ij},\label{DecompExtrinsicCurv}\\
&& S_{ij}=\Pi_{\bmh}(\mathring{L}(\bmv+\bmX)+\bm\psi+\bmT)_{ij},\label{ModifiedYorkAnsatzElectric}\\
&& \bar{S}_{ij}=\Pi_{\bmh}(\mathring{L}(\bar{\bmv}+\bar{\bmX})+\bar{\bm\psi}+\bar{\bmT})_{ij}, \label{ModifiedYorkAnsatzMagnetic}
\end{eqnarray}
\end{subequations}
where $\chi_{ij}$ is tracefree with respect to the background metric
$\mathring{\bmh}$, $\mathring{K}+\phi$ being the trace part, and where
$T_{ij},~\bar{T}_{ij}$ are taken to be transverse-tracefree with respect to the background
metric. Recall that $\Pi_{\bmh}$ is the projection onto
$\mathscr{S}^2_0(\mathcal{S};\bmh)$, so that $S_{ij}$ and $\bar{S}_{ij}$
are $\bmh$-tracefree, as required. We will use
$\bmS(\bmX,\bmT),~\bar{\bmS}(\bar{\bmX},\bar{\bmT})$ as shorthands for \eqref{ModifiedYorkAnsatzElectric} and \eqref{ModifiedYorkAnsatzMagnetic}. The above ansatz is motivated by the fact that the operator $\delta_{\bmh}$ is undetermined
elliptic, while
$\mathcal{D}_{\bmh}\vert_{\mathscr{S}^2_0(\mathcal{S};\bmh)}$ is
overdetermined elliptic. Note that the background solution corresponds to taking
\[(\bm\chi,\bar{\bmX},\bmX,\bmh)=(\bm0,\bm0,\bm0,\mathring{\bmh})\qquad \text{and}\qquad (\phi,\bar{\bmT},\bmT)=(0,\bm0,\bm0) \]
in \eqref{DecompExtrinsicCurv}--\eqref{ModifiedYorkAnsatzMagnetic}.

\begin{remark}{\em
Here we adopt a slightly different approach to that of \cite{But06,But07}, which uses the 
 ansatz 
\[
S_{ij}=L_{\bmh}(\bmX)_{ij}+\Pi_{\bmh} T_{ij}, 
\]
with $T_{ij}$ a transverse-tracefree tensor with respect to $\mathring{\bmh}$.
The reason for using 
\eqref{ModifiedYorkAnsatzElectric}--\eqref{ModifiedYorkAnsatzMagnetic}
is that we will be able to use the orthogonality property of the York
split (with respect to $\mathring{\bmh}$) ---see \cite{Can81}--- to
argue, in a straightforward way, that the solutions are uniquely determined by the
freely-prescribed data $(\phi,\bmT,\bar{\bmT})$. }
\end{remark}

\subsubsection{The linearisation of the Ricci operator}

Let us now consider equation \eqref{ECEMetric}. As is well known, the
linearised Ricci operator is not elliptic. The failure of the
linearised Ricci operator to be elliptic is a consequence of
diffeomorphism-invariance, as encoded by the contracted Bianchi
identity ---see, for instance, \cite{ChoKno04}. One method of
breaking the gauge-invariance is via the use of a variation of the so-called
\emph{DeTurck trick}. Here we follow this approach.

\medskip
Let $\mathring{D}$ denote the Levi-Civita connection associated to
$\mathring{\bmh}$. The \emph{linearisation of the Ricci operator}, $\breve{r}(\gamma)_{ij}$, about
$\mathring{h}_{ij}$ acting on a symmetric tensor field $\gamma_{ij}$
(the \textit{metric perturbation}) is given by the following \emph{Fr\'{e}chet derivative}
\begin{align}
\breve{r}(\gamma)_{ij}&\equiv \frac{d}{d\tau}r[\mathring{\bmh}+\tau\bm\gamma]_{ij}\bigg\vert_{\tau=0}\\
&=-\tfrac{1}{2}\mathring{\Delta}\gamma_{ij}+\tfrac{1}{2}\mathring{D}_k\mathring{D}_i\gamma_j{}^k+\tfrac{1}{2}\mathring{D}_k\mathring{D}_j\gamma_i{}^k-\tfrac{1}{2}\mathring{D}_i\mathring{D}_j\gamma\nonumber\\
&=-\tfrac{1}{2}\mathring{\Delta}\gamma_{ij}+\tfrac{1}{2}\mathring{D}_i\mathring{D}_k\gamma_j{}^k+\tfrac{1}{2}\mathring{D}_j\mathring{D}_k\gamma_i{}^k-\tfrac{1}{2}\mathring{D}_i\mathring{D}_j\gamma+\mathring{r}_{(i}{}^k\gamma_{j)k}-\mathring{r}_{ikjl}\gamma^{kl}\nonumber\\
&=\tfrac{1}{2}\Delta_L\gamma_{ij}+\mathring{D}_{(i}C(\bm\gamma)_{j)}{}^k{}_k,\label{LinearisedRicci}
\end{align}
where, here, $\tau$ is a real parameter describing a
a one-parameter-family of metrics,
$\bmh(\tau)=\mathring{\bmh}+\tau\bm\gamma$, and $C(\cdot)^i{}_{jk}$ is
defined by
\begin{equation}
\label{PerturbationOfGradient}
C(\bm\gamma)^i{}_{jk}\equiv \tfrac{1}{2}(\mathring{D}_j\gamma_k{}^i+\mathring{D}_k\gamma_j{}^i-\mathring{D}^i\gamma_{jk}).
\end{equation}
Here, and it what follows, index raising and lowering within a
linearised covariant  will be carried out with respect to the background
metric, $\mathring{\bmh}$. The first term of \eqref{LinearisedRicci},
$\Delta_L\gamma_{ij}$, is manifestly elliptic, but the ellipticity is
spoiled by the second-order term
$\mathring{D}_{(i}C_{j)}{}^k{}_k$. Now, given an arbitrary local coordinate system, $(x^\alpha)$,
define the following
\[
Q(\tau)^\alpha\equiv \frac{1}{2}h(\tau)^{\beta\gamma}(\Gamma(\bmh(\tau))^\alpha_{\beta\gamma}-\mathring{\Gamma}^\alpha_{\beta\gamma}),
\]
where $h(\tau)^{\beta\gamma}$ is the inverse of
$h(\tau)_{\alpha\beta}$, and
$\Gamma(\bmh(\tau))^\alpha_{\beta\gamma}$,
$\mathring{\Gamma}^\alpha_{\beta\gamma}$ denote respectively the
Christoffel symbols of the metrics $\bmh(\tau)$ and $\mathring{\bmh}$
in the local coordinates, $(x^\alpha)$.

\begin{remark}
\label{Remark:QisAVector}
{\em Note that, though $Q^\alpha$ is defined with respect to a fixed local coordinate system, the expression is in fact covariant,
being given by the trace of the difference of two connections (i.e the trace of the \emph{transition tensor}, $S^k{}_{ij}$). Hence, $\bmQ$ represents a (globally-defined) vector field, which we will denote in the abstract index formalism by $Q^i$. The remaining
calculations of the article will be carried out in the abstract index notation.}
\end{remark}

Consider now the Lie derivative of the metric along $Q(\tau)$, $\mathcal{L}_{Q(\tau)} h(\tau)_{ij}$,
the linearisation of which is given by
\[
\frac{d}{d\tau}(\mathcal{L}_{Q(\tau)}
\bmh(\tau))_{ij}\bigg\vert_{\tau=0} =\mathring{D}_{(i}C_{j)}{}^k{}_k,
\]
which is precisely the term in \eqref{LinearisedRicci} obstructing the
ellipticity in the linearised Ricci operator. Accordingly, we define
the \textit{reduced Ricci operator}, $\text{Ric}^Q(\cdot )$, as
\[
\text{Ric}^Q(\bmh)_{ij}\equiv r_{ij}-(\mathcal{L}_Q h)_{ij}. 
\]
The linearisation of the reduced Ricci operator  can then be seen to
be proportional to the Lichnerowicz Laplacian of the background metric
---that is,
\[
D\text{Ric}^Q(\mathring{\bmh})\cdot
\gamma_{ij}=\tfrac{1}{2}\mathring{\Delta}_L\gamma_{ij} ,
\]
which is manifestly elliptic ---note that, modulo curvature terms,
$\Delta_L$ is simply the rough Laplacian and, therefore, clearly
elliptic ---see e.g. also \cite{DeT80} for an alternative discussion of the above.

\begin{remark}
{\em The reduced Ricci operator coincides with the Ricci operator when
$Q^i=0$. The linearisation $D\text{Ric}^Q(\cdot)$ is formally identical to that
obtained through the use of (generalised) harmonic
coordinates. }
\end{remark}

\subsubsection{The auxiliary extended constraint map}
Following the discussion of the previous subsections, it is convenient
to define the \textit{auxiliary extended constraint map} 
\[
\Psi(\bm\chi,\bar{\bmX},\bmX,\bmh;\phi,\bar{\bmT},\bmT)\equiv\left(\begin{array}{c}
\mathring{\mathcal{D}}^*(\bmJ)_{ij}\\
\bar{\Lambda}_i\\
\Lambda_i\\
V_{ij}-\mathcal{L}_Qh_{ij}
\end{array}\right)
=
\left(\begin{array}{l}
\mathring{\mathcal{D}}^*\left(\mathcal{D}_\bmh(\bmK)-\star\bar{\bmS}\right)_{ij}\\[0.5em]
\delta_\bmh (\bar{\bmS})_{i}-\epsilon^{jk}{}_i\chi_j{}^l S_{kl}\\[0.5em]
\delta_\bmh (\bmS)_{i}+\epsilon^{jk}{}_i\chi_j{}^l \bar{S}_{kl}\\[0.5em]
\text{Ric}^Q(\bmh)_{ij}-\tfrac{2}{3}\lambda h_{ij}-S_{ij}+KK_{ij}-K_i{}^kK_{jk}
\end{array}\right)
\]
with the understanding that the fields $K_{ij},~S_{ij},~\bar{S}_{ij}$
should be substituted by the ansatz
\eqref{DecompExtrinsicCurv}--\eqref{ModifiedYorkAnsatzMagnetic}. In
terms of the latter, the \emph{auxiliary system} is then given by 
\begin{equation}
\Psi(\bmchi,\bar{\bmX},\bmX,\bmh;\phi,\bar{\bmT},\bmT)=0,
\label{AuxiliarySystem}
\end{equation}
which is to be read as a (second-order) system of partial differential
equations for the fields $\bmchi,\bar{\bmX},\bmX,\bmh$ while the
fields $\phi,\bar{\bmT},\bmT$ are regarded as input ---i.e. they are the freely
specifiable data. 

\begin{remark}
{\em Note that the auxiliary system is defined always with reference
to some fixed \textit{background solution} $(\mathring{\bmK},\mathring{\bar{\bmS}},\mathring{\bmS},\mathring{\bmh})$ of the extended constraints, both through the ansatz
\eqref{DecompExtrinsicCurv}-\eqref{ModifiedYorkAnsatzMagnetic} and
through the definition of the reduced Ricci operator. It is straightforward to see that, for any given background solution, we have 
\[\Psi(\bm0,\bm0,\bm0,\bm0;0,\bm0,\bm0)=0 \]
---that is to say, that the background solution (corresponding to trivial free and determined fields) itself solves the corresponding auxiliary equations.
 }
\end{remark}

In the following, we denote by
$D\Psi[\mathring{\bmK},\mathring{\bar{\bmX}},\mathring{\bmX},\mathring{\bmh}]\cdot(\bmsigma,\bar{\bmxi},\bmxi,\bmgamma)$
the linearisation of $\Psi$ at
$(\mathring{\bmK},\mathring{\bar{\bmX}},\mathring{\bmX},\mathring{\bmh})$
in the direction of the determined fields ---that is to say, the
following
\begin{align*}
D\Psi[\mathring{\bmK},\mathring{\bar{\bmX}},\mathring{\bmX},\mathring{\bmh}]\cdot(\bmgamma,\bmsigma,\bmxi,\bar{\bmxi})&=\frac{d}{d\tau}\Psi(\mathring{\bmh}+\tau\bmgamma,\mathring{\bmchi}+\tau\bmsigma,\mathring{\bmX}+\tau\bmxi,\mathring{\bar{\bmX}}+\tau\bar{\bmxi};~\phi,\bar{\bmT},\bmT) \bigg\vert_{\tau=0},  
\end{align*}
where $\mathring{\bmX}_i,~\mathring{\bar{\bmX}}$ are the covector fields appearing in the York decomposition of the background electric and magnetic Weyl curvatures, $\mathring{\bmS},~\mathring{\bar{\bmS}}$, and $\mathring{\bm\chi}$ is the tracefree part of $\mathring{\bmK}$ with respect to $\mathring{\bmh}$. 
\begin{notation*}
\em{We will often denote
$D\Psi[\mathring{\bmK},\mathring{\bar{\bmX}},\mathring{\bmX},\mathring{\bmh}]$
by $D\Psi$ for notational convenience. }
\end{notation*}

Note that, as they are held fixed, the free data
$(\phi,\bar{\bmT},\bmT)$ are not an input for $D\Psi$. We will not give
the expression for $D\Psi$ for a general background here.  It will
suffice for the purposes of this section to consider only the
principal parts as a second-order system of partial differential
equations ---namely,
\[
\left(\begin{array}{cccc}
\mathring{\mathcal{D}}^*\circ\mathring{\mathcal{D}} & \mathring{\mathcal{D}}^*(\mathring{\star} \mathring{L}) & 0 & 0\\
0 & \mathring{\delta}\circ \mathring{L} & 0 & 0\\
0 & 0 & \mathring{\delta}\circ \mathring{L} & 0\\
0 & 0 & 0 & -\tfrac{1}{2}\mathring{\Delta}
\end{array}\right)\left(\begin{array}{c}
\sigma_{ij}\\
\bar{\xi}_i\\
\xi_i\\
\gamma_{ij}
\end{array}\right).
  \]
Since the principal part is upper-triangular, to verify ellipticity of
the full system we need consider only the diagonal entries, which are
elliptic by construction ---one proceeds from the bottom-right,
verifying invertibility of the symbol of each row, and successively
substituting into the row above where necessary. It follows then that
$D\Psi$ is a Fredholm operator.

\subsection{The sufficiency argument}
\label{Subsection:SufficiencyArgument}

Let us now assume that \emph{Step (i)} (see beginning of Section \ref{Sec:ButschersMethod}) has
been carried out: that is to say, that we have established the
existence of a small neighbourhood of solutions to the auxiliary
system \eqref{AuxiliarySystem}. In particular we have 
\begin{subequations}
\begin{eqnarray}
&& \mathring{\mathcal{D}}^*(J)_{ij}=0,\label{HavingSolvedExtrinsicCurvAuxiliaryEq}\\
&& V_{ij}= (\mathcal{L}_Q \bmh)_{ij},\\
&& \Lambda_i=\bar{\Lambda}_i=0 \label{HavingSolvedElectricandMagneticEquations}.
\end{eqnarray}
\end{subequations}

In order to conclude that such solutions of the auxiliary system
indeed solve the extended constraint equations, there remains the task
of showing:
\begin{enumerate}[(a)]
\item that $(\mathcal{L}_Q \bmh)_{ij}=0$ in order that $\text{Ric}(\bmh)=\text{Ric}^Q(\bmh)$, implying \eqref{ECEMetric};\label{GaugePropagationStatement}
\item that $J_{ijk}=0$ so that \eqref{ECEExtrCurv} is satisfied.  \label{SufficiencyExtrCurvStatement}
\end{enumerate} 

\begin{remark}
{\em Item (\ref{GaugePropagationStatement}) can be thought of as the
analogue of gauge propagation in the hyperbolic reduction of the
Einstein field equations.}
\end{remark}

The tasks (a)-(b) will be established with the help of the
integrability conditions
\eqref{IntegrabilityConditionExtrCurvOLD}-\eqref{IntegrabilityConditionMetricOLD},
which in view of \eqref{HavingSolvedElectricandMagneticEquations},  reduce to 
\begin{subequations}
\begin{eqnarray}
&& \epsilon^{ijk}D_iJ_{jkl}=0, \label{IntegrabilityConditionExtrCurv}\\
&& D^{i}(\mathcal{L}_Q \bmh)_{ij}-\tfrac{1}{2}D_j(\mathcal{L}_Q \bmh)_i{}^i =  K_{ik} J_{j}{}^{ik} -
K_{jk}J^{ik}{}_{i}  - K J_{j}{}^{i}{}_{i}.  \label{IntegrabilityConditionMetric}
\end{eqnarray}
\end{subequations}
The strategy will be to use
\eqref{HavingSolvedExtrinsicCurvAuxiliaryEq} and
\eqref{IntegrabilityConditionExtrCurv} to first show that $J_{ijk}=0$,
and then to substitute into \eqref{IntegrabilityConditionMetric},
which will be used to show $Q_i=0$.

\subsubsection{Elliptic equations for $Q_i$ and $J_{ijk}$}
\label{Subsubsection:EllipticEquationsForQandJ}

%

First, it will prove convenient to first define the operator
\[\mathcal{K}_{\bmh}:\mathcal{J}(\mathcal{S})\longrightarrow \mathscr{S}^2_0(\mathcal{S};\mathring{\bmh})\oplus\Lambda^1(\mathcal{S})\]
acting as
\[\mathcal{K}_{\bmh}(\bmJ)=\left(\begin{array}{c}
\mathring{\mathcal{D}}^*(\bmJ)_{ij}\\
\epsilon^{ijk}D_iJ_{jkl}
\end{array}\right). \]
As remarked previously, a solution $(\bmK,\bar{\bmS},\bmS,\bmh)$ furnished in Step (i) gives rise to a zero quantity $J_{ijk}$ satisfying equations  \eqref{HavingSolvedExtrinsicCurvAuxiliaryEq} and \eqref{IntegrabilityConditionExtrCurv}, and which therefore lies in the kernel of the operator $\mathcal{K}_{\bmh}$ ---that is to say, $\mathcal{K}_{\bmh}(\bmJ)=0$. In order to establish that $J_{ijk}=0$ (see point (b), above), it suffices to show that $\mathcal{K}_{\bmh}$ has a trivial kernel. To do so, we aim to first establish injectivity of the operator $\mathcal{K}_{\mathring{\bmh}}$, and then to show that injectivity is preserved provided the metric $\bmh$ is sufficiently close to $\mathring{\bmh}$, in the appropriate norm. This ``stability" property of the kernel of $\mathcal{K}_{\bmh}$ relies crucially on the observation that the operator is, in fact, first-order elliptic ---see Lemma \ref{Lemma:EllipticityOfCurlyKOperator} and Proposition \ref{Prop:StabilityCodazziOperator} in Sections \ref{Subsec:PreliminaryRemarksSufficiency} and \ref{Subsubsection:SufficiencyMainArgument}.   
\\

On the other hand, note that
\begin{align*}
D^{i}(\mathcal{L}_Q \bmh)_{ij}-\tfrac{1}{2}D_j(\mathcal{L}_Q \bmh)_i{}^i &=D^i\left(D_iQ_j+D_jQ_i-D^kQ_k h_{ij}\right)\\
&=\Delta_{\bmh} Q_j+D^iD_jQ_i-D_jD^kQ_k\\
&=\Delta_{\bmh} Q_j+\left(D_jD^iQ_i+r_{ij}Q^i\right)-D_jD^kQ_k\\
&=\Delta_{\bmh} Q_j+r_{ij}Q^i.
\end{align*}
Therefore, if $J_{ijk}=0$, then \eqref{IntegrabilityConditionMetric}
implies the elliptic equation 
\[ \Delta_{\bmh} Q_j+r_{ij}Q^i=0,\]
for the zero quantity $Q_i$. Integrating by parts over the closed manifold $\mathcal{S}$, it follows that 
\begin{equation}
\int_{\mathcal{S}}\left(\Vert D\bmQ\Vert_\bmh^2-r_{ij}Q^iQ^j\right) ~d\mu_\bmh=0. \label{SufficiencyIntegralForQ}
\end{equation}
Note that the above identity only follows once it has been established
that $J_{ijk}=0$. Fortunately, the equation $\mathcal{K}_{\bmh}(\bmJ)=0$ is decoupled from $Q_i$ as a consequence of the semi-decoupling of
\eqref{IntegrabilityConditionMetric}--\eqref{IntegrabilityConditionExtrCurv},
as described in Remark \ref{Remark:EMVersionIsCoupled}. This decoupling allows for
a two step approach in which we first show $J_{ijk}=0$ and then use
\eqref{SufficiencyIntegralForQ} to show $Q_i=0$. The full argument is given in Proposition 
\ref{PropositionSufficiency} of Section \ref{Subsubsection:SufficiencyMainArgument}.

\subsection{Obstructions to the existence of solutions}
\label{Section:ObstructionsToSolutions}

In order to use the implicit function theorem (see Section
\ref{Section:ExistenceOfSolutions}) to establish existence of
solutions to the auxiliary system
\[
\Psi=0,
\]
 one would like to show that the linearisation $D\Psi$ is an
 isomorphism between suitable Banach spaces. Accordingly, by an
 \emph{obstruction to the existence of solutions}, we mean a
 non-trivial element of either $\text{ker}(D\Psi)$ or
 $\text{coker}(D\Psi)$ ---recalling that $D\Psi$ is an elliptic (and
 hence Fredholm) operator, the existence of a non-trivial cokernel is
 precisely the obstruction to surjectivity of $D\Psi$ while the
 existence of a non-trivial kernel is the obstruction to injectivity. 

As it will be seen, among the potential obstructions to the existence
of solutions one has non-trivial conformal Killing vectors and
tracefree Codazzi tensors of the background manifold. Precluding the
existence of such obstructions is the fundamental motivation behind
our choice of background data.

\begin{remark}
\em{It is not clear whether the obstructions that will be identified
in the sequel are essential, or may be circumvented. In
\cite{But06,But07}, for instance, the method follows through despite
the existence of non-trivial conformal Killing vectors. There, in
\emph{Step $(i)$} the auxiliary system is solved only up to an
\emph{error term}, constrained to lie in a finite-dimensional
space. In \emph{Step $(ii)$}, it is then simultaneously shown that the
error term must necessarily vanish and that the extended constraints
are indeed satisfied, as a consequence of the non-linear integrability
conditions
\eqref{IntegrabilityConditionExtrCurv}-\eqref{IntegrabilityConditionMetric}. Whether
such a procedure may be implemented in general is unclear. One might
expect the method to be more rigid in the compact case ---the non-existence of conformal 
Killing vectors, for instance, may be a prerequisite. An analogy may
be drawn here with the problem of \emph{linearisation stability} of
the constraint equations, in which the obstructions to
  integrability are precisely the so-called \textit{KID sets}, describing the projection onto $\mathcal{S}$
  of a spacetime Killing vector. In the case of non-compact $\mathcal{S}$, a solution of
the constraint equations may still be linearisation stable even when it admits a KID set, at least when the perturbations of the initial data are restricted to those of sufficiently fast decay 
at infinity (see for example \cite{Bar05}), while
the compact case is more rigid.}
\end{remark}


\subsubsection{Conformal Killing vectors}
\label{Subsection:CKVectorsAsObstructions}
It is clear from the construction of the auxiliary system that the
existence of a non-trivial conformal Killing vector in the background
Riemannian manifold $(\mathcal{S},\mathring{\bmh})$, $\eta_i$ say,
destroys the injectivity of $D\Psi$, because of the use of the ansatz
\eqref{ModifiedYorkAnsatzElectric}-\eqref{ModifiedYorkAnsatzMagnetic}. Indeed,
$\text{ker}(D\Psi)$
contains linear combinations of 
\[
  (\sigma_{ij},~\bar{\xi}_i,~\xi_i,~\gamma_{ij})=(\bm0,~\eta_i,~\bm0,~\bm0)
  \qquad \text{and}\qquad
  (\sigma_{ij},~\bar{\xi}_i,~\xi_i,~\gamma_{ij})=(\bm0,~\bm0,~\eta_i,~\bm0)
  .
\] 
Moreover, in the case of a constant mean curvature background, the
second component of $D\Psi$ takes the form
\[
\mathring{\delta}(\mathring{L}(\bar{\bm\xi}))=0
\]
 and therefore in this
case $\text{coker}(D\Psi)$ also contains elements of the form
\[ 
(\sigma_{ij},~\bar{\xi}_i,~\xi_i,~\gamma_{ij})^*=(\bm0,~\eta_i,~\bm0,~\bm0),
\]
so that $D\Psi$ also fails to be surjective ---here we are using the
suffix ${}^*$ as a shorthand to denote an arbitrary element of the
codomain of $D\Psi$. Similar difficulties arise in both the conformal
method and the gluing methods, whenever there exist non-trivial
conformal Killing vectors ---see, for instance,
\cite{BarIse04}.

\begin{remark}
{\em From the previous discussion, it follows that the implementation of
the Friedrich--Butscher method will be simplified if one restricts to \emph{background initial
data sets which do not admit a conformal Killing vector}. This condition
holds, in particular, for manifolds of negative-definite Ricci
curvature ---the conformal Killing equation implies after contraction
with $D^i\eta^j$ and integration by parts that
\[
\int_{\mathcal{S}}\left(\Vert \mathring{D}\bmeta\Vert_{\mathring{\bmh}}^2+\tfrac{1}{3}\vert\mathring{\delta}(
  \bmeta)\vert^2-\mathring{r}_{ij}\eta^i\eta^j\right)~ d\mu_{\mathring{\bmh}} =0.
\]
Thus,  if the Ricci tensor is negative-definite then $\eta_i=0$ as a
consequence of the positive-definiteness
of the integrand. This is valid in particular for Einstein metrics of 
negative scalar curvature, despite them being locally maximally-symmetric ---that 
is to say that, while there exists the maximal number of \emph{local} Killing 
vector fields in a neighbourhood of each point, none may be extended globally to
the whole manifold. A sufficient condition for the stronger requirement of 
non-existence of local conformal Killing vector fields is given in \cite{BeiChrSch05}.
}
\end{remark}

\subsubsection{Non-trivial tracefree Codazzi tensors}
Inspection of the auxiliary equation for the
extrinsic curvature, equation \eqref{ExtrinsicCurvatureAuxiliary},
readily shows that the existence of non-trivial tracefree
\textit{Codazzi} tensors in the background initial data set ---i.e. elements of
$\text{ker}(\mathring{\mathcal{D}})\cap\mathscr{S}^2_0(\mathcal{S};\bmh)$---
also give rise to obstructions similar in nature to those arising from
the existence of conformal Killing vectors. In this case, given a
tracefree Codazzi tensor, $\eta_{ij}$ say, $\text{ker}(D\Psi)$ and
$\text{coker}(D\Psi)$ both contain elements of the form
\[
(\eta_{ij},~\bm0,~\bm0,~\bm0)
\]
 which destroy both the injectivity and the
surjectivity of $D\Psi$. 

\medskip
For examples of initial data sets which \emph{do} admit tracefree
Codazzi tensors, one needs only consider umbilical, conformally-flat
initial data sets. Consider
$(\mathcal{S},\mathring{\bmh},\mathring{\bmK}=\tfrac{1}{3}\mathring{K}\mathring{\bmh})$,
$\mathring{K}$ a constant, which constitutes an \textit{umbilical}
initial data set provided
\[
\mathring{r}=2\lambda-\tfrac{2}{3}\mathring{K}^2.
\]
If we restrict to those metrics $\mathring{\bmh}$ which are, in
addition, conformally flat then it follows from the Weyl-Schouten
Theorem (see Theorem 5.1 in \cite{CFEBook}) that
\[
0=\mathcal{H}_{ij}\equiv \mathring{\epsilon}_{kl(i}\mathring{D}^k \mathring{r}_{j)}{}^l\equiv
\mathring{\epsilon}_{kl(i}\mathring{D}^k \mathring{d}_{j)}{}^l,
\]
where $\mathring{d}_{ij}$ denotes the tracefree part of the Ricci
curvature. Moreover, it follows from the contracted second Bianchi
identity that $\delta_{\mathring{\bmh}}(\mathring{\bmd})_i=0$, again using the fact
that $\mathring{r}$ is constant. Combining the above observations it
follows (see Remark \ref{Remark:JacobiDecompOfCodazziOperator}) that
$\mathring{d}_{ij}$ is a tracefree Codazzi tensor
---i.e. $\mathring{\mathcal{D}}(\mathring{\bmd})_{ijk}=0$. \emph{This
Codazzi tensor is non-trivial (i.e. non-zero) if $\mathring{\bmh}$ is not an Einstein
metric.}

\begin{remark}
{\em The above observation is pertinent also to the case of non-compact
$\mathcal{S}$. In particular, it suggests that the time-symmetric
initial data set for the Schwarzschild spacetime, with metric
\[
\mathring{\bmh}=\left(1+\frac{m}{2r}\right)^4\bm\delta, 
\] is
potentially unsuitable (as background initial data) for the
application of the Friedrich--Butscher method as $\mathring{\bmh}$ is
not an Einstein metric.}
\end{remark}

%

\subsubsection{Conformally rigid hyperbolic manifolds}

From the previous two sections, we know that the existence of either a
non-trivial conformal Killing vector or a non-trivial tracefree
Codazzi tensor is undesirable for the application of the Friedrich--Butscher method
on compact manifolds. Moreover, it was noted in Section
\ref{Subsection:CKVectorsAsObstructions} that a Riemannian manifold of
negative-definite Ricci curvature cannot admit a globally-defined
conformal Killing field, rendering such a manifold a natural first
candidate for the background manifold $(\mathcal{S},\mathring{\bmh})$.

\medskip
Due to the highly-coupled nature of the auxiliary system of equations,
$\Psi=0$, the tractability of the required analysis is, of course,
dependent on the specific properties of the background manifold,
$(\mathcal{S},\mathring{\bmh})$. In particular, if we consider a
manifold $(\mathcal{S},\mathring{\bmh})$ that is Einstein (or,
equivalently, a \emph{space form} since we are in dimension $3$):
\[\mathring{r}_{ij}=\tfrac{1}{3}\mathring{r}\mathring{h}_{ij},\] with
$\mathring{r}$ (necessarily) constant,
then $D\Psi$ simplifies significantly. The
requirement that $\mathring{r}_{ij}$ be negative-definite is then
simply that $\mathring{r}$ be negative.


Accordingly, let us restrict to an Einstein background manifold with negative 
Ricci scalar
---we will refer to such a manifold as \emph{hyperbolic}. Recall that,
by the \emph{Killing--Hopf Theorem} $(\mathcal{S},\mathring{\bmh})$ is
isometric to a quotient of the hyperbolic $3$-space $\mathbb{H}^3$.
We refer the reader to \cite{Bes08} for results concerning the
admissible topologies of $\mathcal{S}$.
 Moreover, we would also like to exclude the possibility of a
non-trivial tracefree Codazzi tensor ---i.e. ensure that
$\text{ker}(\mathring{\mathcal{D}})\cap
\mathscr{S}^2_0(\mathcal{S};\mathring{\bmh})=\lbrace 0\rbrace$. Now,
in the case of hyperbolic manifolds ---see \cite{Laf83} and also also
\cite{Bei97}--- the space of tracefree Codazzi tensors coincides with
the space of \emph{essential conformally flat deformations}
---i.e. one has 
\[
\text{ker}\lbrace\mathring{\mathcal{D}}:
\mathscr{S}_0^2(\mathcal{S};\mathring{\bmh})\rightarrow\mathcal{J}(\mathcal{S})\rbrace=\text{ker}~\mathring{H}\cap\text{ker}~\mathring{\delta}\simeq\text{ker}~\mathring{H}/\mathring{L}(\Lambda^1(\mathcal{S})),
\]
where $\mathring{H}$ denotes the \emph{linearised Cotton map} ---see
Section \ref{Section:ParametrisingFreeData} for more
details. Consequently, we will refer to a hyperbolic manifold which
admits no no-trivial tracefree Codazzi tensors as being
\emph{conformally rigid}. The requirement of conformal rigidity places
additional restrictions on the topology of $\mathcal{S}$, but there
remains a non-empty family of such manifolds ---see \cite{Kap94}.

\section{Nonlinear perturbations of compact hyperbolic initial data}
\label{Section:PerturbationsOfCompactHyperbolicInitialData}

In the remainder of this article we restrict our attention to conformally rigid hyperbolic background initial data, since such manifolds admit neither conformal Killing fields nor tracefree Codazzi tensors.

The results here can be thought of spatially-closed analogues of those in \cite{Delay09}, in which a version of the Friedrich--Butscher method was applied to non-compact hyperbolic background manifolds. 
We note however that here we solve the full extended constraint equations, rather than the reduced 
system corresponding to initial data sets of umbilic extrinsic curvature, as considered in \cite{Delay09} ---i.e. we allow for non-trivial perturbations of the extrinsic curvature. 

\subsection{Statement of the main result}
\label{Sec:MainResult}
In the following, let $(\mathcal{S},\mathring{\bmh})$ be a closed
hyperbolic Einstein manifold with sectional curvature normalised to
$k=-1$ (or, equivalently, with $\mathring{r}=-6$). Then, for any
given constant $\mathring{K}$, the tensor fields 
\begin{equation}
\mathring{h}_{ij},\qquad \mathring{K}_{ij}=\tfrac{1}{3}\mathring{K} \mathring{h}_{ij},\label{HyperbolicInitialData}
\end{equation}
over $\mathcal{S}$ constitute a solution to the Einstein constraint
equations with constant mean extrinsic curvature  $\mathring{K}$ and
with cosmological constant given by
\[
\lambda=\tfrac{1}{3}(\mathring{K}^2-9),
\]
as it can be readily seen from the Hamiltonian constraint
\eqref{HamiltonianConstraint}. Initial data of this type will be
called \emph{hyperbolic initial data}.  The Cauchy stability of the
development of initial data sets of this type, with $\lambda=0$, was
studied in \cite{AndMon04}.

\begin{remark}
{\em Note that here we are choosing to normalise the intrinsic
curvature, which in turn fixes the value of the cosmological constant,
once the extrinsic curvature has been given. One could alternatively
rescale the intrinsic and extrinsic curvatures appropriately so as to
normalise the cosmological constant. The former option is chosen
since, in the subsequent analysis, it is the intrinsic geometry of
$(\mathcal{S},\mathring{\bmh})$ that will be of primary importance.}
\end{remark}

\begin{remark}{\em
The (unique) solution to the extended Einstein
constraint equations associated to \eqref{HyperbolicInitialData}
is obtained by setting $\mathring{S}_{ij}=\mathring{\bar{S}}_{ij}=0$ ---see \eqref{Eq:ElectricPartFromInitialData}--\eqref{Eq:MagneticPartFormInitialData}. Note that the sign of
$\lambda$ is dependent on the choice of $\mathring{K}$: $\lambda<0$
for $\vert\mathring{K}\vert<3$, $\lambda=0$ for $\mathring{K}=\pm 3$
and $\lambda>0$ for $\vert\mathring{K}\vert>3$.}
\end{remark}

In the following it will prove convenient to define the constants
\begin{equation}
\alpha\equiv
-4+\frac{2}{9}\mathring{K}^2,\qquad\beta\equiv 
-4+\frac{8}{9}\mathring{K}^2.
\label{Definition:AlphaBeta}
\end{equation}
Define also for $s\geq 4$ the Banach spaces
$\mathcal{X}^s,\mathcal{Y}^s,\mathcal{Z}^s$, as follows
\begin{eqnarray*}
&& \mathcal{X}^s\equiv H^{s-1}(\mathscr{C}(\mathcal{S}))\times H^{s-1}(\mathscr{S}_{TT}(\mathcal{S};\mathring{\bmh}))\times H^{s-1}(\mathscr{S}_{TT}(\mathcal{S};\mathring{\bmh})),\\
&& \mathcal{Y}^s\equiv H^s(\mathscr{S}^2_0(\mathcal{S};\mathring{\bmh}))\times H^s(\Lambda^1(\mathcal{S}))\times H^s(\Lambda^1(\mathcal{S}))\times H^s(\mathscr{S}^2(\mathcal{S})), \\
&& \mathcal{Z}^s\equiv H^{s-2}(\mathscr{S}^2_0(\mathcal{S};\mathring{\bmh}))\times H^{s-2}(\Lambda^1(\mathcal{S}))\times H^{s-2}(\Lambda^1(\mathcal{S}))\times H^{s-2}(\mathscr{S}^2(\mathcal{S})). 
\end{eqnarray*}
and where the norms are defined with respect to the background metric
$\mathring{\bmh}$ ---unless explicitly indicated otherwise, all
$H^s$-norms from now on will be defined with respect to
$\mathring{\bmh}$.
\begin{remark}
{\em 
That the image of $\Psi: \mathcal{X}^s\times\mathcal{Y}^s$ is indeed contained in $\mathcal{Z}^s$ may be easily checked using the Schauder ring property: namely that $(\bmu,\bmv)\mapsto \bmu\otimes\bmv$ is continuous as a mapping from $H^{s_1}\times H^{s_2}$ to $H^{s_3}$ provided $s_1+s_2>s_3+n/2$ and $s_1,s_2>s_3$ ---see \cite{Cho08}, for instance. }
\end{remark}

\medskip
We are now in a position to state our main theorem:

\begin{theorem}
\label{MainTheorem}
Let $(\mathcal{S},\mathring{\bmh},\mathring{\bmK})$ be a smooth
conformally rigid hyperbolic initial data set with constant mean
extrinsic curvature $\mathring{K}$ satisfying
\begin{equation}
\beta\notin\text{\em
  Spec}\big(-\mathring{\Delta}:C^\infty(\mathcal{S})\rightarrow
C^\infty(\mathcal{S})\big).  \label{ConditionsOnMeanCurvature}
\end{equation} 
Then, there exists an open neighbourhood $\mathcal{U}\subset
\mathcal{X}$ of $(\bmzero,\bmzero,\bmzero)$, an open neighbourhood
$\mathcal{W}\subset\mathcal{Y}$ of
$(\mathring{\bmh},\bmzero,\bmzero,\mathring{\bmK})$ and a smooth map
$\nu: \mathcal{U}\rightarrow \mathcal{W}$
such that, defining 
\[
u\equiv (\phi,\bmT,\bar{\bmT}),\qquad \nu(u)\equiv
\big(\bmchi(u),\bar{\bmX}(u),\bmX(u),\bmh(u)\big),
\]
the following assertions hold:

\begin{enumerate}[i)]

\item for each $(\phi,\bmT,\bar{\bmT})\in\mathcal{U}$,
\[
w(u)\equiv\big(\bm\chi(u)+\tfrac{1}{3}(\phi+\mathring{K})
  \mathring{\bmh},~\bar{\bmS}(\bar{\bmX}(u),\bar{\bmT}),~
  \bmS(\bmX(u),\bmT),~\bmh(u)\big)
\]
is a solution to the extended constraint equations \eqref{ExtendedConstraints} with cosmological
constant $\lambda=(\mathring{K}^2-9)/3$;

\item the map $u\mapsto w(u)$ is injective for
$\mathring{K}\neq 0$. Moreover, it is injective for $\mathring{K}=0$
if we restrict the free datum $\phi$ to the sub-Banach space
of functions which integrate to zero over $\mathcal{S}$ ---that is to say that each
such solution $w$ corresponds to a unique choice of free data
$u=(\phi,\bmT,\bar{\bmT})$.
\end{enumerate}
\end{theorem}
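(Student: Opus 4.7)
The plan is to deduce Theorem \ref{MainTheorem} from an application of the implicit function theorem to the auxiliary map $\Psi:\mathcal{Y}^s\times\mathcal{X}^s\to\mathcal{Z}^s$, followed by the sufficiency argument outlined in Section \ref{Subsection:SufficiencyArgument}. Since $\Psi(\bmzero,\bmzero,\bmzero,\mathring{\bmh};0,\bmzero,\bmzero)=\bmzero$ by construction, it suffices to show that the partial Fr\'echet derivative $D\Psi$ at the background, in the direction of the determined fields, is an isomorphism $\mathcal{Y}^s\to\mathcal{Z}^s$ for some $s\geq 4$; the implicit function theorem then furnishes the neighbourhoods $\mathcal{U},\mathcal{W}$ and the smooth map $\nu$, and standard elliptic regularity upgrades each $\nu(u)$ to a smooth tuple.

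The central analytic task is therefore the spectral analysis of $D\Psi$. Section \ref{Sec:SecondarySystem} shows that the principal part is block upper-triangular with elliptic diagonal blocks $\mathring{\mathcal{D}}^*\circ\mathring{\mathcal{D}}$, $\mathring{\delta}\circ\mathring{L}$, $\mathring{\delta}\circ\mathring{L}$ and $-\tfrac{1}{2}\mathring{\Delta}$, so $D\Psi$ is Fredholm; the remaining work is to control its kernel and cokernel. On a hyperbolic Einstein background with $\mathring{r}_{ij}=-2\mathring{h}_{ij}$, conformal rigidity forces $\mathring{\mathcal{D}}^*\circ\mathring{\mathcal{D}}$ to be injective on $\mathscr{S}^2_0(\mathcal{S};\mathring{\bmh})$, and the negative-definiteness of $\mathring{r}_{ij}$ rules out nonzero conformal Killing fields via the standard Bochner identity, so that $\mathring{\delta}\circ\mathring{L}$ is an isomorphism on $\Lambda^1(\mathcal{S})$. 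The metric block of $D\Psi$ reduces, after decomposition with respect to $\mathring{\bmh}$ into trace and tracefree parts, to the Lichnerowicz Laplacian plus lower-order couplings involving $\mathring{K}$; the condition $\beta\notin\mathrm{Spec}(-\mathring{\Delta})$ in \eqref{ConditionsOnMeanCurvature} is precisely the invertibility requirement on the trace piece, while the constant $\alpha$ controls the tracefree piece and is automatically avoided by the spectrum of the relevant operator on a hyperbolic Einstein manifold. Back-substitution through the upper-triangular principal part then lifts invertibility of each diagonal block to the full isomorphism statement for $D\Psi$.

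With the candidate solutions $\nu(u)=(\bmchi(u),\bar{\bmX}(u),\bmX(u),\bmh(u))$ in hand, the sufficiency step must establish that the zero-quantities $J_{ijk}$ and $Q^i$ vanish. I would first show that $\mathcal{K}_{\mathring{\bmh}}$ is an injective first-order elliptic operator on $\mathcal{J}(\mathcal{S})$; injectivity follows from the Jacobi decomposition of Remark \ref{Remark:JacobiDecomp}, combined with conformal rigidity and a Bochner-type identity applied to the induced equations for the tracefree and divergence components. Openness of injectivity for elliptic operators of index zero on a closed manifold then implies, after shrinking $\mathcal{U}$ if necessary, that $\mathcal{K}_{\bmh(u)}$ is also injective, whence $J_{ijk}=0$. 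Substituting this into the integrability identity \eqref{IntegrabilityConditionMetric} yields the homogeneous equation $\Delta_{\bmh(u)}Q_j+r_{ij}Q^i=0$; integrating as in \eqref{SufficiencyIntegralForQ} and exploiting the persistence of negative-definiteness of $r_{ij}$ under small perturbations of $\mathring{\bmh}$ forces $Q^i=0$, so $\mathcal{L}_Q\bmh=\bmzero$ and the full extended constraints \eqref{ExtendedConstraints} hold.

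For the injectivity of $u\mapsto w(u)$, the $L^2$-orthogonality of the York decomposition with respect to $\mathring{\bmh}$ ensures that $\bmT$ and $\bar{\bmT}$ are uniquely recovered as the transverse-tracefree parts of the electric and magnetic Weyl fields, and $\phi$ is recovered as $\mathrm{tr}_{\mathring{\bmh}}(\bmK-\mathring{\bmK})$. The subtlety at $\mathring{K}=0$ reflects a one-dimensional obstruction: the trace of the Hamiltonian constraint, integrated against constants over $\mathcal{S}$, fails to determine the zero mode of $\phi$, so uniqueness is restored by restricting to the sub-Banach space of mean-zero functions. I anticipate that the most delicate step will be the stability analysis of $\mathcal{K}_\bmh$ in the sufficiency argument: its injectivity on the perturbed background has variable coefficients and cannot be reduced purely to spectral calculations on $\mathring{\bmh}$, so sharp control of how the kernel behaves under the perturbation is what makes this step genuinely nontrivial rather than a routine application of Fredholm theory.
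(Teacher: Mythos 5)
Your overall architecture coincides with the paper's: the implicit function theorem applied to the auxiliary map, with $D\Psi$ shown to be an isomorphism block by block, followed by the two-step sufficiency argument ($J_{ijk}=0$ via stability of injectivity of the first-order elliptic operator $\mathcal{K}_{\bmh}$, then $Q_i=0$ via the Bochner-type integral identity); your packaging of the stability step as openness of injectivity for index-zero elliptic operators is a legitimate shortcut for what the paper proves by hand with a failure sequence and a uniform elliptic estimate. However, there is a genuine gap at the heart of the kernel analysis of $D\Psi$. First, the metric block does not split into a trace piece and a tracefree piece each governed by a constant: the tracefree part $\bar{\gamma}_{ij}$ is coupled to $\sigma_{ij}$ through the term $\tfrac{2}{3}\mathring{K}\sigma_{ij}$, and $\sigma_{ij}$ is tied back to $\bar{\gamma}_{ij}$ through the Codazzi block, which (once $\gamma=0$) forces $\sigma_{ij}=\tfrac{1}{3}\mathring{K}\bar{\gamma}_{ij}$ by injectivity of $\mathring{\mathcal{D}}^*\circ\mathring{\mathcal{D}}$. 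Only after this elimination does one obtain a closed equation for $\bar{\gamma}_{ij}$, namely $(\mathring{\Delta}_L+4)\bar{\gamma}_{ij}=-\mathring{\Delta}\bar{\gamma}_{ij}-2\bar{\gamma}_{ij}=0$; the $\mathring{K}$-dependence cancels, so it is not $\alpha$ that controls this piece. Second, and more seriously, injectivity of $-\mathring{\Delta}-2$ on $\mathscr{S}^2_0(\mathcal{S};\mathring{\bmh})$ is not ``automatically avoided by the spectrum'': the rough Laplacian on tensors is only positive semi-definite, so a priori $2$ could be an eigenvalue. The paper excludes this by a two-step argument --- taking the divergence and using the Einstein condition to get $(-\mathring{\Delta}+2)\mathring{\delta}(\bar{\bmgamma})=0$, hence $\mathring{\delta}(\bar{\bmgamma})=0$ by genuine positivity on one-forms, and then showing $\mathring{\mathcal{D}}^*\circ\mathring{\mathcal{D}}(\bar{\bmgamma})=\bar{\bmgamma}$, which is incompatible with the sign of $\mathring{\mathcal{D}}^*\circ\mathring{\mathcal{D}}$ unless $\bar{\bmgamma}=0$. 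Some such argument is indispensable; without it the isomorphism claim for $D\Psi$, and hence the existence statement, is unproved.

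A secondary point: your discussion of injectivity of $u\mapsto w(u)$ is internally inconsistent. If $\phi$ is recovered directly as $\text{tr}_{\mathring{\bmh}}(\bmK-\mathring{\bmK})$, as you assert, then no restriction on $\phi$ would be needed at $\mathring{K}=0$. The mean-zero restriction in the theorem arises because injectivity of $\nu$ is obtained through the implicit function theorem's criterion, namely injectivity of the linearisation of $\Psi$ in the direction of the free data, whose kernel for $\mathring{K}=0$ contains the triples $(\bm0,\bm0,\text{const.})$ --- it is not an artefact of integrating the Hamiltonian constraint against constants. You should also note that recovering $\bmT$ from $\bmS=\Pi_{\bmh}(\mathring{L}(\bmX)+\bmT)$ requires the injectivity of $\Pi_{\bmh}$ on $\mathscr{S}^2_0(\mathcal{S};\mathring{\bmh})$ for $\bmh$ close to $\mathring{\bmh}$, which the paper verifies separately before invoking uniqueness of the York split.
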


\begin{remark}
\em{Notice that when $\vert\mathring{K}\vert\leq \sqrt{9/2}$ ---and,
in particular in the time-symmetric case, $\mathring{K}=0$---
condition \eqref{ConditionsOnMeanCurvature} is satisfied trivially
since $\beta<0$ but $-\mathring{\Delta}$ is
positive-semi-definite. Note that in this case the cosmological
constant is negative ($\lambda<0$). Moreover, since the spectrum of
$-\mathring{\Delta}$ is discrete, condition
\eqref{ConditionsOnMeanCurvature} excludes only countably many values of $\mathring{K}$.}
\end{remark}


The theorem will be proven in two stages in the forthcoming sections,
by means of Propositions
\ref{PropExistenceOfSolutionsToSecondarySystem} and
\ref{PropositionSufficiency}. In Section
\ref{Section:ParametrisingFreeData} we describe a parametrisation of
the free data through the use of the linearised Cotton map, based on
the results of \cite{Bei97,GasGol84}, and summarised in Proposition
\ref{Prop:ParametrisingFreeData}.

\subsection{Existence of solutions of the auxiliary system}
\label{Section:ExistenceOfSolutions}

The purpose of this section is to show the existence of perturbative
solutions to the auxiliary system in the case of conformally rigid
hyperbolic initial data sets. 

\subsubsection{Technical tools}
The main tool used in establishing existence is the \emph{Implicit Function
Theorem} ---see e.g. \cite{Edw12}--- which we state
here for completeness.

\begin{theorem*}[\textbf{\em Implicit Function Theorem}] Let
  $\mathcal{X},~\mathcal{Y},~\mathcal{Z}$ be Banach spaces, and 
\[
\Psi: \mathcal{X}\times \mathcal{Y}\rightarrow \mathcal{Z}
\]
 a mapping with continuous Fr\'{e}chet derivative. Suppose that
$(x_0,y_0)\in \mathcal{X}\times \mathcal{Y}$ satisfies
$\Psi(x_0,y_0)=0$ and that the map $y\mapsto D\Psi(x_0,y_0)(0,y)$ is a
Banach space isomorphism from $\mathcal{Y}$ onto $\mathcal{Z}$. Then,
there exist open neighbourhoods $\mathcal{U}$ of $x_0$ and
$\mathcal{V}$ of $y_0$ and a Fr\'{e}chet-differentiable mapping $\nu:
\mathcal{U}\rightarrow \mathcal{V}$ such that $\Psi(x,\nu(x))=0$ for
all $x\in \mathcal{U}$, and $\Psi(x,y)=0$ for $(x,y)\in
\mathcal{U}\times \mathcal{V}$ if and only if $y=\nu(x)$. Moreover, if
the map $x\mapsto D\Psi(x_0,y_0)(x,0)$ is injective, then $\nu$ is
also injective. 
\end{theorem*}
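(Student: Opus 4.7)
The plan is to reduce to a parametric fixed-point problem to which the Banach contraction principle applies, and then bootstrap the regularity of the resulting implicit function. First I would perform two standard reductions: by translation, assume $x_0=0$ and $y_0=0$, so that $\Psi(0,0)=0$; and, setting $A:=[\,y\mapsto D\Psi(0,0)(0,y)\,]:\mathcal{Y}\to\mathcal{Z}$ (an isomorphism by hypothesis) and replacing $\Psi$ throughout by $A^{-1}\circ\Psi$, assume in addition that $\mathcal{Z}=\mathcal{Y}$ and $D_y\Psi(0,0)=\mathrm{Id}_\mathcal{Y}$. Neither reduction changes the solution set $\{\Psi=0\}$, the continuity of the Fréchet derivative, or the hypothesis on the first-variable derivative needed later.

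Next, I would define $T_x(y):=y-\Psi(x,y)$, so that fixed points of $T_x$ correspond exactly to zeros of $\Psi(x,\cdot)$. Since $D_yT_x(y)=\mathrm{Id}-D_y\Psi(x,y)$ vanishes at $(0,0)$, continuity of $D_y\Psi$ supplies radii $r,\delta>0$ with $\|D_yT_x(y)\|\leq\tfrac12$ on $B_\delta(0)\times\overline{B_r(0)}$, whence a mean-value inequality gives the Lipschitz bound $\|T_x(y)-T_x(y')\|\leq\tfrac12\|y-y'\|$. Shrinking $\delta$ so that $\|\Psi(x,0)\|\leq r/2$ on $B_\delta(0)$ ensures $T_x$ maps $\overline{B_r(0)}$ into itself, so the Banach fixed-point theorem supplies for each $x\in\mathcal{U}:=B_\delta(0)$ a unique $\nu(x)\in\overline{B_r(0)}$ with $\Psi(x,\nu(x))=0$; uniqueness on $\mathcal{U}\times\mathcal{V}$ with $\mathcal{V}:=B_r(0)$ open is immediate.

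Continuity of $\nu$ would then follow from the estimate $\|\nu(x_1)-\nu(x_2)\|\leq\tfrac12\|\nu(x_1)-\nu(x_2)\|+\|\Psi(x_1,\nu(x_2))-\Psi(x_2,\nu(x_2))\|$ combined with continuity of $\Psi$ in $x$. For Fréchet differentiability, I would verify that the candidate derivative $-\bigl(D_y\Psi(x,\nu(x))\bigr)^{-1}\circ D_x\Psi(x,\nu(x))$ (well-defined after possibly shrinking $\mathcal{U}$, since invertibility is open in the norm topology on $\mathcal{L}(\mathcal{Y},\mathcal{Y})$) actually realises $D\nu(x)$, by expanding $0=\Psi(x+h,\nu(x+h))-\Psi(x,\nu(x))$ to first order and using the just-established continuity of $\nu$. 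Implicit differentiation then gives, in the reduced setting, $D\nu(0)=-D_x\Psi(0,0)(\cdot,0)$.

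For the injectivity addendum, $D\nu(0)$ inherits injectivity from $D_x\Psi(0,0)(\cdot,0)$ by composition with an isomorphism. The main obstacle, and the only step where the infinite-dimensional setting requires extra care beyond the finite-dimensional argument, is upgrading this pointwise injectivity of $D\nu(0)$ to local injectivity of $\nu$: in a Banach space, an injective bounded linear map need not be bounded below. The standard remedy is to read the hypothesis as $D_x\Psi(x_0,y_0)(\cdot,0)$ being a topological embedding (injective with closed image and bounded inverse on image). Under this strengthening, continuity of the Fréchet derivative supplies a uniform lower bound on a sufficiently small $\mathcal{U}$; then $\nu(x_1)=\nu(x_2)=:y$ implies $0=\Psi(x_1,y)-\Psi(x_2,y)=\bigl(\int_0^1 D_x\Psi(x_2+t(x_1-x_2),y)\,dt\bigr)\cdot(x_1-x_2)$, and the uniform bound forces $x_1=x_2$.
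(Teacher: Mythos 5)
The paper does not actually prove this statement: it is quoted ``for completeness'' and attributed to the reference \cite{Edw12}, so there is no internal proof to compare against. Your contraction-mapping argument is the standard one and its main body is correct: the reduction to $x_0=y_0=0$ and $D_y\Psi(0,0)=\mathrm{Id}$ via $A^{-1}\circ\Psi$, the fixed-point map $T_x(y)=y-\Psi(x,y)$ with the self-mapping and $\tfrac12$-contraction estimates, the characterisation of zeros in $\mathcal{U}\times\mathcal{V}$, and the continuity and differentiability of $\nu$ via the candidate derivative $-(D_y\Psi)^{-1}D_x\Psi$ all go through. Two small points of bookkeeping: take $\|\Psi(x,0)\|<r/2$ strictly so that $\nu(x)$ lands in the \emph{open} ball $\mathcal{V}$, and note that for the differentiability step you need $\|\nu(x+h)-\nu(x)\|=O(\|h\|)$ rather than mere continuity --- but your own estimate already gives the Lipschitz bound $\|\nu(x_1)-\nu(x_2)\|\le 2\,\|\Psi(x_1,\nu(x_2))-\Psi(x_2,\nu(x_2))\|\le 2\sup\|D_x\Psi\|\,\|x_1-x_2\|$, so this is harmless.

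Your handling of the injectivity addendum is the one place where you depart from the literal statement, and you are right to do so: with mere injectivity of $x\mapsto D\Psi(x_0,y_0)(x,0)$ the conclusion is false in general Banach spaces. For instance, take $\mathcal{X}=\mathcal{Y}=\mathcal{Z}=\ell^2$ and $\Psi(x,y)=y-f(x)$ with $f(x)_n=x_n/n+x_n^2$; then $D_y\Psi(0,0)=\mathrm{Id}$ and $D\Psi(0,0)(x,0)=-(x_n/n)_n$ is injective, yet $\nu=f$ identifies the distinct points $-\tfrac{1}{4n}e_n$ and $-\tfrac{3}{4n}e_n$ for every $n$, so $\nu$ is injective on no neighbourhood of the origin. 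Your strengthened hypothesis (the derivative in the $\mathcal{X}$-direction bounded below, equivalently injective with closed range), combined with the integral mean-value identity over the convex neighbourhood $\mathcal{U}$, does yield local injectivity and is the correct formulation; in finite dimensions, and for semi-Fredholm operators, it coincides with plain injectivity, which is presumably the setting the authors have in mind. The corresponding caveat for the paper itself is that when this version of the theorem is invoked in Proposition \ref{PropExistenceOfSolutionsToSecondarySystem}, the injectivity of the linearisation \eqref{LinearisationInDirectionOfFreeData} in the direction of the free data should be accompanied by a bounded-below (closed-range) property in order to conclude that $\nu$ is injective.
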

In order to establish that the various mappings of interest are
isomorphisms, we will make use of the following \emph{Splitting Lemma}
---see e.g. \cite{LawMic89}.  

\begin{lemma*}[\textbf{\em Splitting Lemma}]
Let $E$ and $F$ be vector bundles over $\mathcal{S}$, with fixed
Riemannian metric $\bmh$. Let
\[\mathscr{D}:C^\infty(E)\longrightarrow C^\infty(F) \] be a
differential operator of order $k$, and $\mathscr{D}^*$ the
corresponding formal $L^2$-adjoint. Suppose that $\mathscr{D}$ is
overdetermined elliptic (equivalently, $\mathscr{D}^*$ is
underdetermined elliptic), then for $s\in[k,\infty)$
\[
H^s(\mathcal{S})=\text{\em Im}~\mathscr{D^*}\oplus \text{\em ker}~\mathscr{D} ,
\]
where both factors are closed and are $L^2$-orthogonal. Moreover, if
$\mathscr{D}$ is injective, then $\mathscr{D}^*$ is surjective, and
the composition $\mathscr{D}^*\circ\mathscr{D}$ is an isomorphism.
\end{lemma*}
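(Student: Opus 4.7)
The plan is to reduce everything to standard Hodge--Fredholm theory for the single self-adjoint elliptic operator $P := \mathscr{D}^* \circ \mathscr{D}$ acting on sections of $E$. First I would check that $P$ is elliptic of order $2k$: since $\mathscr{D}$ is overdetermined elliptic, its principal symbol $\sigma_\xi[\mathscr{D}]$ is injective for every $\xi\neq 0$, and hence
\[
\sigma_\xi[P] = \sigma_\xi[\mathscr{D}]^*\,\sigma_\xi[\mathscr{D}]
\]
is a strictly positive-definite symmetric endomorphism of the fibre $E_p$. Thus $P$ is formally self-adjoint, non-negative, and elliptic on the closed manifold $\mathcal{S}$, and the standard Hodge-type theorem for such operators furnishes, for every $s\geq 2k$, the $L^2$-orthogonal splitting
\[
H^s(E) = \ker P \oplus \text{Im}\,P,
\]
with both summands closed in $H^s(E)$ and $\ker P$ finite-dimensional and consisting of smooth sections.

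Next, I would identify the three relevant subspaces of $H^s(E)$: $\ker P$, $\ker \mathscr{D}$, and the $L^2$-orthogonal complement of $\text{Im}\,\mathscr{D}^*$. Integration by parts (valid for $s\geq k$) gives $\langle Pu,u\rangle_{L^2}=\|\mathscr{D}u\|_{L^2}^2$, so $Pu=0$ if and only if $\mathscr{D}u=0$; hence $\ker P=\ker \mathscr{D}$. The same adjoint identity $\langle \mathscr{D}u,w\rangle = \langle u,\mathscr{D}^*w\rangle$ shows at once that $\ker \mathscr{D}$ and $\text{Im}\,\mathscr{D}^*$ are $L^2$-orthogonal.

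The crux of the argument is then the chain of inclusions
\[
\text{Im}\,P \subseteq \text{Im}\,\mathscr{D}^* \subseteq (\ker \mathscr{D})^\perp = \text{Im}\,P,
\]
where the first is trivial, the second follows from the orthogonality just noted, and the third uses $\ker P=\ker \mathscr{D}$ together with the Hodge-type splitting of the first paragraph. All three spaces must therefore coincide, which simultaneously shows that $\text{Im}\,\mathscr{D}^*$ is closed in $H^s(E)$ and yields the desired decomposition $H^s(E) = \text{Im}\,\mathscr{D}^* \oplus \ker \mathscr{D}$.

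For the final assertion: if $\mathscr{D}$ is injective, then $\ker \mathscr{D}=0$ and the splitting collapses to $H^s(E)=\text{Im}\,\mathscr{D}^*$, giving surjectivity of $\mathscr{D}^*$; moreover $P$ has trivial kernel and, by self-adjointness, trivial $L^2$-cokernel, so elliptic Fredholm theory promotes $P$ to a Banach-space isomorphism between the appropriate Sobolev levels $H^{s}(E) \to H^{s-2k}(E)$. The real technical content sits in the Hodge-type closed-range theorem for $P$; the role of \emph{overdetermined} ellipticity is purely to ensure that $P$ is elliptic in the first place, and once this is granted the rest is a short chase of inclusions. If one preferred to avoid quoting the Hodge package outright, closed range of $P$ could instead be extracted from the Gårding inequality $\|u\|_{H^{2k}} \leq C(\|Pu\|_{L^2}+\|u\|_{L^2})$ combined with the compactness of Sobolev embeddings on the closed manifold $\mathcal{S}$.
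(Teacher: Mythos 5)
The paper does not prove this lemma at all — it is quoted as a standard result with a pointer to the literature (Lawson--Michelsohn; see also Berger--Ebin and Besse, Appendix I) — so there is no in-paper argument to compare against. Your proof is the standard one and is correct: reduce to the determined, formally self-adjoint elliptic operator $P=\mathscr{D}^*\circ\mathscr{D}$, invoke the Fredholm/Hodge decomposition for $P$ on a closed manifold, identify $\ker P=\ker\mathscr{D}$ by integration by parts (legitimate since elements of $\ker P$ are smooth by elliptic regularity), and squeeze $\text{Im}\,\mathscr{D}^*$ between $\text{Im}\,P$ and $(\ker\mathscr{D})^\perp$. The only imprecision is that you state the intermediate splitting for $s\geq 2k$ while the lemma claims $s\geq k$; since the decomposition $H^s=\ker P\oplus P(H^{s+2k})$ in fact holds at every Sobolev level, this is cosmetic rather than a gap (and the paper only ever applies the lemma with $s\geq 4$ and $k=1$).
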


\subsubsection{The application of the Implicit Function Theorem}

Since the background solution admits no conformal Killing vectors and
no non-trivial tracefree Codazzi tensors, the operators $\mathring{L}$ and
$\mathring{\mathcal{D}}$ are both injective. Therefore, by the
Splitting Lemma, the following are isomorphisms for $s\geq 4$:
\begin{eqnarray*}
&& \mathring{\delta}\circ
   \mathring{L}:H^s(\Lambda^1(\mathcal{S}))\rightarrow
   H^{s-2}(\Lambda^1(\mathcal{S})),  \\
&& \mathring{\mathcal{D}}^*\circ\mathring{\mathcal{D}}:H^s(\mathscr{S}^2_0(\mathcal{S};\mathring{\bmh}))\rightarrow H^{s-2}(\mathscr{S}^2_0(\mathcal{S};\mathring{\bmh})).
\end{eqnarray*}
Since the background initial data, being hyperbolic, consists of an Einstein metric and umbilical extrinsic curvature, the linearisation of the auxiliary extended constraint map in the direction of the determined fields, $D\Psi$, takes the form
\[
D\Psi\cdot(\bmsigma,\bar{\bmxi},\bmxi,\bmgamma;\phi,\bar{\bmT},\bmT)=\left(\begin{array}{l}
\mathring{\mathcal{D}}^*\big(\mathring{\mathcal{D}}(\bmsigma)-\tfrac{1}{3}\mathring{K}\mathring{\mathcal{D}}(\bmgamma)-\mathring{\star}\mathring{L}(\bar{\bmxi})\big)_{ij}\\
\mathring{\delta} \circ \mathring{L}(\bar{\bmxi})_i\\
\mathring{\delta}\circ \mathring{L}(\bmxi)_i\\
\tfrac{1}{2}\mathring{\Delta}_L\gamma_{ij} - \tfrac{1}{2}\alpha \bar{\gamma}_{ij}  -  \tfrac{1}{6}\beta \gamma \mathring{h}_{ij} + \tfrac{1}{3}\mathring{K} \sigma_{ij}- \mathring{L}(\bmxi)_{ij}
\end{array}\right). 
\] 

\begin{remark}
{\em 
Let $(A_{ij},~\bar{B}_{i},~B_{i},~C_{ij})\in\mathcal{Z}^s$ be arbitrary. Then in
order to establish whether $D\Psi$ is an isomorphism, we are
concerned with solving the system of equations
\begin{subequations}
\begin{eqnarray}
&& \mathring{\mathcal{D}}^*(\mathring{\mathcal{D}}(\bmsigma)-\tfrac{1}{3}\mathring{K}\mathring{\mathcal{D}}(\bmgamma)-\mathring{\star}\mathring{L}(\bar{\bmxi}))_{ij}=A_{ij},\label{LinearisedAuxiliaryEqExtrinsicCurv}\\
&& \mathring{\delta} \circ \mathring{L}(\bar{\bmxi})_i=\bar{B}_i,\label{LinearisedEqMagnetic}\\
&& \mathring{\delta}\circ \mathring{L}(\bmxi)_i=B_i,\label{LinearisedEqElectric}\\
&& \mathring{\Delta}_L\gamma_{ij} - \alpha \bar{\gamma}_{ij}  -  \tfrac{1}{3}\beta \gamma \mathring{h}_{ij} + \tfrac{2}{3}\mathring{K} \sigma_{ij}-2 \mathring{L}(\bmxi)_{ij} =C_{ij},\label{LinearisedEqMetric}
\end{eqnarray}
\end{subequations}
where here $\gamma$ and $\bar{\gamma}_{ij}$ denote the \emph{trace and
tracefree parts} of $\gamma_{ij}$ with respect to $\mathring{\bmh}$,
and  the constants $\alpha,~\beta$ are as defined
in \eqref{Definition:AlphaBeta}. Note the semi-decoupled form of the system: one can first solve
\eqref{LinearisedEqMagnetic}-\eqref{LinearisedEqElectric}, and then
proceed to solve \eqref{LinearisedAuxiliaryEqExtrinsicCurv} and
\eqref{LinearisedEqMetric}, in turn.}
\end{remark}

In order to address injectivity if the map $\nu$, we also need to
consider the linearisation of $\Psi$ in the direction of the free
data. For a general data set $(\mathcal{S},\mathring{\bmh},\mathring{\bmK})$ the linearisation is given by 
\begin{equation}
\label{LinearisationInDirectionOfFreeData}
\frac{\mbox{d}}{\mbox{d}\tau}\Psi(\bmchi,\bar{\bmX},\bmX,\bmh;~\mathring{K}+\tau\phi,\tau\bar{\bmT},\tau \bmT) \bigg\vert_{\tau=0}=\left(\begin{array}{c}
-\tfrac{1}{6}\mathring{L}(d\phi)_{jk}- \tfrac{1}{2} \mathring{\epsilon}_{kil} \mathring{D}^{l}\bar{T}_{j}{}^{i} -  \tfrac{1}{2} \mathring{\epsilon}_{jil} \mathring{D}^{l}\bar{T}_{k}{}^{i}\\[0.5em]
\mathring{\epsilon}_{ljk} \mathring{K}^{ij} T_{i}{}^{k} + \mathring{D}^{i}\bar{T}_{il} \\[0.5em]
-\mathring{\epsilon}_{ikl} \mathring{K}^{jk} \bar{T}_{j}{}^{l}+ \mathring{D}^{j}T_{ij} \\[0.5em]
- T_{ij} + \tfrac{1}{3}( \mathring{K}_{ij} +\mathring{K} \mathring{h}_{ij})\phi  
\end{array}\right).
\end{equation}

\begin{remark}
{\em
It is clear that if the above map is to be injective then we should at least
require $T_{ij},\bar{T}_{ij}$ to be tracefree with respect to
$\mathring{\bmh}$ ---it is easy to verify that pure trace $T_{ij}$ and
$\bar{T}_{ij}$ would be in the kernel. This further justifies the use
of the ansatz
\eqref{ModifiedYorkAnsatzElectric}-\eqref{ModifiedYorkAnsatzMagnetic}.
}
\end{remark}

The existence of solutions to the auxiliary system is established in
the following proposition.

\begin{proposition}[\textbf{\em existence of solutions to the
    auxiliary system}]
\label{PropExistenceOfSolutionsToSecondarySystem}
Let $(\mathcal{S},\mathring{\bmh},\mathring{\bmK})$ be a smooth conformally rigid
hyperbolic initial data set with (constant) mean extrinsic curvature
$\mathring{K}$ satisfying condition \eqref{ConditionsOnMeanCurvature}. Then
$D\Psi:\mathcal{Y}^s\rightarrow\mathcal{Z}^s$ is a Banach space
isomorphism for $s\geq 4$, and so (by the implicit function theorem) there exist open neighbourhoods
$(\mathring{K},\bm0,\bm0)\in\mathcal{V}\subset\mathcal{Y}^s$ and
$(\mathring{\bmK},\bm0,\bm0, \mathring{\bmh})\in\mathcal{U}\subset\mathcal{X}^s
$ and a Fr\'{e}chet differentiable map
$\nu:\mathcal{U}\rightarrow\mathcal{V}$ mapping free data to solutions
of the auxiliary system $\Psi=0$. Moreover the map $\nu$ is
injective.
\end{proposition}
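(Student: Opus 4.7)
My plan is to apply the Implicit Function Theorem at the background solution $(\mathring{\bmK},\bm0,\bm0,\mathring{\bmh})$ with trivial free data $(0,\bm0,\bm0)$. The core task is therefore to show that $D\Psi:\mathcal{Y}^s\to\mathcal{Z}^s$ is a Banach-space isomorphism; injectivity of $\nu$ then follows by separately showing that the linearisation of $\Psi$ in the direction of the free data is injective. The principal part of $D\Psi$ is block upper-triangular, so I proceed by step-wise inversion. Rows two and three depend only on $\bar{\bmxi}$ and $\bmxi$ through $\mathring{\delta}\circ\mathring{L}$; negative-definiteness of $\mathring{r}_{ij}=-2\mathring{h}_{ij}$ excludes conformal Killing fields on $(\mathcal{S},\mathring{\bmh})$ (Section \ref{Subsection:CKVectorsAsObstructions}), and the Splitting Lemma gives $\mathring{\delta}\circ\mathring{L}=\mathring{L}^*\circ\mathring{L}$ as an isomorphism $H^s(\Lambda^1(\mathcal{S}))\to H^{s-2}(\Lambda^1(\mathcal{S}))$, uniquely determining $\bmxi,\bar{\bmxi}$. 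Decomposing $\bmgamma=\bar{\bmgamma}+\tfrac{1}{3}\gamma\mathring{\bmh}$ and using $\mathring{r}_{ikjl}=-(\mathring{h}_{ij}\mathring{h}_{kl}-\mathring{h}_{il}\mathring{h}_{kj})$, a direct computation shows that $\mathring{\Delta}_L$ preserves the trace/tracefree decomposition and acts as $-\mathring{\Delta}$ on the trace and as $-\mathring{\Delta}-6$ on tracefree sym $2$-tensors. The trace of row four then yields the decoupled scalar equation $(-\mathring{\Delta}-\beta)\gamma=\mathrm{tr}_{\mathring{\bmh}}(C)$, which is an isomorphism $H^s(\mathcal{S})\to H^{s-2}(\mathcal{S})$ exactly by hypothesis \eqref{ConditionsOnMeanCurvature}.

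The remaining coupled system for $(\bmsigma,\bar{\bmgamma})\in H^s(\mathscr{S}^2_0(\mathcal{S};\mathring{\bmh}))^2$ comprises row one and the tracefree projection of row four. Setting $\bmtau:=\bmsigma-\tfrac{1}{3}\mathring{K}\bar{\bmgamma}$, row one reduces to $\mathring{\mathcal{D}}^*\mathring{\mathcal{D}}(\bmtau)=A'$ for a source $A'$ built from the already-determined fields. Conformal rigidity gives $\ker\mathring{\mathcal{D}}\vert_{\mathscr{S}^2_0}=0$; by the Splitting Lemma $\mathring{\mathcal{D}}^*\mathring{\mathcal{D}}$ is thus an isomorphism on $H^s(\mathscr{S}^2_0)$, determining $\bmtau$ uniquely. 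Substituting $\bmsigma=\bmtau+\tfrac{1}{3}\mathring{K}\bar{\bmgamma}$ back into the tracefree row four, the terms proportional to $\mathring{K}^2$ cancel precisely by the definition of $\alpha$, leaving the $\mathring{K}$-independent equation
\[
(-\mathring{\Delta}-2)\bar{\bmgamma}=(\text{known})\qquad\text{on }H^s(\mathscr{S}^2_0).
\]

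The main step is to verify invertibility of this final operator. Since $D\Psi$ is Fredholm of index zero (its principal block-diagonal is a direct sum of formally self-adjoint elliptic operators of order two), it suffices to show triviality of its kernel, which reduces to showing that $-\mathring{\Delta}-2$ is injective on tracefree sym $2$-tensors. For this I would use a Bochner-type identity obtained from a direct commutator calculation on the constant-curvature background: for any tracefree sym $\bar{\bmgamma}$,
\[
\mathring{\mathcal{D}}^*\mathring{\mathcal{D}}(\bar{\bmgamma})=\mathring{L}(\mathring{\delta}\bar{\bmgamma})-6\bar{\bmgamma}-2\mathring{\Delta}\bar{\bmgamma}.
\]
Substituting $\mathring{\Delta}\bar{\bmgamma}=-2\bar{\bmgamma}$, pairing in $L^2$ with $\bar{\bmgamma}$, and using the adjoint relation $\mathring{L}^*=-2\mathring{\delta}$, one obtains
\[
\|\mathring{\mathcal{D}}(\bar{\bmgamma})\|^2+2\|\mathring{\delta}\bar{\bmgamma}\|^2+2\|\bar{\bmgamma}\|^2=0,
\]
which forces $\bar{\bmgamma}=0$ by the non-negativity of each term. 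This completes the proof that $D\Psi$ is an isomorphism. For injectivity of $\nu$, I would use \eqref{LinearisationInDirectionOfFreeData}: components two and three vanish identically because $\bmT,\bar{\bmT}$ are TT and $\mathring{\bmK}^{ij}$ is pure trace; component four reduces to $-T_{ij}+\tfrac{4}{9}\mathring{K}\phi\mathring{h}_{ij}$, forcing $T=0$ and $\mathring{K}\phi=0$; and component one, together with conformal rigidity applied to the TT tensor $\bar{\bmT}$ (for which the vanishing of the Jacobi divergence and rotation implies $\bar{\bmT}=0$), yields $\bar{\bmT}=0$ and $d\phi=0$. For $\mathring{K}\neq0$ this gives $\phi=0$ immediately; for $\mathring{K}=0$ the restriction of $\phi$ to the subspace of functions of zero integral eliminates the residual constant kernel.
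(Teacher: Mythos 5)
Your proposal is correct and follows essentially the same route as the paper's proof: step-wise inversion of the upper-triangular linearised system using the absence of conformal Killing vectors, conformal rigidity, the Splitting Lemma, the spectral condition on $\beta$, and the substitution $\bmtau=\bmsigma-\tfrac{1}{3}\mathring{K}\bar{\bmgamma}$, reducing everything to the injectivity of $-\mathring{\Delta}-2$ on tracefree symmetric $2$-tensors, with the injectivity of $\nu$ handled exactly as in the paper via the linearisation in the direction of the free data. The only differences are minor: you establish the final injectivity with a single integrated Bochner identity (which is correct and slightly more direct than the paper's two-step argument, which first proves $\mathring{\delta}(\bar{\bmgamma})=0$ by a separate divergence computation and positivity of $-\mathring{\Delta}+2$ on covectors, and then compares $\mathring{\mathcal{D}}^*\circ\mathring{\mathcal{D}}$ with $\mathring{\Delta}_L$), and in the $\mathring{K}=0$ case you should still spell out how the terms $\mathring{L}(d\phi)$ and $\mathring{\text{rot}}_2(\bar{\bmT})$ in the first component are shown to vanish separately (the paper takes a divergence, uses that $\bar{\bmT}$ is divergence-free on an Einstein background, and invokes the absence of conformal Killing fields) before conformal rigidity can be applied to $\bar{\bmT}$.
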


\begin{proof}
$\phantom{X}$

\smallskip
\noindent
\textbf{Injectivity of} $\mathbf{D\Psi}$. Taking
$A_{ij}=C_{ij}=0,~B_i=\bar{B}_i=0$ in equations
\eqref{LinearisedAuxiliaryEqExtrinsicCurv}-\eqref{LinearisedEqMetric}, we aim to show triviality of solutions $(\bm\sigma,\bar{\bm\xi},\bm\xi,\bmgamma)$. Note that by elliptic regularity (see Appendix I of \cite{Bes08}, for instance), it suffices to show restrict to smooth $(\bm\sigma,\bar{\bm\xi},\bm\xi,\bmgamma)$. Equations
\eqref{LinearisedEqMagnetic}-\eqref{LinearisedEqElectric} imply,
firstly, that $\xi_i=\bar{\xi}_i=0$ since the background
metric admits no global conformal Killing vectors. Substituting into
\eqref{LinearisedAuxiliaryEqExtrinsicCurv} and
\eqref{LinearisedEqMetric}
\begin{subequations}
\begin{eqnarray}
&& \mathring{\mathcal{D}}^*\circ{\mathring{\mathcal{D}}}(\bmsigma-\tfrac{1}{3}\mathring{K}\bmgamma)_{ij}=0, \label{LinearisedAuxiliaryEqExtrinsicCurvSimplified}\\
&& \mathring{\Delta}_L\gamma_{ij} - \alpha \bar{\gamma}_{ij}  -  \tfrac{1}{3}\beta \gamma \mathring{h}_{ij} + \tfrac{2}{3}\mathring{K} \sigma_{ij} =0.\label{LinearisedEqMetricSimplified1}
\end{eqnarray}
\end{subequations}
Tracing \eqref{LinearisedEqMetricSimplified1} we obtain  
\[ 
- (\mathring{\Delta}+\beta )\gamma =0.
\]
By assumption
$\beta\notin\text{Spec}(-\mathring{\Delta})$ and
therefore $\gamma=0$. Substituting into
\eqref{LinearisedAuxiliaryEqExtrinsicCurvSimplified}
\begin{equation}
\mathring{\mathcal{D}}^*\circ{\mathring{\mathcal{D}}}(\bmsigma-\tfrac{1}{3}\mathring{K}\bar{\bmgamma})_{ij}=0.
\end{equation}
Now, since
$\mathring{\mathcal{D}}^*\circ\mathring{\mathcal{D}}:\mathscr{S}^2_0(\mathcal{S};\mathring{\bmh})\rightarrow
\mathscr{S}^2_0(\mathcal{S};\mathring{\bmh})$ is an isomorphism,
$\sigma_{ij}=\tfrac{1}{3}\mathring{K}\bar{\gamma}_{ij}$. Substituting into
\eqref{LinearisedEqMetricSimplified1} along with $\gamma=0$ yields
\begin{equation}
\mathring{\Delta}_L\bar{\gamma}_{ij} +4\bar{\gamma}_{ij}\equiv -\mathring{\Delta}\bar{\gamma}_{ij}-2\bar{\gamma}_{ij} =0. \label{LinearisedEqMetricSimplified2}
\end{equation} 

We will now show that
$(\mathring{\Delta}_L+4):\mathscr{S}^2_0(\mathcal{S};\mathring{h})\rightarrow
\mathscr{S}^2_0(\mathcal{S};\mathring{h})$ is injective (and hence, by
self-adjointness, an isomorphism). First, taking the divergence of
\eqref{LinearisedEqMetricSimplified2}, commuting derivatives and using
the fact that the background metric is Einstein (with $\mathring{r}=-6$), we find that
\begin{align*}
0&=-\mathring{D}^i(\mathring{\Delta}\bar{\gamma}_{ij}+2\bar{\gamma}_{ij})\\
&=-\mathring{\Delta}\mathring{\delta}(\bar{\bm\gamma})_j-\mathring{D}^k(\mathring{r}_k{}^l\bar{\gamma}_{lj}-\mathring{r}_j{}^l{}_k{}^i\bar{\gamma}_{il})-\mathring{r}_j{}^{lik}\mathring{D}_k\bar{\gamma}_{il}-2\mathring{\delta}(\bar{\bm\gamma})_j\\
&=-\mathring{\Delta}\mathring{\delta}(\bar{\bm\gamma})_j-\mathring{r}^{kl}\mathring{D}_k\bar{\gamma}_{lj}-2\mathring{r}_j{}^{lik}\mathring{D}_k\bar{\gamma}_{il}-2\mathring{\delta}(\bar{\bm\gamma})_j\\
&=(-\mathring{\Delta}+2)\mathring{\delta}(\bar{\bm\gamma})_j,
\end{align*}
and hence we see that $\mathring{\delta}(\bar{\bm\gamma})=0$ by positivity
of
$(-\mathring{\Delta}+2):\Lambda^1(\mathcal{S})\rightarrow\Lambda^1(\mathcal{S})$. Now,
\begin{align*}
\mathring{\mathcal{D}}^*\circ\mathring{\mathcal{D}}(\bm\bar{\gamma})_{ij}&=\mathring{\Delta}\bar{\gamma}_{ij}-\tfrac{1}{2}\mathring{D}_k\mathring{D}_i\bar{\gamma}_j{}^k-\tfrac{1}{2}\mathring{D}_k\mathring{D}_j\bar{\gamma}_i{}^k+\tfrac{1}{3}\mathring{D}^k\mathring{D}^l\bar{\gamma}_{kl}\mathring{h}_{ij}\\
&=\mathring{\Delta}\bar{\gamma}_{ij}-\tfrac{1}{2}\mathring{D}_i\mathring{D}_k\bar{\gamma}_j{}^k-\tfrac{1}{2}\mathring{D}_j\mathring{D}_k\bar{\gamma}_i{}^k+\tfrac{1}{3}\mathring{D}^k\mathring{D}^l\bar{\gamma}_{kl}\mathring{h}_{ij}+3\bar{\gamma}_{ij}\\
&=-(\mathring{\Delta}_L+4)\bar{\gamma}_{ij}+\bar{\gamma}_{ij}\\
&=\bar{\gamma}_{ij},
\end{align*}
where in the third line we are using
$\mathring{\delta}(\bar{\bm\gamma})=0$ and in the fourth we are using
\eqref{LinearisedEqMetricSimplified2}. However, clearly
$\mathring{\mathcal{D}}^*\circ \mathring{\mathcal{D}}$ is
negative-definite, and so we find that $\bar{\gamma}_{ij}=0$ ---that
is to say, $(\mathring{\Delta}_L+4)$ is injective.
Collecting everything
together, we have found that 
\[
\sigma_{ij}=\gamma_{ij}=0, \qquad  \xi_i=\bar{\xi}_i=0,  
\]
---i.e. the map $D\Psi$ is injective.

\smallskip
\noindent
\textbf{Surjectivity of} $\mathbf{D\Psi}$. The argument for
surjectivity is similar. First, since $\mathring{\delta}\circ
\mathring{L}$ is an isomorphism, equations
\eqref{LinearisedEqMagnetic}-\eqref{LinearisedEqElectric} admit
(unique) solutions $\bar{\xi}_i,~\xi_i$, for any given
$\bar{B}_{i},~B_{i}$. Substituting into equations 
\eqref{LinearisedAuxiliaryEqExtrinsicCurv} and \eqref{LinearisedEqMetric}
and rearranging one obtains 
\begin{subequations}
\begin{eqnarray}
&& \mathring{\mathcal{D}}^*\circ\mathring{\mathcal{D}}(\bmvarsigma-\tfrac{1}{9}\mathring{K}\gamma\mathring{\bmh})_{ij}=A_{ij}+\mathring{\mathcal{D}}^*(\mathring{\star} \mathring{L}(\bar{\bmxi})),\label{LinearisedEqExtrinsicCurv}\\
&& \mathring{\Delta}_L\gamma_{ij} +4 \bar{\gamma}_{ij}  -  \tfrac{1}{3}\beta \gamma \mathring{h}_{ij}+ \tfrac{2}{3}\mathring{K} \varsigma_{ij}=C_{ij}+2 \mathring{L}(\bmxi)_{ij}, \label{LinearisedEqMetricSimplified3} 
\end{eqnarray}
\end{subequations}
where, for simplicity, we have defined
\[
\varsigma_{ij}\equiv
\sigma_{ij}-\tfrac{1}{3}\mathring{K}\bar{\gamma}_{ij}.
\]
Note that $\varsigma_{ij}$ is tracefree with respect to $\mathring{\bmh}$.
Taking the trace of \eqref{LinearisedEqMetricSimplified3} one obtains 
\[
-(\mathring{\Delta}+\beta)\gamma=C_k{}^k,
\]
which admits a unique solution, since
$\beta\notin\text{Spec}(-\mathring{\Delta})$ implies
that $-(\mathring{\Delta}+\beta)$ is
invertible. Substituting
into \eqref{LinearisedEqExtrinsicCurv} yields
\[
\mathring{\mathcal{D}}^*\circ\mathring{\mathcal{D}}(\bmvarsigma)_{ij}=A_{ij}+\mathring{\mathcal{D}}^*(\mathring{\star} \mathring{L}(\bar{\bmxi}))_{ij}+\tfrac{1}{9}\mathring{\mathcal{D}}^*\circ\mathring{\mathcal{D}}(\gamma \mathring{h}_{ij})
\]
where $\gamma$ is as determined in the previous step, for which there
exists a unique solution $\varsigma_{ij}$, since
$\mathring{\mathcal{D}}^*\circ\mathring{\mathcal{D}}:\mathscr{S}^2_0(\mathcal{S};\mathring{\bmh})\rightarrow
\mathscr{S}^2_0(\mathcal{S};\mathring{\bmh})$ is an
isomorphism. Finally, substituting the $\gamma$ and $\varsigma_{ij}$
so obtained into \eqref{LinearisedEqMetricSimplified3}, one obtains
\[
\mathring{\Delta}_L\bar{\gamma}_{ij} +4\bar{\gamma}_{ij}=  C_{ij}+2
\mathring{L}(\xi)_{ij}+\tfrac{1}{3}\beta \gamma
\mathring{h}_{ij}- \tfrac{2}{3}\mathring{K} \varsigma_{ij},
\]
which admits a unique solution since $(\mathring{\Delta}_L+4)$ is an isomorphism.

\medskip
The previous two steps conclude the proof that $D\Psi$ is an
isomorphism, and so by the Implicit Function Theorem there exists a
map $\nu$ from the freely-prescribed data to the space of solutions of
the auxiliary system $\Psi=0$. It only remains to establish the
injectivity of the map $\nu$.

\medskip
\noindent
\textbf{Injectivity of} $\mathbf{\nu}$. To establish the injectivity
of $\nu$, we need to consider the linearisation of $\Psi$ in the
direction of the free data ---namely
\[
\frac{\mbox{d}}{\mbox{d}\tau}\Psi(\bmchi,\bar{\bmX},\bmX,\bmh;~\mathring{K}+\tau\phi,\tau\bar{\bmT},\tau
\bmT) \bigg\vert_{\tau=0}=0.
\]
Since the background initial data, being hyperbolic, has umbilical extrinsic curvature, the expression
\eqref{LinearisationInDirectionOfFreeData} simplifies to
\begin{subequations}
\begin{eqnarray}
&& \mathring{L}(d\phi)_{jk}+3 \mathring{\epsilon}_{kil} \mathring{D}^{l}\bar{T}_{j}{}^{i} +3\mathring{\epsilon}_{jil} \mathring{D}^{l}\bar{T}_{k}{}^{i}=0,\label{LinearisationFreeData1}\\
&& \mathring{D}^{i}\bar{T}_{il} =0,\label{LinearisationFreeData2}\\
&& \mathring{D}^{j}T_{ij}=0,\label{LinearisationFreeData3}\\
&& T_{ij} - \tfrac{4}{9}\mathring{K}\phi \mathring{h}_{ij}  =0.\label{LinearisationFreeData4}
\end{eqnarray}
\end{subequations}
First consider the case $\mathring{K}\neq 0$: taking the trace of the
algebraic equation
\eqref{LinearisationFreeData4} one finds that $\phi=0$, and so
$T_{ij}=0$. Combining
\eqref{LinearisationFreeData1}--\eqref{LinearisationFreeData2} ---see Remark \ref{Remark:DecompositionOfCodazziOperator}--- and
using $\phi=0$, one obtains
\[
(\mathring{\mathcal{D}}\bar{T})_{ijk}\equiv\mathring{D}_i\bar{T}_{jk}-\mathring{D}_j\bar{T}_{ik}=0. 
\]
Now, we have assumed the non-existence of
non-trivial tracefree Codazzi tensors, so $\bar{T}_{ij}=0$. Hence, in
the non--time symmetric case $\mathring{K}\neq 0$, the map $\nu$ is
injective.

\smallskip
Consider on the other hand the time-symmetric case
$\mathring{K}=0$. Clearly, the kernel of the system contains triples of
the form
\begin{equation}
\label{KernelOfnuInTSCase}
(T_{ij},\bar{T}_{ij},\phi)=(\bm0,~\bm0,~\text{const.}). 
\end{equation}
We show that these are indeed the only solutions. First, note that
condition \eqref{LinearisationFreeData4} (setting $\mathring{K}=0$)
again implies $T_{ij}=0$. Now, taking the divergence of
\eqref{LinearisationFreeData1}, one has that  
\begin{align*}
0&=\mathring{\delta} \mathring{L}(d\phi)_k+3\mathring{\epsilon}_{kil} \mathring{D}^{j}\mathring{D}^{l}\bar{T}_{j}{}^{i} + 3 \mathring{\epsilon}_{jil} \mathring{D}^{j}\mathring{D}^{l}\bar{T}_{k}{}^{i}\\
&=\mathring{\delta}\mathring{L}(d\phi)_k+\tfrac{3}{2} \mathring{\epsilon}^{jlm} \bar{T}_{k}{}^{i} \mathring{r}_{ijlm} -  \tfrac{3}{2} \mathring{\epsilon}_{i}{}^{lm} \bar{T}^{ij} \mathring{r}_{kjlm} + 3 \mathring{\epsilon}_{kjl} \mathring{D}_{i}\mathring{D}^{l}\bar{T}^{ij}\\
&=\mathring{\delta} \mathring{L}(d\phi)_k+6 \mathring{\epsilon}_{kjl} \bar{T}^{ij} \mathring{r}_{i}{}^{l} + 3 \mathring{\epsilon}_{kjl} \mathring{D}^{l}\mathring{D}_{i}\bar{T}^{ij}\\
&=\mathring{\delta} \mathring{L}(d\phi)_k,
\end{align*}
after commuting covariant derivatives and 
where in the last step we are using the fact that the background
metric is Einstein, along with the fact that $\bar{T}_{ij}$ is
divergence-free. Integrating by parts, one then finds that
$\mathring{L}(d\phi)=0$ ---that is to say, $d\phi$ is a conformal
Killing vector. Since $\mathring{\bmh}$ admits no non-trivial
conformal Killing vectors, $d\phi=0$ and so
$\phi$ is constant. Proceeding as in the $\mathring{K}\neq 0$ case, we
again see that $\bar{T}_{ij}=0$, as a consequence of there being no
non-trivial tracefree Codazzi tensors. By restricting the choice of
$\phi$ to the sub-Banach space of functions integrating to zero,
we clearly exclude from the kernel triples of the form \eqref{KernelOfnuInTSCase},
ensuring that $\nu$ is injective. 

\smallskip
In order to show that $u\mapsto w(u)$ is injective, all that
remains to be shown is that the map $u\equiv
(\phi,\bm{T},\bar{\bmT})\mapsto \bmS(\bmX(u),\bmT)$ is injective (and
likewise for $\bar{\bmX}$).
The injectivity of the map
$u\mapsto \mathring{L}(\bmX(u))+\bmT$ follows from injectivity of
$\nu$ and uniqueness of the York split ---using, once again, the
non-existence of conformal Killing vectors for $\mathring{\bmh}$, see
\cite{Can81}. Finally, we need to show that $\Pi_{\bmh}$ is
injective (for $\bmh$ sufficiently close to $\mathring{\bmh}$ in
$\mathcal{B}_\bmh$). 
To see this, note that if
$T_{ij}\in\text{ker}(~\Pi_\bmh)\cap\mathscr{S}^2_0(\mathcal{S};\mathring{\bmh})$,
then
\[
T_{ij}=\tfrac{1}{3}Th_{ij} 
\] 
with $T=\text{tr}_\bmh(\bmT)$, and
\[
0=T\cdot\text{tr}_{\mathring{\bmh}}\bmh=T\cdot
(3+\text{tr}_{\mathring{\bmh}}(\bmh-\mathring{\bmh})). 
\]
Now, by Sobolev Embedding (see \cite{LawMic89}) the $C^0-$norm of $(\bmh-\mathring{\bmh})$ is bounded above by the $H^2-$norm and hence, for $\bmh$ sufficiently close to $\mathring{\bmh}$ in $\mathcal{B}_\bmh$, it follows that $T=0$ and hence $T_{ij}=0$ ---that is to say,
$\Pi_\bmh$ is injective for such an $\bmh$.


\end{proof}



\begin{remark}
{\em Recall the notion of \textit{total mean extrinsic curvature}
\[
\int_{\mathcal{S}}\text{tr}_{\mathring{\bmh}}(\bmK)~d\mathring{\mu},
\]
given here with respect to the background metric 
$\mathring{\bmh}$. The additional requirement that $\phi$ integrates to zero 
in the time-symmetric case $\mathring{K}=0$
therefore ensures that the corresponding solutions furnished by
Theorem \ref{MainTheorem} have zero total mean extrinsic curvature
with respect to $\mathring{\bmh}$. While the proof guarantees a solution
for any choice of (smooth, sufficiently small) $\phi$, the injectivity
of the map $\nu$ is only guaranteed if we further restrict to those $\phi$ 
that integrate to zero.}
\end{remark}


\begin{remark}
{\em In the proof of Proposition
\ref{PropExistenceOfSolutionsToSecondarySystem}, we could have instead
used the vanishing of the index to establish surjectivity. Recall that
the Atiyah--Singer index theorem (see \cite{Nak03}, for example)
relates the analytical and topological index of an elliptic operator
over a compact manifold. For an odd-dimensional base manifold
$\mathcal{S}$ the topological index vanishes ---see the discussion in
\cite{Nak03}--- and so the index theorem guarantees that an injective
elliptic operator defined over an odd-dimensional manifold must in
fact be an isomorphism of the appropriate Banach spaces.}
\end{remark}

\subsection{Sufficiency of the auxiliary system}
\label{Section:Sufficiency}
In this section we establish \emph{sufficiency} of
auxiliary constraint system ---that is, we show that the solutions of
the auxiliary system established in the previous section are indeed solutions
of the extended constraint equations.

\subsubsection{Injectivity of $\mathcal{K}_{\bmh}$}
\label{Subsec:PreliminaryRemarksSufficiency}

Recall the operator $\mathcal{K}_{\bmh}$ (see Section \ref{Subsubsection:EllipticEquationsForQandJ}) given by 
\[\mathcal{K}_{\bmh}(\bmJ)=\left(\begin{array}{c}
\mathring{\mathcal{D}}^*(\bmJ)_{ij}\\
\epsilon^{ijk}D_iJ_{jkl}
\end{array}\right). \]
As described in Section \ref{Subsection:SufficiencyArgument}, the
sufficiency argument will involve establishing injectivity of the
operator $\mathcal{K}_{\bmh}$. We first consider the operator
evaluated at the background metric, $\mathring{\bmh}$:

\begin{proposition}
\label{Prop:BackgroundSufficiencyJ}
Let $(\mathcal{S},\mathring{\bmh})$ be a smooth conformally rigid
hyperbolic manifold, then the operator $\mathring{\mathcal{K}}\equiv
\mathcal{K}_{\mathring{\bmh}}$ is injective ---i.e. the system of
equations $\mathring{\mathcal{K}}(\bmJ)=0$ admits only the trivial
solution $J_{ijk}=0$.
\end{proposition}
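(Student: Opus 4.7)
The plan is to exploit the Jacobi decomposition of Remark \ref{Remark:JacobiDecomp} to translate $\mathring{\mathcal{K}}(\bmJ)=0$ into a coupled first-order system for a covector $A$ and a $\mathring{\bmh}$-tracefree symmetric tensor $F$. I then decouple the system by applying $\mathring{\delta}$ to one component and using the other, producing a second-order elliptic equation for $A$ alone whose integrated form forces $A$ to be a conformal Killing vector; by Section \ref{Subsection:CKVectorsAsObstructions} this gives $A=0$, after which $F$ must be a tracefree Codazzi tensor and vanishes by the conformal rigidity hypothesis.

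Writing $J_{ijk}=\tfrac{1}{2}(\mathring{\epsilon}_{ij}{}^l F_{lk}+A_i\mathring{h}_{jk}-A_j\mathring{h}_{ik})$, a direct substitution using the tracefree symmetry of $F$ and the identity $\mathring{\epsilon}^{ijk}\mathring{\epsilon}_{ij}{}^{m}=2\mathring{h}^{km}$ yields
\[
\mathring{\mathcal{D}}^*(\bmJ)=\text{rot}_2(F)+\tfrac{1}{2}\mathring{L}(A),\qquad \mathring{\epsilon}^{ijk}\mathring{D}_i J_{jkl}=\mathring{\delta}(F)_l+(\text{curl}\,A)_l,
\]
with $(\text{curl}\,A)_l\equiv\mathring{\epsilon}_l{}^{ij}\mathring{D}_iA_j$. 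Thus $\mathring{\mathcal{K}}(\bmJ)=0$ is equivalent to the coupled system
\[
\tfrac{1}{2}\mathring{L}(A)+\text{rot}_2(F)=0,\qquad \mathring{\delta}(F)+\text{curl}(A)=0.
\]

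To decouple I apply $\mathring{\delta}$ to the first equation and use the second to replace $\mathring{\delta}(F)$ by $-\text{curl}(A)$. The computation rests on three Bochner-type identities specific to the constant-sectional-curvature background ($\mathring{r}_{ij}=-2\mathring{h}_{ij}$):
\[
\mathring{\delta}(\text{rot}_2 F)=\tfrac{1}{2}\,\text{curl}(\mathring{\delta} F),\quad \mathring{\delta}(\mathring{L}A)_j=\mathring{\Delta}A_j+\tfrac{1}{3}\mathring{D}_j(\mathring{\delta}A)-2A_j,\quad \text{curl}^2(A)_j=-\mathring{\Delta}A_j+\mathring{D}_j(\mathring{\delta}A)-2A_j.
\]
The first follows from the vanishing of $\mathring{\epsilon}^{ik}{}_l[\mathring{D}_i,\mathring{D}_k]F^l{}_j$ on any space form; the other two are standard commutator calculations. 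The crucial point is that the $-2A_j$ terms cancel upon forming $\mathring{\delta}(\mathring{L}A)-\text{curl}^2(A)$, producing the decoupled equation
\[
\mathring{\Delta}A-\tfrac{1}{3}\mathring{D}(\mathring{\delta}A)=0.
\]

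Testing this against $A$ and integrating by parts over the closed manifold $\mathcal{S}$ gives $\int_{\mathcal{S}}\Vert\mathring{D}A\Vert^2_{\mathring{\bmh}}\,d\mu_{\mathring{\bmh}}=\tfrac{1}{3}\int_{\mathcal{S}}(\mathring{\delta}A)^2\,d\mu_{\mathring{\bmh}}$. Since the pointwise Cauchy-Schwarz inequality gives $(\mathring{\delta}A)^2=(\mathring{h}^{ij}\mathring{D}_iA_j)^2\leq 3\Vert\mathring{D}A\Vert^2_{\mathring{\bmh}}$, this identity is precisely the equality case, forcing $\mathring{D}_iA_j=\tfrac{1}{3}(\mathring{\delta}A)\mathring{h}_{ij}$ pointwise; in particular $\mathring{L}(A)=0$, and by Section \ref{Subsection:CKVectorsAsObstructions} (negative-definite Ricci) we conclude $A=0$. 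Returning to the coupled system with $A=0$ gives $\mathring{\delta}(F)=0$ and $\text{rot}_2(F)=0$; by Remark \ref{Remark:JacobiDecompOfCodazziOperator}, $F$ is then a tracefree Codazzi tensor, and the conformal rigidity assumption forces $F=0$. Hence $\bmJ=0$. The main technical obstacle is the careful commutator/Bochner bookkeeping producing the three displayed identities and, crucially, the cancellation of the zeroth-order $A$-terms that makes the integrated estimate the equality case of Cauchy-Schwarz rather than a strict inequality.
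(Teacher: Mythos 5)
Your proof is correct and follows essentially the same route as the paper's: the same Jacobi decomposition into $(\bmA,\bmF)$, the same elimination of $\bmF$ by applying $\mathring{\delta}$ to the first equation and substituting from the second, and the same endgame ($\bmA$ a conformal Killing vector hence zero by negative Ricci, then $\bmF$ a tracefree Codazzi tensor hence zero by conformal rigidity). The only difference is the final integration step: the paper contracts $\mathring{\delta}\circ\mathring{L}(\bmA)_i-\mathring{\text{curl}}^2(\bmA)_i=0$ with $A^i$ directly, using $\mathring{\delta}^*=-\tfrac{1}{2}\mathring{L}$ and $\mathring{\text{curl}}^*=\mathring{\text{curl}}$ to get the manifestly non-negative integrand $\tfrac{1}{2}\Vert\mathring{L}(\bmA)\Vert^2+\Vert\mathring{\text{curl}}(\bmA)\Vert^2$, which yields $\mathring{L}(\bmA)=0$ immediately and spares you the Bochner expansion and the Cauchy--Schwarz equality-case argument.
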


\begin{proof}
%
Suppose $J_{ijk}=0$ is a Jacobi tensor satisfying $\mathring{\mathcal{K}}(\bmJ)=0$. Performing the Jacobi decomposition of $J_{ijk}$ with respect to $\mathring{\bmh}$ we obtain 
\begin{subequations}
\begin{eqnarray}
&& 2\mathring{\text{rot}}_2(\bmF)_{ij}+\mathring{L}(\bmA)_{ij}=0,\label{Eq:AuxEqDecomposed}\\
&& \mathring{\delta}(\bmF)_i+\mathring{\text{curl}}(\bmA)_i=0,\label{Eq:BackgroundIntegrDecomposed}
\end{eqnarray}
\end{subequations}
with $\mathring{\text{curl}}(\bmA)_i\equiv \mathring{\epsilon}_{ijk}\mathring{D}^jA^k$, to be read as equations for $F_{ij}\in\mathscr{S}^2_0(\mathcal{S};\mathring{\bmh})$ and $A_i\in\Lambda^1(\mathcal{S})$. It then follows that
\begin{align*}
0&=\mathring{\delta}(\mathring{L}(\bmA)+2\mathring{\text{rot}}_2(\bmF))_i\\
&=\mathring{\delta}\circ \mathring{L}(\bmA)_i+2\mathring{\delta}\circ\mathring{\text{rot}}_2(\bmF)_i\\
&=\mathring{\delta}\circ \mathring{L}(\bmA)_i+\mathring{\text{curl}}\circ\mathring{\delta}(\bmF)_i-2\mathring{\epsilon}_{iml}\mathring{r}_j{}^lF^{jm}\\
&=\mathring{\delta}\circ \mathring{L}(\bmA)_i-\mathring{\text{curl}}^2(\bmA)_i-2\mathring{\epsilon}_{iml}\mathring{r}_j{}^lF^{jm},
\end{align*}
where the first line follows from \eqref{Eq:AuxEqDecomposed}, the third uses the identity 
\[\mathring{\delta}\circ\mathring{\text{rot}}_2(\bmF)_i=\tfrac{1}{2}\mathring{\text{curl}}\circ\mathring{\delta}(\bmF)_i-\mathring{\epsilon}_{iml}\mathring{r}_j{}^lF^{jm}, \]
and the fourth follows from substitution using \eqref{Eq:BackgroundIntegrDecomposed}. Since $\mathring{\bmh}$ is Einstein, we find
\[\mathring{\delta}\circ \mathring{L}(\bmA)_i-\mathring{\text{curl}}^2(\bmA)_i=0.\]
Contracting with $A^i$ and integrating by parts:
\begin{equation}
\label{Eq:IdentityForCurlyPOperatorInEinsteinCase}
0=\int_{\mathcal{S}}\left(\tfrac{1}{2}\Vert \mathring{L}(\bmA)\Vert^2+\Vert\mathring{\text{curl}}(\bmA)\Vert^2\right)~d\mu_{\mathring{\bmh}}, 
\end{equation}
where we are using the fact that $\mathring{\delta}^*=-\tfrac{1}{2}\mathring{L}$ and $\mathring{\text{curl}}^*=\mathring{\text{curl}}$. Hence, we find that $A_i=0$, since $\mathring{\bmh}$ admits no conformal Killing vector fields. Substituting into \eqref{Eq:AuxEqDecomposed}--\eqref{Eq:BackgroundIntegrDecomposed}, we see that $\mathring{\text{rot}}_2(\bmF)_{ij}=\mathring{\delta}(\bmF)_i=0$ and hence $F_{ij}=0$ since $\mathring{\bmh}$ admits no tracefree Codazzi tensors. It follows then that $J_{ijk}=0$.
\end{proof}
In order to show that $\mathcal{K}_{\bmh}$ is injective for $\bmh$
sufficiently close to $\mathring{\bmh}$, we will first show that the
operator $\mathcal{K}_{\bmh}$ is elliptic and then appeal to a
particular stability property of the kernel of elliptic
operators. Let us first establish ellipticity:

\begin{lemma}
\label{Lemma:EllipticityOfCurlyKOperator}
The operator $\mathcal{K}_{\bmh}$ is first-order elliptic for any Riemannian metric $\bmh$. 
\end{lemma}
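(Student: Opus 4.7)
Since $\mathcal{K}_\bmh$ is a first-order differential operator, ellipticity reduces to injectivity of its principal symbol $\sigma_\bmxi[\mathcal{K}_\bmh]$ for every nonzero covector $\bmxi \in T^*_p\mathcal{S}$. Both the source fibre $\mathcal{J}(\mathcal{S}) \cong \Lambda^1(\mathcal{S}) \oplus \mathscr{S}^2_0(\mathcal{S};\bmh)$ and the target fibre $\mathscr{S}^2_0(\mathcal{S};\mathring{\bmh}) \oplus \Lambda^1(\mathcal{S})$ have pointwise dimension $3+5=8$, so injectivity of the symbol will automatically deliver bijectivity. A minor subtlety is that the two components of $\mathcal{K}_\bmh$ are built from different metrics (the background $\mathring{\bmh}$ in $\mathring{\mathcal{D}}^*$, versus $\bmh$ in the second component), but this is inessential at the symbol level: both metrics are positive-definite, so switching between them affects the symbol only by composition with pointwise linear isomorphisms and does not alter injectivity.

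\textbf{Symbol in Jacobi variables.} The plan is to parametrise a Jacobi tensor via the decomposition of Remark \ref{Remark:JacobiDecomp}, $J_{ijk} = \tfrac{1}{2}(\epsilon_{ij}{}^l F_{lk} + A_i h_{jk} - A_j h_{ik})$, with $A_i \in \Lambda^1(\mathcal{S})$ and $F_{ij} \in \mathscr{S}^2_0(\mathcal{S};\bmh)$. Substituting into the two defining expressions of $\mathcal{K}_\bmh$ and invoking the standard three-dimensional identity $\epsilon^{ijk}\epsilon_{jk}{}^m = 2\delta^{im}$, a direct calculation produces
\[
\sigma_\bmxi[\mathcal{K}_\bmh](\bmJ) = \begin{pmatrix}
\epsilon_{(i|kl|}\xi^k F^l{}_{j)} + A_{(i}\xi_{j)} - \tfrac{1}{3}(A_k\xi^k)\, h_{ij} \\[0.3em]
F_{lm}\xi^m + \epsilon_{lmn}\xi^m A^n
\end{pmatrix},
\]
an explicit algebraic expression in $(A, F, \bmxi)$ whose kernel we need to analyse.

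\textbf{Injectivity and main difficulty.} I would then work in an $\bmh$-orthonormal frame with $\xi_i$ aligned along the first basis direction. The vanishing of the covector component algebraically determines the column $F_{\bullet 1}$ in terms of $A$ (specifically $F_{11}=0$, and $F_{21},F_{31}$ linear in $A_3,A_2$). Substituting into the vanishing of the symmetric-tracefree component and reading off its six independent entries in sequence then produces an essentially triangular linear system: the $(11)$-entry forces $A_1 = 0$; the $(12)$- and $(13)$-entries, combined with the relations already obtained from the second equation, give $A_2 = A_3 = 0$ and hence $F_{21} = F_{31} = 0$; the $(22)$- and $(23)$-entries then yield $F_{32} = 0$ and $F_{22} = F_{33}$, which together with the tracefreeness of $F$ force $F_{22} = F_{33} = 0$. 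Hence $J_{ijk} = 0$. There is no deep conceptual obstacle; the principal difficulty is the careful bookkeeping of the Jacobi decomposition together with the six independent symmetric-tracefree components. A more invariant alternative would decompose $A$ and $F$ into parts parallel and transverse to $\bmxi$ and solve the resulting algebraic system block by block; this avoids choosing a frame but requires slightly more setup.
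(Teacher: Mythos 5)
Your proposal is correct, and the claims I could check in detail (the symbol formula in the $(A,F)$ variables, the relations $F_{11}=0$, $F_{21},F_{31}\propto A_3,A_2$ from the covector component, and the triangular elimination from the six symmetric--tracefree entries) all work out. Your route is, however, genuinely different in execution from the paper's. You substitute the Jacobi decomposition $J_{ijk}=\tfrac{1}{2}(\epsilon_{ij}{}^lF_{lk}+A_ih_{jk}-A_jh_{ik})$ into the symbol and then solve an explicit linear system in an orthonormal frame adapted to $\bmxi$; the paper instead works invariantly with $J_{ijk}$ itself, first replacing the second component $\epsilon^{ijk}\xi_iJ_{jkl}$ by the equivalent metric-free condition $\xi_{[i}J_{jk]l}=0$, and then extracting $J_{ijk}=0$ through a short sequence of contractions with $\xi$ (first showing $\xi^kJ_{kl}{}^l=0$, then $J_{il}{}^l=0$, then $\xi^kJ_{ikj}=\xi^kJ_{kij}=\xi^kJ_{ijk}=0$, and finally $\vert\bmxi\vert^2J_{jkl}=0$). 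Your version is more explicit and each step is mechanically verifiable, at the cost of the decomposition bookkeeping and a frame choice; the paper's is frame-free and shorter once the right contractions are found. Two small remarks. First, your dimension count ($3+5=8$ on both sides) giving determined ellipticity from symbol injectivity is exactly the closing step of the paper's proof as well. Second, your handling of the mixed-metric issue is slightly loosely phrased: the clean statement is that the kernel of the symbol of the second component is metric-independent (it coincides with the kernel of $\xi_{[i}J_{jk]l}$ for any nondegenerate volume form), so one may compute the entire symbol with the single metric $\mathring{\bmh}$ appearing in $\mathring{\mathcal{D}}^*$; the paper makes the related observation that changing the connection $D\to\mathring{D}$ only alters lower-order terms. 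Neither point is a gap.
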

\begin{proof}
Recall from Remark \ref{Remark:JacobiDecomp} that
$\mathcal{J}(\mathcal{S})$ and
$\mathscr{S}^2_0(\mathcal{S};\mathring{\bmh})\oplus\Lambda^1(\mathcal{S})$ are
isomorphic as vector spaces. Therefore, in order to establish
ellipticity it suffices to show that $\mathcal{K}_{\bmh}$ is
overdetermined elliptic. Note that the second component of $\mathcal{K}_{\bmh}=0$ is equivalent to 
\[D_{[i}J_{jk]l}=0. \]
Note also that a change of connection $D_i\rightarrow\mathring{D}_i$ only introduces lower-order (i.e. algebraic) terms involving $J_{ijk}$, so in order to show ellipticity it suffices to consider the operator $\mathring{\mathcal{K}}$, or equivalently an operator with principal part
\[\left(\begin{array}{c}
\mathring{\mathcal{D}}^*(\bmJ)_{ij}\\
\mathring{D}_{[i}J_{jk]l}.
\end{array}\right).\]
Accordingly, suppose
$J_{ijk}\in\mathcal{J}(\mathcal{S})$ is in the kernel of the symbol
map, $\sigma_\xi [\mathring{\mathcal{K}}]$, for a given fixed
$\xi_i$, so that
\begin{subequations}
\begin{eqnarray}
&& \xi^k J_{ikj}+\xi^k J_{jki}-\tfrac{2}{3}\xi^k J_{lk}{}^l\mathring{h}_{ij}=0,\label{SymbolOfQVanishes1}\\
&& \xi_i J_{jkl}+\xi_j J_{kil}+\xi_k J_{ijl}=0.\label{SymbolOfQVanishes2}
\end{eqnarray}
\end{subequations}

Note that the latter is indeed equivalent to $\epsilon^{ijk}\xi_i
J_{jkl}=0$, taking into account the fact that
$J_{ijk}=-J_{jik}$. Contracting indices $i,l$ in equation
\eqref{SymbolOfQVanishes2}, we obtain
\begin{equation}
\xi^l J_{jkl}=-\xi_j J_{kl}{}^l+\xi_kJ_{jl}{}^l.\label{SymbolOfQVanishes2Contracted}
\end{equation}
On the other hand, contracting \eqref{SymbolOfQVanishes1} with $\xi^j$, we obtain
\begin{align}
0&=\xi^k\xi^j J_{ikj}+\xi^k\xi^j J_{jki}-\tfrac{2}{3}\xi^k\xi_i J_{lk}{}^l\nonumber\\
&=\xi^k\xi^j J_{ikj}-\tfrac{2}{3}\xi^k\xi_iJ_{lk}{}^l\nonumber\\
&=\tfrac{1}{3}\xi_i\xi^kJ_{kl}{}^l+\vert{\bm \xi}\vert^2J_{il}{}^l\label{SigmaTimesSymbolOfQVanishes1}
\end{align}
where the second line follows from the fact that $J_{ijk}=-J_{jik}$
and the third line follows from substituting
\eqref{SymbolOfQVanishes2Contracted}. Contracting
\eqref{SigmaTimesSymbolOfQVanishes1} with $\xi^i$, we find that
$\xi^iJ_{il}{}^l=0$, which when substituted back into
\eqref{SigmaTimesSymbolOfQVanishes1} yields
$J_{il}{}^l=0$. Substituting the latter into
\eqref{SymbolOfQVanishes1} and \eqref{SymbolOfQVanishes2Contracted} we
see that
\begin{equation}
\xi^k J_{ikj}+\xi^k J_{jki}=0\label{SymmetrisedSigmaJContractionEqualsZero}
\end{equation} 
in addition to $\xi^k J_{ijk}=0$. If we instead contract indices $k,l$ in \eqref{SymbolOfQVanishes2}, we obtain 
\begin{align*}
0&=\xi^kJ_{ijk}+\xi^kJ_{jki}+\xi^k J_{kij}\\
&=\xi^k J_{jki}-\xi^k J_{ikl}
\end{align*}
where the second line follows from $\xi^k J_{ijk}=0$ and the fact that
$J_{ijk}=-J_{jik}$. Combining with
\eqref{SymmetrisedSigmaJContractionEqualsZero}, we find that $\xi^k
J_{ikj}=\xi^k J_{kij}=0$. Finally, contracting
\eqref{SymbolOfQVanishes2} with $\xi^i$, we obtain
\[
0=\vert\bm\xi\vert^2J_{jkl}+\xi_j\xi^iJ_{kil}+\xi_k\xi^iJ_{ijl}=\vert\bm\xi\vert^2J_{jkl}
\]
where the second equality follows from $\xi^k J_{ikj}=\xi^k
J_{kij}=0$. Hence, for $\xi_i\neq 0$, we see that the symbol map is
injective ---that is to say, $\mathcal{K}_{\bmh}$ is overdetermined
elliptic and hence determined elliptic, since its domain and codomain
 are of equal dimension as vector spaces.
\end{proof}

In order to establish injectivity of $\mathcal{K}_{\bmh}$ we will make
use of an elliptic estimate. Rather than working directly with the
first-order operator $\mathcal{K}_{\bmh}$ we choose instead to work
with the elliptic operator
$\mathcal{K}_{\bmh}^*\circ\mathcal{K}_{\bmh}$ to which the more
standard results of second-order elliptic operators may be applied
---note that the kernel of the latter operator agrees with the kernel
of $\mathcal{K}_{\bmh}$, so it suffices to show injectivity of the
second-order operator. Our starting point is the following elliptic
estimate for $\mathring{\mathcal{K}}^*\circ\mathring{\mathcal{K}}$:
there exists $C>0$ such that, for all
$\bm\eta\in H^2(\mathcal{J}(\mathcal{S}))$
\begin{equation}
\label{EllipticEstimate}
\Vert \bm\eta\Vert_{H^2}\leq C\left(\Vert \mathring{\mathcal{K}}^*\circ\mathring{\mathcal{K}}(\bm\eta)\Vert_{L^2}+\Vert \bm\eta\Vert_{H^1}\right)
\end{equation}
---see Appendix II of \cite{Cho08}, for instance. In fact, we will
require a \emph{uniform} version of the above elliptic estimate which
allows for small perturbations of the metric:

\begin{lemma}
\label{Lemma:UniformEstimate}
There exists $\varepsilon>0$ such that, for all $\bmh$ satisfying $\Vert
\bmh-\mathring{\bmh}\Vert_{H^s}<\varepsilon $, $s\geq 4$, we have the 
estimate
\begin{equation}
\label{UniformEllipticEstimate}
 \Vert\bm\eta\Vert_{H^2} \leq 2C\left(\Vert \mathcal{K}_{\bmh}^*\circ\mathcal{K}_{\bmh}(\bm\eta)\Vert_{L^2}+\Vert \bm\eta\Vert_{H^1}\right) 
\end{equation}
for all $\bm\eta\in H^2(\mathcal{J}(\mathcal{S}))$, with $C$ as in
\eqref{EllipticEstimate}, depending only on $\mathring{\bmh}$.
\end{lemma}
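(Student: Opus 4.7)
The plan is to obtain the uniform estimate by treating $\mathcal{K}_{\bmh}^{\ast}\circ\mathcal{K}_{\bmh}$ as a perturbation of $\mathring{\mathcal{K}}^{\ast}\circ\mathring{\mathcal{K}}$ whose operator norm $H^{2}\to L^{2}$ can be made arbitrarily small by restricting to $\bmh$ sufficiently close to $\mathring{\bmh}$ in $H^{s}$, and then absorbing this perturbation into the left-hand side of the background estimate \eqref{EllipticEstimate}.

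The first step is to write the difference
\[
R(\bmh)\equiv \mathcal{K}_{\bmh}^{\ast}\circ\mathcal{K}_{\bmh} - \mathring{\mathcal{K}}^{\ast}\circ\mathring{\mathcal{K}}
\]
as a second-order linear operator on $\mathcal{J}(\mathcal{S})$ whose coefficients are universal polynomial expressions in $\bmh,\bmh^{-1},\mathring{\bmh}^{-1}$ and the partial derivatives of $\bmh-\mathring{\bmh}$ up to order two, vanishing identically when $\bmh=\mathring{\bmh}$. This is possible because $\mathcal{K}_{\bmh}$ depends on $\bmh$ only through the Levi-Civita connection $D$, the volume form $\epsilon$, and (via the formal adjoint) the $L^{2}(\bmh)$-inner product on sections; the subleading terms coming from the difference between the $L^{2}(\bmh)$ and $L^{2}(\mathring{\bmh})$ adjoints contribute only lower-order operators with coefficients that also vanish at $\bmh=\mathring{\bmh}$.

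The second step is to bound $\|R(\bmh)\bm\eta\|_{L^{2}}$ in terms of $\|\bmh-\mathring{\bmh}\|_{H^{s}}$ and $\|\bm\eta\|_{H^{2}}$. Since $\dim\mathcal{S}=3$ and $s\geq 4$, the Sobolev embedding $H^{s}\hookrightarrow C^{2}$ together with the Schauder ring property (see e.g.\ \cite{Cho08}) yields an estimate of the form
\[
\|R(\bmh)\bm\eta\|_{L^{2}}\leq K\bigl(\|\bmh-\mathring{\bmh}\|_{H^{s}}\bigr)\,\|\bm\eta\|_{H^{2}},
\]
where $K:[0,\infty)\to[0,\infty)$ is continuous with $K(0)=0$. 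Choosing $\varepsilon>0$ so small that $C\,K(\varepsilon)\leq\tfrac{1}{2}$ with $C$ as in \eqref{EllipticEstimate}, the triangle inequality combined with \eqref{EllipticEstimate} gives
\[
\|\bm\eta\|_{H^{2}}\leq C\|\mathcal{K}_{\bmh}^{\ast}\circ\mathcal{K}_{\bmh}(\bm\eta)\|_{L^{2}}+CK(\varepsilon)\|\bm\eta\|_{H^{2}}+C\|\bm\eta\|_{H^{1}},
\]
and the middle term is absorbed on the left to produce \eqref{UniformEllipticEstimate} with constant $2C$.

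The main technical obstacle is the second step: carefully identifying the coefficients of $R(\bmh)$ as functions of $\bmh-\mathring{\bmh}$ and its derivatives, and verifying via Schauder/Moser estimates that $K(0)=0$. The hypothesis $s\geq 4$ is precisely what is needed to control the second-order coefficients of $R(\bmh)$ in $L^{\infty}$, as well as the products of lower-order coefficients that arise when expanding $\mathcal{K}_{\bmh}^{\ast}\circ\mathcal{K}_{\bmh}$; anything less would force one to trade regularity on $\bmh$ against regularity on $\bm\eta$ in a way that the absorption argument cannot accommodate.
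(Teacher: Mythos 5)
Your proposal is correct and follows essentially the same route as the paper: the paper likewise writes the difference $\mathcal{K}_{\bmh}^{*}\circ\mathcal{K}_{\bmh}-\mathring{\mathcal{K}}^{*}\circ\mathring{\mathcal{K}}$ schematically as $(\bmh-\mathring{\bmh})\partial\partial\bm\eta+\bmS\cdot\partial\bm\eta+(\partial\bmS+\bmS\cdot\bmS)\bm\eta$ with $\bmS$ the transition tensor, deduces the Lipschitz-type bound $\Vert(\mathcal{K}^*_{\bmh}\circ\mathcal{K}_{\bmh}-\mathring{\mathcal{K}}^*\circ\mathring{\mathcal{K}})\bm\eta\Vert_{L^2}\leq\tilde{C}\Vert\bmh-\mathring{\bmh}\Vert_{H^2}\Vert\bm\eta\Vert_{H^2}$, and absorbs the perturbation into the left-hand side of \eqref{EllipticEstimate} by taking $\varepsilon=1/(2C\tilde{C})$. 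Your only (harmless) variation is phrasing the smallness through a general modulus $K$ with $K(0)=0$ rather than the explicit linear Lipschitz constant.
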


\begin{proof}
We first note that there exists some constant $\tilde{C}$ such that
for any given $\bm\eta\in\mathcal{J}(\mathcal{S})$, we have
\begin{equation}
\label{LipschitzProperty}
\Vert (\mathcal{K}^*_{\bmh}\circ\mathcal{K}_{\bmh}-\mathring{\mathcal{K}}^*\circ\mathring{\mathcal{K}})\bm\eta\Vert_{L^2}\leq \tilde{C}\Vert \bmh-\mathring{\bmh}\Vert_{H^2}\Vert \bm\eta\Vert_{H^2} 
\end{equation}
 ---this follows from the fact that, schematically,  
\[
(\mathcal{K}^*_{\bmh}\circ\mathcal{K}_{\bmh}-\mathring{\mathcal{K}}^*\circ\mathring{\mathcal{K}})\bm\eta\sim
(\bmh-\mathring{\bmh})\partial\partial\bm\eta+\bmS\cdot \partial\bm\eta+(\partial\bmS+\bmS\cdot\bmS)\bm\eta 
\] 
with $\bmS$ the transition tensor covariant derivatives associated to
the metrics $\mathring{\bmh}$ and $\bmh$, from which it is clear then that
$(\mathcal{K}^*_{\bmh}\circ\mathcal{K}_{\bmh}-\mathring{\mathcal{K}}^*\circ\mathring{\mathcal{K}})\bm\eta$
may be bounded above by $\Vert \bmh-\mathring{\bmh}\Vert_{H^2}\Vert
\bm\eta\Vert_{H^2}$.
\\

Now, using inequality 
\eqref{LipschitzProperty} we find that for all $\bmh$ satisfying $\Vert
\bmh-\mathring{\bmh}\Vert_{H^2}<\varepsilon$, and for all $\bm\eta\in \mathcal{J}(\mathcal{S})$,  
\begin{align*}
\Vert \bm\eta\Vert_{H^2}&\leq C\left(\Vert \mathring{\mathcal{K}}^*\circ\mathring{\mathcal{K}}(\bm\eta)\Vert_{L^2}+\Vert \bm\eta\Vert_{H^1}\right)\\
 &\leq C\left(\Vert \mathcal{K}_{\bmh}^*\circ\mathcal{K}_{\bmh}(\bm\eta)\Vert_{L^2}+\Vert (\mathring{\mathcal{K}}^*\circ\mathring{\mathcal{K}}-\mathcal{K}_{\bmh}^*\circ\mathcal{K}_{\bmh})\bm\eta\Vert_{L^2}+\Vert \bm\eta\Vert_{H^1}\right)\\
 &\leq C\left(\Vert \mathcal{K}_{\bmh}^*\circ\mathcal{K}_{\bmh}(\bm\eta)\Vert_{L^2}+\varepsilon \tilde{C}\Vert \bm\eta\Vert_{H^2} +\Vert \bm\eta\Vert_{H^1}\right),
\end{align*}
with $C$ depending only on $\mathring{\bmh}$. Thus, taking
$\varepsilon=1/(2C\tilde{C})$ and rearranging we have that 
\begin{equation}
\label{UniformEllipticEstimate}
 \Vert\bm\eta\Vert_{H^2} \leq 2C\left(\Vert \mathcal{K}_{\bmh}^*\circ\mathcal{K}_{\bmh}(\bm\eta)\Vert_{L^2}+\Vert \bm\eta\Vert_{H^1}\right) 
\end{equation}
for all $\bm\eta\in H^2(\mathcal{J}(\mathcal{S}))$ and for all $\Vert \bmh-\mathring{\bmh}\Vert_{H^2}<\varepsilon $ as required.
\end{proof}

\begin{remark}
{\em The content of inequality \eqref{LipschitzProperty} may be summarised by the statement that the map
\[ 
\begin{array}{cccc}
M: & H^2(\mathscr{S}^2(\mathcal{S})) & \longrightarrow & B(H^2(\mathcal{J}(\mathcal{S})),L^2(\mathcal{J}(\mathcal{S}))\\
& \bmh & \longmapsto & \mathcal{K}_{\bmh}^*\circ\mathcal{K}_{\bmh}
\end{array}
\]
is Lipschitz continuous at $\bmh=\mathring{\bmh}$ ---here,
$B(\cdot,\cdot)$ denotes the Banach space of bounded linear maps
between the indicated Banach spaces, endowed with the operator norm---
with $\tilde{C}$ the Lipschitz constant, which depends on the precise
structure of $\mathcal{K}^*\circ\mathcal{K}$ and may be computed
explicitly.}
\end{remark}


\subsubsection{The main argument}
\label{Subsubsection:SufficiencyMainArgument}

Assume now that the procedure described in Section \ref{Section:ExistenceOfSolutions} has
been carried out ---that is to say, we have established the existence
of a neighbourhood of solutions to the auxiliary system. For each such
solution, the corresponding zero quantities $Q_i,~J_{ijk}$ necessarily
satisfy
\begin{subequations}
\begin{eqnarray}
&& \mathcal{K}_\bmh(\bmJ)=0, \label{AuxiliaryEqFrExtrCurvInZeroQuantities}\\
&& D^{i}(\mathcal{L}_Q \bmh)_{ij}-\tfrac{1}{2}D_j(\mathcal{L}_Q \bmh)_i{}^i =  K_{ik} J_{j}{}^{ik} -
K_{jk}J^{ik}{}_{i}  - K J_{j}{}^{i}{}_{i}. \label{IntegrabilityConditionForQ}
\end{eqnarray}
\end{subequations}
The first equation collects together \eqref{HavingSolvedExtrinsicCurvAuxiliaryEq} and \eqref{IntegrabilityConditionExtrCurv}, while the latter is the remaining integrability condition
 --- see Section \ref{Subsection:SufficiencyArgument}. We regard the
above as equations for a pair of tensor fields $\bmQ\in\Lambda^1(\mathcal{S}),~\bmJ\in\mathcal{J}(\mathcal{S})$, which we
aim to prove are necessarily vanishing ---at this point we forget about the 
definitions of the zero quantities $Q_i,~J_{ijk}$ in terms of the unknown tensor fields. 
\\

We first use the results of the previous section to show that
injectivity of the operator $\mathcal{K}_{\bmh}$ is stable under
$H^s$-perturbations, $s\geq 4$, of the metric. Note that, in the following, 
all Sobolev norms are taken with respect to the background metric, $\mathring{\bmh}$.

\begin{proposition}
\label{Prop:StabilityCodazziOperator}
There exists $\varepsilon>0$ such that for any metric $\bmh$ satisfying
$\Vert \bmh-\mathring{\bmh}\Vert_{H^s}<\varepsilon $, the corresponding
operator
$\mathcal{K}_{\bmh}$ is injective in $H^2$.
\end{proposition}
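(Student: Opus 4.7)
The plan is to argue by contradiction, combining the uniform second-order elliptic estimate of Lemma \ref{Lemma:UniformEstimate} with a Rellich--Kondrachov compactness argument and the injectivity of the background operator established in Proposition \ref{Prop:BackgroundSufficiencyJ}. The key observation is that $\ker \mathcal{K}_{\bmh}= \ker(\mathcal{K}_{\bmh}^*\circ\mathcal{K}_{\bmh})$, so one can apply the elliptic estimate for the second-order composition directly to elements of the kernel of the first-order operator.

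Suppose then, for contradiction, that no such $\varepsilon$ exists. Then one can produce a sequence of metrics $\bmh_n \to \mathring{\bmh}$ in $H^s$ together with nonzero Jacobi tensors $\bmJ_n \in H^2(\mathcal{J}(\mathcal{S}))$ satisfying $\mathcal{K}_{\bmh_n}(\bmJ_n)=0$. Normalise by setting $\Vert \bmJ_n \Vert_{H^2}=1$. For $n$ sufficiently large, $\Vert \bmh_n - \mathring{\bmh}\Vert_{H^s}$ lies below the threshold in Lemma \ref{Lemma:UniformEstimate}, so applying the uniform estimate to $\mathcal{K}_{\bmh_n}^*\circ\mathcal{K}_{\bmh_n}(\bmJ_n)=0$ yields
\[
1 = \Vert \bmJ_n \Vert_{H^2} \;\leq\; 2C\, \Vert \bmJ_n \Vert_{H^1},
\]
so the $H^1$-norms are bounded uniformly below. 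By the compact embedding $H^2(\mathcal{S}) \hookrightarrow H^1(\mathcal{S})$ on the closed manifold $\mathcal{S}$, one extracts a subsequence (not relabelled) converging in $H^1$ to some limit $\bmJ_\infty$, with $\Vert \bmJ_\infty \Vert_{H^1} \geq 1/(2C) > 0$; in particular $\bmJ_\infty$ is nontrivial.

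The next step is to pass to the limit in $\mathcal{K}_{\bmh_n}(\bmJ_n)=0$. Since $s \geq 4$, the Sobolev embedding $H^s(\mathcal{S}) \hookrightarrow C^2(\mathcal{S})$ in three dimensions ensures that the coefficients of the first-order operator $\mathcal{K}_{\bmh_n}$ (which depend polynomially on $\bmh_n$, $\bmh_n^{-1}$ and the Christoffel symbols of $\bmh_n$) converge uniformly to those of $\mathring{\mathcal{K}}$. Splitting
\[
\mathcal{K}_{\bmh_n}(\bmJ_n) - \mathring{\mathcal{K}}(\bmJ_\infty) = (\mathcal{K}_{\bmh_n}-\mathring{\mathcal{K}})(\bmJ_n) + \mathring{\mathcal{K}}(\bmJ_n-\bmJ_\infty),
\]
the first summand tends to zero in $L^2$ by the $C^0$-smallness of the coefficient differences against the uniform $H^1$-bound on $\bmJ_n$, while the second tends to zero in $L^2$ by the $H^1$-convergence of $\bmJ_n$ to $\bmJ_\infty$. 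It follows that $\mathring{\mathcal{K}}(\bmJ_\infty)=0$ in $L^2$, contradicting the injectivity of $\mathring{\mathcal{K}}$ guaranteed by Proposition \ref{Prop:BackgroundSufficiencyJ}.

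The principal technical point, and the part that most requires care, is the passage to the limit: one needs enough regularity on the coefficients of $\mathcal{K}_{\bmh_n}$ to multiply against $H^1$-convergent sequences and extract an $L^2$-limit. The requirement $s \geq 4$ is precisely what delivers this through Sobolev embedding, so the argument is self-contained. Everything else is a standard contradiction-compactness scheme, with Lemma \ref{Lemma:UniformEstimate} and Proposition \ref{Prop:BackgroundSufficiencyJ} supplying the two essential inputs.
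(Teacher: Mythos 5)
Your proof is correct and follows the same overall contradiction--compactness scheme as the paper, resting on the same two inputs (Lemma \ref{Lemma:UniformEstimate} and Proposition \ref{Prop:BackgroundSufficiencyJ}), but the execution differs in two places, both of which streamline the argument. First, to see that the limit is nontrivial, the paper upgrades the Rellich--Kondrakov $H^1$-convergence to $H^2$-Cauchy convergence by applying the uniform estimate to differences $\bm\eta^{(n)}-\bm\eta^{(m)}$, so that the limit inherits unit $H^2$-norm; you instead read off the lower bound $\Vert \bmJ_n\Vert_{H^1}\geq 1/(2C)$ directly from the estimate applied to each $\bmJ_n$, which passes to the $H^1$-limit and makes the Cauchy-in-$H^2$ step unnecessary. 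Second, to obtain the limit equation, the paper works with the second-order composition $\mathcal{K}^*_{\bmh}\circ\mathcal{K}_{\bmh}$ and the Lipschitz continuity of $\bmh\mapsto \mathcal{K}^*_{\bmh}\circ\mathcal{K}_{\bmh}$, whereas you pass to the limit in the first-order equation $\mathcal{K}_{\bmh_n}(\bmJ_n)=0$ using uniform convergence of the coefficients (via $H^s\hookrightarrow C^1$ for $s\geq 4$) against the uniform $H^1$-bound; this is legitimate and arguably more elementary. The one step you should make explicit is the final invocation of Proposition \ref{Prop:BackgroundSufficiencyJ}: your limit $\bmJ_\infty$ is a priori only in $H^1$ and satisfies $\mathring{\mathcal{K}}(\bmJ_\infty)=0$ weakly, so you need elliptic regularity for the elliptic operator $\mathring{\mathcal{K}}$ (Lemma \ref{Lemma:EllipticityOfCurlyKOperator}) to conclude that $\bmJ_\infty$ is smooth before applying the background injectivity result, whose proof uses integration by parts on second-derivative identities. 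This is routine and does not affect the validity of the argument.
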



\begin{proof}
Suppose not. Then there exists a \emph{failure sequence}
$\{(\bmh^{(n)},~\bm\eta^{(n)})\}$, $n\in\mathbb{N}$ ---i.e. a sequence of
Riemannian metrics $\bmh^{(n)}$ converging to $\mathring{\bmh}$ in
$H^2$ and corresponding non-zero Jacobi tensors
$\bm\eta^{(n)}\in\mathcal{J}(\mathcal{S})$ for which
\[
\mathcal{K}_{(n)}(\bm\eta^{(n)})=0 
\]
for each $n\in\mathbb{N}$ ---here, $\mathcal{K}_{(n)}\equiv
\mathcal{K}_{\bmh^{(n)}}$. Since $\mathcal{K}_{(n)}$ is linear, we may
take each $\bm\eta^{(n)}$ to be of unit $H^2$-norm.
Hence, by the Rellich-Kondrakov Theorem, since the sequence
$\{\bm\eta^{(n)}\}$ is bounded in $H^2$, there is a subsequence
that is Cauchy in $H^1$ ---let us assume without loss of generality
that $\{\bm\eta^{(n)}\}$ is Cauchy--- converging to some limit
$\bm\eta^\bullet\in\mathcal{J}(\mathcal{S})$. We
now aim to show using the inequality \eqref{UniformEllipticEstimate} that the
sequence is in fact Cauchy in $H^2$. Let us restrict
to a the tail of the subsequence
(relabelling, if necessary) for which $\Vert
\bmh^{(n)}-\mathring{\bmh}\Vert<\varepsilon $ with $\varepsilon$ as given in
Proposition \ref{Lemma:UniformEstimate}. Applying the inequality 
\eqref{UniformEllipticEstimate} to
$\bm\eta^{(m,n)}\equiv\bar{\bm\eta}^{(n)}-\bar{\bm\eta}^{(m)}$, with
$\bmh=\bmh^{(n)}$, we have
\begin{align}
&\hspace{-3mm}\Vert \bm\eta^{(m,n)}\Vert_{H^2}\nonumber\\
&\leq 2C \left(\Vert \mathcal{K}^*_{(n)}\circ\mathcal{K}_{(n)}(\bm\eta^{(m,n)})\Vert_{L^2}+\Vert\bm\eta^{(m,n)}\Vert_{H^1}\right)\nonumber\\
&=2C \left(\Vert \mathcal{K}^*_{(n)}\circ\mathcal{K}_{(n)}(\bm\eta^{(m)}) \Vert_{L^2}+\Vert \bm\eta^{(m,n)}\Vert_{H^1}\right)\nonumber\\
&= 2C \left(\Vert (\mathcal{K}^*_{(n)}\circ\mathcal{K}_{(n)}-\mathcal{K}^*_{(m)}\circ\mathcal{K}_{(m)})\bm\eta^{(m)} \Vert_{L^2}+\Vert \bm\eta^{(m,n)}\Vert_{H^1}\right).\label{ShowingSequenceIsCauchy}
\end{align}    
The second line follows from by substituting for $\bm\eta^{(m,n)}$ in
the first term and using the fact that, by assumption,
$\mathcal{K}_{(n)}(\bar{\bmeta}^{(n)})=0$; the third line follows
similarly. Now,
\begin{multline*}
\Vert(\mathcal{K}^*_{(n)}\circ\mathcal{K}_{(n)}-\mathcal{K}^*_{(m)}\circ\mathcal{K}_{(m)})\bm\eta^{(m)}\Vert_{L^2}\leq \Vert(\mathcal{K}^*_{(n)}\circ\mathcal{K}_{(n)}-\mathring{\mathcal{K}}^*\circ\mathring{\mathcal{K}})\bm\eta^{(m)}\Vert_{L^2}\\
+\Vert(\mathcal{K}^*_{(m)}\circ\mathcal{K}_{(m)}-\mathring{\mathcal{K}}^*\circ\mathring{\mathcal{K}})\bm\eta^{(m)}\Vert_{L^2},
\end{multline*}
which goes to zero in the limit $m,n\longrightarrow\infty$, again
using the Lipschitz property of $M$ and the fact that
$\bm\eta^{(m)}$ is bounded in $H^2$. Collecting together the above
observations, we see from \eqref{ShowingSequenceIsCauchy} that as
$m,n\longrightarrow\infty$, $\bmeta^{(m,n)}\longrightarrow 0$ in $H^2$
---i.e. the sequence $\bar{\bm\eta}^{(n)}$ is Cauchy in $H^2$, and
therefore the limit
$\bmeta^\bullet\in\mathcal{J}(\mathcal{S})$ is in
$H^2$. Clearly $\bm\eta^\bullet$ is non-zero ---in fact, one has that $\Vert
\bm\eta^\bullet\Vert_{H^2}=1$.
\\

Using the Lipschitz property of $M$ once more, along with the fact
that $\bm\eta^{(n)}$ converges to $\bm\eta^\bullet$ in $H^2$, one
finds that 
\[
\Vert\mathring{\mathcal{K}}^*\circ\mathring{\mathcal{K}}(\bm\eta^\bullet)\Vert_{L^2}=\lim_{n\rightarrow\infty}\Vert\mathcal{K}^*_{(n)}\circ\mathcal{K}_{(n)}(\bm\eta^{(n)})\Vert_{L^2}=0. 
\]
Hence,
$\mathring{\mathcal{K}}^*\circ\mathring{\mathcal{K}}(\bm\eta^\bullet)=0$,
and it follows via integration by parts that
$\mathring{\mathcal{K}}(\bm\eta^\bullet)=0$. However,
$\bm\eta^\bullet\in\mathcal{J}(\mathcal{S})\setminus
\lbrace 0\rbrace$ and so we obtain a contradiction, since
$\mathring{\mathcal{K}}$ is injective, as shown in Proposition \ref{Prop:BackgroundSufficiencyJ}.
\end{proof} 

We are now in a position to prove the main result of this section:
 
\begin{proposition}[\textbf{\em Sufficiency}]
\label{PropositionSufficiency}
There exists an open neighbourhood $\mathcal{V}$ of
$\mathring{\bmh}\in\mathcal{B}_{\bmh}$, such that for each
$\bmh\in\mathcal{V}$, $(J_{ijk},Q_i)=(\bm0,\bm0)$ is the unique $H^2$
solution of
\eqref{AuxiliaryEqFrExtrCurvInZeroQuantities}--\eqref{IntegrabilityConditionForQ}.
\end{proposition}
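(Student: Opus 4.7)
The strategy follows the semi-decoupled structure already highlighted in Section \ref{Subsection:SufficiencyArgument}: first eliminate $J_{ijk}$ using the stability result for $\mathcal{K}_\bmh$, and then use the resulting elliptic equation for $Q_i$ together with integration by parts on the compact manifold $\mathcal{S}$.

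First I would apply Proposition \ref{Prop:StabilityCodazziOperator} directly to equation \eqref{AuxiliaryEqFrExtrCurvInZeroQuantities}: there exists $\varepsilon>0$ such that if $\Vert\bmh-\mathring{\bmh}\Vert_{H^s}<\varepsilon$ (with $s\geq 4$), then $\mathcal{K}_\bmh$ is injective on $H^2(\mathcal{J}(\mathcal{S}))$, and hence $J_{ijk}=0$. Note that although we only have $\bmh\in\mathcal{B}_\bmh$ a priori, the existence result furnished by Proposition \ref{PropExistenceOfSolutionsToSecondarySystem} guarantees that $\bmh$ is $H^s$-close to $\mathring{\bmh}$ provided the free data $(\phi,\bmT,\bar{\bmT})$ are chosen sufficiently small, by continuity of $\nu$.

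Having established $J_{ijk}=0$, equation \eqref{IntegrabilityConditionForQ} reduces to $D^i(\mathcal{L}_Q\bmh)_{ij}-\tfrac{1}{2}D_j(\mathcal{L}_Q\bmh)_i{}^i=0$, which by the calculation carried out in Section \ref{Subsubsection:EllipticEquationsForQandJ} is equivalent to
\[
\Delta_\bmh Q_j + r_{ij}[\bmh]\,Q^i = 0.
\]
Pairing with $Q^j$, integrating over $\mathcal{S}$ and integrating the Laplacian term by parts yields the identity \eqref{SufficiencyIntegralForQ}, namely
\[
\int_{\mathcal{S}}\left(\Vert DQ\Vert_\bmh^2 - r_{ij}[\bmh]\,Q^iQ^j\right)d\mu_\bmh = 0.
\]

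The main step is then a continuity/perturbation argument to show that the quadratic form $\bmQ\mapsto \int_{\mathcal{S}}(\Vert DQ\Vert_\bmh^2-r_{ij}[\bmh]Q^iQ^j)d\mu_\bmh$ remains coercive for $\bmh$ sufficiently close to $\mathring{\bmh}$. At the background, $\mathring{r}_{ij}=-2\mathring{h}_{ij}$ is negative-definite, so the integrand at $\bmh=\mathring{\bmh}$ is the sum of two pointwise non-negative terms and is strictly positive unless $\bmQ=0$ (and indeed one obtains $\Vert\bmQ\Vert_{H^1}^2\lesssim \int(\Vert \mathring{D}Q\Vert^2+2|Q|^2)d\mu_{\mathring{\bmh}}$). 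The map $\bmh\mapsto (r_{ij}[\bmh],\bmh,D[\bmh])$ is continuous in the appropriate $H^s$-topology by the Schauder ring property applied to the explicit coordinate formulae for the Christoffel symbols and Ricci tensor; consequently, after possibly shrinking $\varepsilon$, the integrand remains uniformly bounded below by $c(\Vert DQ\Vert_{\mathring{\bmh}}^2+|Q|^2)$ for some $c>0$ (Sobolev embedding and the $H^s$-control of $\bmh-\mathring{\bmh}$, $s\geq 4$, ensures $C^0$-smallness of the Ricci perturbation and of the difference of the volume measures). This forces $Q_i=0$ in $H^1$, and by elliptic regularity applied to $\Delta_\bmh Q_j+r_{ij}Q^i=0$ we conclude $Q_i=0$ in $H^2$.

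The main obstacle is the last step: establishing, with quantitative control, that the Ricci curvature $r_{ij}[\bmh]$ stays sufficiently close to $\mathring{r}_{ij}=-2\mathring{h}_{ij}$ so that the quadratic form remains coercive. This requires care in tracking how an $H^s$-perturbation of the metric propagates (via two derivatives) to the Ricci tensor, and then using the Sobolev embedding $H^{s-2}\hookrightarrow C^0$ (valid for $s\geq 4$ in dimension $3$) to conclude $C^0$-smallness of the curvature perturbation; the rest is standard. Taking $\mathcal{V}$ to be the intersection of $\mathcal{B}_\bmh$ with the $H^s$-ball provided by Proposition \ref{Prop:StabilityCodazziOperator} and by the coercivity argument completes the proof.
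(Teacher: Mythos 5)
Your proposal is correct and follows essentially the same route as the paper: $J_{ijk}=0$ is obtained directly from Proposition \ref{Prop:StabilityCodazziOperator} by shrinking $\mathcal{V}$, and $Q_i=0$ then follows from the integral identity \eqref{SufficiencyIntegralForQ} together with the $C^0$-closeness of $r_{ij}[\bmh]$ to $\mathring{r}_{ij}=-2\mathring{h}_{ij}$ (via Sobolev embedding), which keeps the Ricci term in the quadratic form negative-definite. The only cosmetic difference is that you retain the gradient term in a coercivity estimate, whereas the paper simply discards it as non-negative; both yield $\bmQ=0$ immediately.
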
 

\begin{proof}
We begin by showing that $J_{ijk}=0$. This follows immediately from
the previous proposition provided we choose $\mathcal{V}$ to be a
suitably small neighbourhood. 
\\

Having established that $J_{ijk}=0$,
\eqref{IntegrabilityConditionForQ} implies that $Q_i$ satisfies the
integral identity \eqref{SufficiencyIntegralForQ}. Hence, it follows
that
\[
0=\int_{\mathcal{S}}\left(\Vert D\bmQ\Vert_\bmh^2-r_{ij}Q^iQ^j\right) ~d\mu_\bmh\geq \int_{\mathcal{S}}-r_{ij}Q^iQ^j ~d\mu_\bmh \longrightarrow \int_{\mathcal{S}} 2\Vert\bmQ\Vert^2_{\mathring{\bmh}} ~d\mu_{\mathring{\bmh}},
\]
where convergence follows from the fact that, since $\bmh\rightarrow \mathring{\bmh}$
in $H^4$, we have $r[\bmh]_{ij}\rightarrow \mathring{r}_{ij}=-2\mathring{h}_{ij}$ in $C^0$ ---convergence of the latter in $H^2$ is immediate, and an application of the Sobolev Embedding Theorem establishes convergence in $C^0$. Hence, provided we take
$\mathcal{V}$ to be a suitably-small neighbourhood, it follows that for any $\bmh\in\mathcal{V}$ we necessarily have
$\bmQ=0$.
\end{proof}

Hence, it follows that for solutions
$(K_{ij},{S}_{ij},\bar{S}_{ij},h_{ij})$ of the auxiliary system
sufficiently close to the background data, the corresponding zero quantities $Q_i,~J_{ijk}$ must
necessarily vanish, implying $(K_{ij},{S}_{ij},\bar{S}_{ij},h_{ij})$ 
indeed solves the extended constraint equations. This concludes the
proof of sufficiency. Collecting together Propositions
\ref{PropExistenceOfSolutionsToSecondarySystem} and
\ref{PropositionSufficiency}, one obtains Theorem \ref{MainTheorem}.

\begin{remark}
{\em Alternatively, we could also have shown $Q_i=0$ by using identity
\eqref{SufficiencyIntegralForQ} to first establish injectivity of the
operator $Q_i\mapsto\mathring{\Delta}Q_i+\mathring{r}_{ij}Q^j$, and
again appealing to the stability property of kernels of elliptic
operators.}
\end{remark}

\subsection{Parametrising the space of freely-prescribed data}
\label{Section:ParametrisingFreeData}

We have seen that, according to Theorem \ref{MainTheorem}, there exist
solutions of the extended constraints corresponding to
freely-prescribed data $(\phi,\bmT,\bar{\bmT})$ sufficiently close to
$(\bm0,\bm0,\bm0)$, where $\bmT,\bar{\bmT}\in
\mathscr{S}_{TT}(\mathcal{S};\mathring{\bmh})$. In this last subsection we aim to give an
explicit parametrisation of the space of freely-prescribed data, using
the ideas of \cite{Bei97} for the construction of transverse-tracefree tensors on
conformally flat manifolds, which have previously been applied to the
construction of  \emph{generalised} Bowen-York data ---see
\cite{Bei00}. We first review the basic ideas.


\subsubsection{The Gasqui--Goldschmidt complex}

Let $\mathcal{H}(\bmh)_{ij}$ denote the \emph{Cotton--York tensor} associated to
a metric $\bmh$ ---namely
\[
\mathcal{H}_{ij}\equiv \epsilon_{kl(i}D^k r_{j)}{}^l.
\]
The Cotton tensor $\mathcal{H}_{ij}$ is symmetric and
tracefree. Moreover, by the \emph{third Bianchi identity} it is also
divergence-free. Recall also that, in dimension $3$, the vanishing of
the Cotton-York tensor is equivalent to local conformal-flatness
---see e.g. \cite{CFEBook}. Now consider the linearisation,
$\mathring{H}(\bmeta)_{ij}$,  about a background metric $\mathring{\bmh}$, given by the Fr\'{e}chet derivative
\begin{align*}
\mathring{H}(\bmeta)_{ij}&\equiv \frac{d}{d\tau}\mathcal{H}(\mathring{\bmh}+\tau\bm\eta)_{ij}\bigg\vert_{\tau=0} \\
&=\mathring{\epsilon}^{kl}{}_i(\mathring{D}_k \breve{r}(\eta)_{lj}-C(\bm\eta)^m{}_{kj}\mathring{r}_{lm})+\eta_{(i}{}^k\mathring{\mathcal{H}}_{j)k}-\tfrac{1}{2}\eta\mathring{\mathcal{H}}_{ij}
\end{align*}
with indices raised using $\mathring{\bmh}$. Here, $\eta\equiv
\text{tr}_{\mathring{\bmh}}(\bm\eta)$, the operator
$C(\cdot)^i{}_{jk}$is as defined
in \eqref{PerturbationOfGradient} and $\breve{r}(\eta)_{ij}$ is the
linearised Ricci operator acting on the metric perturbation
$\eta_{ij}$, and given by equation \eqref{LinearisedRicci}. 

\medskip
According to the above observations, if $\mathring{\bmh}$ is
\emph{conformally flat}, then $\mathring{H}(\bmeta)\in
\mathscr{S}^2_0(\mathcal{S};\mathring{\bmh})$. Moreover, in the case
of conformally-flat data, $H(\bmeta)_{ij}$ is also divergence-free 
since the linearisation of the third Bianchi identity gives 
\begin{align*}
0&=\frac{d}{d\tau}\delta_\bmh(\mathcal{H}(\bmh))_i \bigg\vert_{\tau=0}\\
&= \mathring{\delta}(H(\bmeta))_i-\eta^{kj}\mathring{D}_k\mathring{\mathcal{H}}_{ij}-\tfrac{1}{2}\mathring{\mathcal{H}}^{jk}\mathring{D}_i\eta_{jk}-\mathring{\mathcal{H}}_i{}^k\mathring{D}^j\eta_{jk}+\tfrac{1}{2}\mathring{\mathcal{H}}_i{}^k\mathring{D}_k\eta\\
&=\mathring{\delta}(\mathring{H}(\bmeta))_i
\end{align*}
where to pass from the second to the third line it has been used that
$\mathring{\mathcal{H}}_{ij}=0$ for a conformally flat background. Hence,
$\mathring{H}(\bmeta)_{ij}\in\mathscr{S}_{TT}(\mathcal{S};\mathring{\bmh})$. The
above features are expressed succinctly in the
\textit{Gasqui-Goldschmidt} elliptic complex ---see \cite{GasGol84,Bei97}:
\[
0\rightarrow \Gamma(\Lambda^1(\mathcal{S}))
\xrightarrow{\mathring{L}}\Gamma(\mathscr{S}_0^2(\mathcal{S};\mathring{\bmh}))\xrightarrow{\mathring{H}}\Gamma(\mathscr{S}_0^2(\mathcal{S};\mathring{\bmh}))\xrightarrow{\mathring{\delta}}\Gamma(\Lambda^1(\mathcal{S}))\rightarrow
0,
\]
which holds for any conformally flat manifold
$(\mathcal{S},\mathring{\bmh})$. Here, we are using $\Gamma(\cdot)$ to denote smooth sections of the indicated tensor bundle. Another consequence of the elliptic complex is that the linear sixth-order operator $P\equiv \mathring{H}^2+(\mathring{L}\circ\mathring{\delta})^3$ is elliptic ---see \cite{Bei97}. It is straightforward to see that $\text{ker}~P=\text{ker}~\mathring{H}\cap\text{ker}~\mathring{\delta}$, and hence that $P$ is injective for a conformally rigid manifold $(\mathcal{S},\mathring{\bmh})$.  
\medskip 

For compact
$\mathcal{S}$, the above elliptic complex admits the following expression of \emph{Poincar\'{e}
duality}:
\[
\text{ker}~\mathring{\delta}/\mathring{H}(\Gamma(\mathscr{S}^2_0(\mathcal{S};\mathring{\bmh})))\simeq\text{ker}~\mathring{H}/\mathring{L}(\Gamma(\Lambda^1(\mathcal{S}))). 
\]
Hence, \emph{given our assumption of conformal rigidity}, it follows
that the map 
\[
\mathring{H}:\Gamma(\mathscr{S}_0^2(\mathcal{S};\mathring{\bmh}))\rightarrow\Gamma(
\mathscr{S}_{TT}(\mathcal{S};\mathring{\bmh}))
\]
 is, in fact, surjective
---any \emph{smooth} TT tensor may be constructed as the image under $H$ of some 
smooth tracefree $2-$tensor. This result is generalised in the following Proposition:

\begin{proposition}
\label{Prop:SurjectivityOfH}
Let $(\mathcal{S},\mathring{\bmh})$ be a smooth conformally-rigid (not necessarily hyperbolic) manifold, then the map
\[ \mathring{H}:H^{s+2}(\mathscr{S}^2_0(\mathcal{S};\mathring{\bmh}))\rightarrow H^{s-1}(\mathscr{S}_{TT}(\mathcal{S};\mathring{\bmh})),\]
is surjective for $s\geq 4$.
\end{proposition}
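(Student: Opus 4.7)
The plan is to leverage the sixth-order elliptic operator $P \equiv \mathring{H}^2 + (\mathring{L}\circ\mathring{\delta})^3$ introduced just before the proposition. First I would show that
$P: H^{s+5}(\mathscr{S}^2_0(\mathcal{S};\mathring{\bmh})) \to H^{s-1}(\mathscr{S}^2_0(\mathcal{S};\mathring{\bmh}))$ is a Banach space isomorphism; then, given $T \in H^{s-1}(\mathscr{S}_{TT}(\mathcal{S};\mathring{\bmh}))$, I would solve $Pv = T$ and take $u \equiv \mathring{H}(v) \in H^{s+2}(\mathscr{S}^2_0)$ (the loss of three derivatives reflecting the fact that $\mathring{H}$ is third-order), which is then the desired preimage.

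For the isomorphism step, $P$ is formally self-adjoint on $\mathscr{S}^2_0$: the linearised Cotton map $\mathring{H}$ is self-adjoint, and using $\mathring{L}^* = -2\mathring{\delta}$ one checks that $(\mathring{L}\circ\mathring{\delta})^* = \mathring{L}\circ\mathring{\delta}$, whence $(\mathring{L}\circ\mathring{\delta})^3$ is self-adjoint too. Since $P$ is elliptic on the closed manifold $\mathcal{S}$, it is Fredholm of index zero between the indicated Sobolev spaces. Its smooth kernel is $\text{ker}~\mathring{H}\cap \text{ker}~\mathring{\delta}$, as noted just above, and this is trivial by conformal rigidity; elliptic regularity promotes this triviality to injectivity on $H^{s+5}$, at which point Fredholm-of-index-zero plus injectivity yields the isomorphism.

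Given $T\in H^{s-1}(\mathscr{S}_{TT})$, obtain the unique $v\in H^{s+5}(\mathscr{S}^2_0)$ with $Pv = T$, and define $u \equiv \mathring{H}(v)\in H^{s+2}(\mathscr{S}^2_0)$. Then
\[
\mathring{H}(u) = \mathring{H}^2(v) = Pv - (\mathring{L}\circ\mathring{\delta})^3 v = T - (\mathring{L}\circ\mathring{\delta})^3 v,
\]
so the task reduces to establishing $(\mathring{L}\circ\mathring{\delta})^3 v = 0$. Applying $\mathring{\delta}$ to $Pv = T$, the assumption $\mathring{\delta}T = 0$ together with $\mathring{\delta}\circ\mathring{H} = 0$ (which holds on a conformally flat background, as verified in the preceding subsection) gives $(\mathring{\delta}\circ\mathring{L})^3(\mathring{\delta}v) = 0$. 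Since the absence of conformal Killing vectors---which is built into conformal rigidity---makes $\mathring{\delta}\circ\mathring{L}$ injective on $\Lambda^1(\mathcal{S})$, we conclude $\mathring{\delta}v = 0$, hence $(\mathring{L}\circ\mathring{\delta})^3 v = 0$, and therefore $\mathring{H}(u) = T$.

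The main obstacle is verifying that $P$ is genuinely an isomorphism rather than merely injective: this requires self-adjointness (to pin down the index at zero) and the promotion of smooth injectivity to Sobolev injectivity via elliptic regularity. Beyond that, the divergence-and-decouple computation that forces $(\mathring{L}\circ\mathring{\delta})^3 v = 0$ is essentially algebraic and should proceed routinely given the structural identities $\mathring{\delta}\circ \mathring{H} = 0$ and the injectivity of $\mathring{\delta}\circ\mathring{L}$.
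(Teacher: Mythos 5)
Your proposal is correct in substance but follows a genuinely different route from the paper. The paper's proof approximates $\bmT\in H^{s-1}(\mathscr{S}_{TT}(\mathcal{S};\mathring{\bmh}))$ by smooth TT tensors, invokes the surjectivity of $\mathring{H}$ onto \emph{smooth} TT tensors (a consequence of the Gasqui--Goldschmidt complex and Poincar\'e duality under conformal rigidity) to produce smooth preimages $\bmeta^{(n)}$, and then uses the a priori estimate $\Vert\bmeta\Vert_{H^{s+2}}\leq C\Vert P(\bmeta)\Vert_{H^{s-4}}$ to show that these preimages form a Cauchy sequence in $H^{s+2}$, passing to the limit by continuity. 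You instead invert $P$ directly between Sobolev spaces (elliptic, formally self-adjoint, hence Fredholm of index zero; injective by conformal rigidity plus elliptic regularity) and exhibit the preimage explicitly as $\mathring{H}(P^{-1}\bmT)$, reducing the remaining work to the algebraic identity $(\mathring{L}\circ\mathring{\delta})^3 v=0$. Your argument is more self-contained --- it does not need the density argument nor the smooth-section surjectivity via Poincar\'e duality, only the injectivity of $P$ --- and it gives slightly more (an explicit right inverse of $\mathring{H}$).

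One caveat: you assert that the absence of conformal Killing vectors is ``built into conformal rigidity.'' In the paper, conformal rigidity means only the absence of nontrivial tracefree Codazzi tensors, and the non-existence of conformal Killing fields is a separate consequence of negative Ricci curvature; since the proposition explicitly drops the hyperbolicity hypothesis, you are not entitled to injectivity of $\mathring{\delta}\circ\mathring{L}$ on $\Lambda^1(\mathcal{S})$. Fortunately your conclusion survives without it: writing $B\equiv\mathring{\delta}\circ\mathring{L}$, which is self-adjoint and negative semi-definite (as $\langle Bw,w\rangle_{L^2}=-\tfrac{1}{2}\Vert\mathring{L}w\Vert^2_{L^2}$), the relation $B^3(\mathring{\delta}v)=0$ yields successively $\mathring{L}(B\mathring{\delta}v)=0$, then $B^2(\mathring{\delta}v)=0$, then $B(\mathring{\delta}v)=0$, and finally $\mathring{L}(\mathring{\delta}v)=0$; this already gives $(\mathring{L}\circ\mathring{\delta})^3v=0$ (though not $\mathring{\delta}v=0$), which is all you need. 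With that repair the proof is complete.
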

\begin{proof}
Given $T_{ij}\in H^{s-1}(\mathscr{S}_{TT}(\mathcal{S};\mathring{\bmh}))$, then since $\Gamma(\mathscr{S}_{TT}(\mathcal{S};\mathring{\bmh}))\cap H^{s-1}(\mathscr{S}_{TT}(\mathcal{S};\mathring{\bmh}))$ is dense in $H^{s-1}(\mathscr{S}_{TT}(\mathcal{S};\mathring{\bmh}))$ we can approximate $T_{ij}$ by a Cauchy sequence $T^{(n)}_{ij}\in\Gamma(\mathscr{S}_{TT}(\mathcal{S};\mathring{\bmh}))$. Since $\mathring{\bmh}$ is conformally rigid there exists, for each $n\in\mathbb{N}$, an element $\eta^{(n)}_{ij}\in\Gamma(\mathscr{S}^2_0(\mathcal{S};\mathring{\bmh}))$ for which $\mathring{H}(\bm\eta^{(n)})_{ij}=T^{(n)}_{ij}$. Without loss of generality, we may assume that $\eta^{(n)}_{ij}\in\Gamma(\mathscr{S}_{TT}(\mathcal{S};\mathring{\bmh}))$ for each $n\in\mathbb{N}$ ---one takes the TT part of the York split of a given $\eta^{(n)}_{ij}$, if necessary, and uses the fact that $\text{Im}~\mathring{L}\subset \text{ker}~\mathring{H}$. Now since the elliptic operator $P\equiv \mathring{H}^2+(\mathring{L}\circ\mathring{\delta})^3$ is injective, it follows from standard results of elliptic PDE theory (see Appendix H of \cite{Bes08}, for instance) that there exists some constant $C>0$ for which the elliptic estimate 
\[\Vert \bm\eta\Vert_{H^{s+2}}\leq C\Vert P(\bm\eta)\Vert_{H^{s-4}}\]  
holds for all $\eta_{ij}\in H^{s+2}(\mathscr{S}^2_0(\mathcal{S};\mathring{\bmh}))$. In particular, it follows that
\begin{align*}
\Vert \bm\eta^{(m)}-\bm\eta^{(n)}\Vert_{H^{s+2}}& \leq C\Vert P(\bm\eta^{(m)}-\bm\eta^{(n)})\Vert_{H^{s-4}}\\
&\leq C\Vert \mathring{H}\circ \mathring{H}(\bm\eta^{(m)}- \bm\eta^{(n)})\Vert_{H^{s-4}}\\
&\leq C\Vert \mathring{H}(\bmT^{(m)}-\bmT^{(n)})\Vert_{H^{s-4}}\\
&\leq C\Vert \bmT^{(m)}-\bmT^{(n)}\Vert_{H^{s-1}}, 
\end{align*}
where the second line follows from the fact that, by assumption, $\eta^{(n)}_{ij}$ are divergence-free, and the fourth follows by continuity of $\mathring{H}$ as a map from $H^{s-1}$ to $H^{s-4}$. It follows that the sequence $\lbrace\bm\eta^{(n)}\rbrace$, $n\in\mathbb{N}$, is Cauchy in the $H^{s+2}$-norm and therefore converges to some $\eta_{ij}\in H^{s+2}(\mathscr{S}_{TT}(\mathcal{S};\mathring{\bmh}))$. By continuity we then have that $\mathring{H}(\bm\eta)_{ij}=T_{ij}$, as required. 
\end{proof}

\subsubsection{The parametrisation}

The above ideas can now be applied to obtain the \emph{parametrisation} of
the free data $T_{ij},~\bar{T}_{ij}$: 

\begin{proposition}
\label{Prop:ParametrisingFreeData}
Let $(\mathcal{S},\mathring{\bmh},\mathring{\bmK})$ satisfy the
conditions of Theorem \ref{MainTheorem}, and let $\mathcal{U}$ be the
neighbourhood of the freely specifiable data as
given there. There exists an open subset
\[
\tilde{\mathcal{U}}\subset\mathcal{B}_\bmeta\equiv H^{s-1}(\mathscr{S}^2_0(\mathcal{S};\mathring{\bmh}))\big),
\]
 such that:
\begin{enumerate}[i)]
\item for each $\bmeta,\bar{\bmeta}\in\tilde{\mathcal{U}}$ there exists a solution to the extended constraint equations with free data 
\begin{equation}
T_{ij}=\mathring{H}(\bmeta)_{ij}, \qquad \bar{T}_{ij}=\mathring{H}(\bar{\bmeta})_{ij}; \label{ParametrisationOfFreeData}
\end{equation}
\item all admissible free data (i.e. $\bmT,\bar{\bmT}\in\mathcal{U}$)
may be obtained in the form \eqref{ParametrisationOfFreeData}, for
some $\bmeta,\bar{\bmeta}\in\tilde{\mathcal{U}}$.
\end{enumerate} 
For a given $T_{ij}~\bar{T}_{ij}$, the choice of
$\eta_{ij},\bar{\eta}_{ij}$ in \eqref{ParametrisationOfFreeData} is
unique up to the addition of elements in $\text{Im}(\mathring{L})$.
\end{proposition}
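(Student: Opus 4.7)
The plan relies on the Gasqui-Goldschmidt complex together with Proposition \ref{Prop:SurjectivityOfH}. The central observation is that, for a conformally flat background, the linearised Cotton operator $\mathring{H}$ maps $\mathscr{S}^2_0(\mathcal{S};\mathring{\bmh})$ into $\mathscr{S}_{TT}(\mathcal{S};\mathring{\bmh})$---the image being tracefree by construction and divergence-free by the linearisation of the third Bianchi identity derived in the paragraph preceding the statement of the complex---so that $\mathring{H}(\bmeta)_{ij}$ automatically constitutes admissible free data in the sense of Theorem \ref{MainTheorem}.

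For part (i), I would define $\tilde{\mathcal{U}}$ as the preimage under $\mathring{H}$ of the TT-factor of the open neighbourhood $\mathcal{U}$, intersected with a ball of sufficiently small radius in $\mathcal{B}_\bmeta$. Since $\mathring{H}$ is continuous as a linear differential operator, $\tilde{\mathcal{U}}$ is open. For any $\bmeta, \bar{\bmeta} \in \tilde{\mathcal{U}}$, setting $T_{ij} = \mathring{H}(\bmeta)_{ij}$ and $\bar{T}_{ij} = \mathring{H}(\bar{\bmeta})_{ij}$ then yields admissible data, and a direct application of Theorem \ref{MainTheorem} produces the required solution.

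For part (ii), given admissible free data $T_{ij}, \bar{T}_{ij} \in \mathcal{U}$, Proposition \ref{Prop:SurjectivityOfH} directly furnishes preimages under $\mathring{H}$. To ensure that such preimages may be chosen to lie in $\tilde{\mathcal{U}}$ and to depend continuously on the data, I would instead consider the restriction
\[
\mathring{H}\big\vert_{\mathscr{S}_{TT}}: H^{s+2}(\mathscr{S}_{TT}(\mathcal{S};\mathring{\bmh})) \longrightarrow H^{s-1}(\mathscr{S}_{TT}(\mathcal{S};\mathring{\bmh})).
\]
This restriction is injective: any $\bmeta \in \mathscr{S}_{TT}\cap \text{ker}(\mathring{H})$ would lie in $\mathscr{S}_{TT}\cap \mathring{L}(\Lambda^1(\mathcal{S}))$---since conformal rigidity combined with the Poincar\'e duality identification recalled at the start of Section \ref{Section:ParametrisingFreeData} forces $\text{ker}(\mathring{H}) = \mathring{L}(\Lambda^1(\mathcal{S}))$---but this intersection is trivial by the $L^2$-orthogonality of the York decomposition. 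Surjectivity is inherited from Proposition \ref{Prop:SurjectivityOfH}. The open mapping theorem then implies that the restriction is an isomorphism of Banach spaces; preimages may therefore be chosen continuously in $T_{ij}, \bar{T}_{ij}$, and shrinking $\mathcal{U}$ if necessary places them in $\tilde{\mathcal{U}}$. The uniqueness statement follows immediately from the identification $\text{ker}(\mathring{H}) = \mathring{L}(\Lambda^1(\mathcal{S}))$ established above.

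The main obstacle, in my view, is recognising that conformal rigidity upgrades the inclusion $\text{Im}(\mathring{L}) \subseteq \text{ker}(\mathring{H})$ of the Gasqui-Goldschmidt complex to an equality; once this identification and the surjectivity from Proposition \ref{Prop:SurjectivityOfH} are in hand, the remainder of the argument reduces to a standard application of the open mapping theorem to the appropriate continuous linear operator between Banach spaces.
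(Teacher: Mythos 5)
Your proposal is correct and follows essentially the same route as the paper: define $\tilde{\mathcal{U}}$ as a preimage under $\mathring{H}$ (open by continuity), apply Theorem \ref{MainTheorem} for part (i), invoke Proposition \ref{Prop:SurjectivityOfH} for part (ii), and use conformal rigidity---i.e. the identification $\ker\mathring{H}=\mathring{L}(\Lambda^1(\mathcal{S}))$---for the uniqueness clause. Your additional step of restricting $\mathring{H}$ to TT tensors and applying the open mapping theorem yields continuous dependence of the preimage on the data, which is not required by the statement (the paper simply takes $\tilde{\mathcal{U}}=\mathring{H}^{-1}(\mathcal{U}\cap\text{Im}\,\mathring{H})$, making part (ii) immediate from surjectivity without shrinking $\mathcal{U}$), but it is a correct and harmless refinement.
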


\begin{proof}
Take $\tilde{\mathcal{U}}\equiv \mathring{H}^{-1}(\mathcal{U}\cap\text{Im}(\mathring{H}))$. The map
\[
\mathring{H}: \mathcal{B}_\bmeta\rightarrow\mathcal{B}_T 
\]
is continuous, so $\tilde{\mathcal{U}}$ is open in
$\mathcal{B}_T $. Applying Theorem \ref{MainTheorem} with free data
\eqref{ParametrisationOfFreeData} establishes \emph{(i)}. By assumption of conformal rigidity 
and using Proposition \ref{Prop:SurjectivityOfH} it follows that
\[ \mathring{H}:H^{s+2}(\mathscr{S}^2_0(\mathcal{S};\mathring{\bmh}))\rightarrow H^{s-1}(\mathscr{S}_{TT}(\mathcal{S};\mathring{\bmh}))\]
is surjective, so $\mathring{H}(\tilde{\mathcal{U}})=\mathcal{U}$, establishing \emph{(ii)}. Uniqueness
(up to addition of elements in $\text{Im}(\mathring{L})$) follows immediately from the assumption of conformal rigidity.
\end{proof}

\section{Conclusions and Outlook}

The Friedrich-Butscher method originally applied in \cite{But06,But07}
to the asymptotically flat case, was implemented here to the case of
hyperbolic background initial data. This method provides a promising alternative
to the standard conformal method for the construction of initial data;
in particular, it allows for the possibility of generating solutions
to the constraint equations that are tailored in the sense of having
certain components of the Weyl curvature (restricted to $\mathcal{S}$)
prescribed from the outset.

Work is currently under progress to extend the present results to a
broader class of background initial data, in addition to extending the
analysis to the full conformal constraint equations. It would be
interesting to see whether the method can be implemented numerically
through an iterative convergence scheme.

\section*{Acknowledgements}
The authors thank the hospitality of the International Erwin
Schr\"odinger Institute for Mathematics and Physics where a big part
of this work was carried out as part of the research programme
\emph{Geometry and Relativity} during
July-September 2017.



\end{document}